\newtheorem{theorem}{Theorem}
\newtheorem{lemma}[theorem]{Lemma}
\newtheorem{definition}[theorem]{Definition}
\newtheorem{proposition}[theorem]{Proposition}
\newtheorem{conjecture}[theorem]{Conjecture}
\numberwithin{theorem}{section}
\newtheorem*{lem:maingeneral}{Lemma \ref{LemmaMainGeneral}}
\newtheorem*{thm:prggeneral}{Theorem \ref{TheoremPRGgeneral}}
\newtheorem*{lem:brry}{Lemma \ref{LemmaBRRY}}
\newcommand{\ex}[2]{\underset{#1}{\mathbb{E}}\left[ #2 \right]}
\newcommand{\pr}[2]{\underset{#1}{\mathbb{P}}\left[ #2 \right]}
\newcommand{\norm}[1]{\left|\left| #1 \right|\right|}
\newcommand{\frob}[1]{\norm{#1}_\text{Fr}}
\newcommand{\omitted}[1]{}
\newcommand{\RL}{\textrm{RL}}
\newcommand{\Lspace}{\textrm{L}}
\newcommand{\poly}{\mathrm{poly}}
\newcommand{\eps}{\varepsilon}
\newcommand{\zo}{\{0,1\}}
\newcommand{\tO}{\tilde{O}}
\newcommand{\Select}{\mathrm{Select}}
\newcommand{\Z}{\mathbb{Z}}
\newcommand{\N}{\mathbb{N}}
\title{Pseudorandomness for Regular Branching Programs\\ via Fourier Analysis}
\author{Omer Reingold\thanks{Microsoft Research Silicon Valley, 1065 La Avenida, Mountain View, CA.	}\\\small\texttt{Omer.Reingold@microsoft.com} \and  Thomas Steinke\thanks{School of Engineering and Applied Sciences, Harvard University, 33 Oxford Street, Cambridge MA. Work done in part while at Stanford University. Supported by NSF grant CCF-1116616 and the Lord Rutherford Memorial Research Fellowship.}\\\small\texttt{tsteinke@seas.harvard.edu} \and Salil Vadhan\thanks{School of Engineering and Applied Sciences, Harvard University, 33 Oxford Street, Cambridge MA.  Work
done in part when on leave as a Visiting Researcher at Microsoft Research Silicon Valley and a Visiting Scholar at Stanford University.  Supported in part by NSF grant CCF-1116616 and US-Israel BSF grant 2010196.} 
\\\small\texttt{salil@seas.harvard.edu}}
\date{19 June 2013}
\begin{document}

\begin{titlepage}

\maketitle
\thispagestyle{empty}

\begin{abstract}
We present an explicit pseudorandom generator for oblivious, read-once, permutation branching programs of constant width that can read their input bits in any order. The seed length is $O(\log^2 n)$, where $n$ is the length of the branching program. The previous best seed length known for this model was $n^{1/2+o(1)}$, which follows as a special
case of a generator due to Impagliazzo, Meka, and Zuckerman (FOCS 2012) (which gives a seed length of
$s^{1/2+o(1)}$ for arbitrary branching programs of size $s$). Our techniques also give seed length $n^{1/2+o(1)}$ for general oblivious, read-once branching programs of width $2^{n^{o(1)}}$, which is incomparable to the results of Impagliazzo et al.

Our pseudorandom generator is similar to the one used by Gopalan et al.~(FOCS 2012) for read-once CNFs, but the analysis is quite different; ours is based on Fourier analysis of branching programs. In particular, we show that an oblivious, read-once, \emph{regular} branching program of width $w$ has Fourier mass at most $(2w^2)^{k}$ at level $k$, independent of the length of the program.  
\end{abstract}
\end{titlepage}

\section{Introduction} \label{SectionIntroduction}

A major open problem in the theory of pseudorandomness is to construct an ``optimal'' pseudorandom generator for space-bounded computation.  That is,
we want an
explicit pseudorandom generator that stretches a uniformly random seed
of length $O(\log n)$ to $n$ bits that cannot be distinguished from uniform by any $O(\log n)$-space algorithm (which receives the pseudorandom bits one at a time, in a streaming fashion, and may be nonuniform).

Such a generator would imply that every randomized algorithm can be derandomized with only a constant-factor increase in space ($\RL=\Lspace$), and would also have a variety of other applications, such as
in streaming
algorithms~\cite{Indyk},
deterministic dimension reduction and SDP rounding~\cite{Sivakumar}, 
 hashing~\cite{CReingoldSW}, hardness
amplification~\cite{HealyVaVi}, almost $k$-wise independent
permutations~\cite{KaplanNaRe}, and cryptographic pseudorandom generator constructions~\cite{HaitnerHaRe06}.

Unfortunately, for fooling general logspace algorithms, there has been essentially no improvement since the classic work of Nisan~\cite{Nisan}, which provided a pseudorandom generator of seed length $O(\log ^2 n)$.  Instead, a variety of works have improved the seed
length for various restricted classes of logspace algorithms, such as
algorithms that use $n^{o(1)}$ random bits~\cite{NZ,RR}, combinatorial rectangles~\cite{EvenGLNV,LinialLSZ,ArmoniSWZ,Lu}
random walks on graphs~\cite{Reingold,RTV}, branching programs of width 2 or 3~\cite{SZ,BogdanovDvVeYe09,SimaZak}, and
regular or permutation branching programs (of bounded width)~\cite{BRRY,BrodyVerbin,KNP,De,Steinke12}.

The vast majority of these works are based on Nisan's generator or its variants by Impagliazzo, Nisan, and Wigderson~\cite{INW} and
Nisan and Zuckerman~\cite{NZ}, and show how the analysis (and hence the final parameters) of these generators can be improved for logspace algorithms that satisfy the additional restrictions.  All three of these generators are based on recursive use of the following principle: if we consider two consecutive time intervals $I_1$, $I_2$ in a space $s$ computation and use some randomness $r$ to generate the pseudorandom bits fed to the algorithm during interval $I_1$, then at the start of $I_2$, the algorithm will `remember' at most $s$ bits of information about $r$.  So we can use a randomness extractor to extract roughly $|r|-s$ almost uniform bits from $r$ (while investing only a small additional amount of randomness for the extraction).  This paradigm seems unlikely to yield pseudorandom generators for general logspace computations that have a seed length of $\log^{1.99} n$ (see \cite{BrodyVerbin}). 

Thus, there is a real need for a different approach to constructing pseudorandom generators for space-bounded computation.  One new approach has been suggested in the recent work of Gopalan et al.~\cite{GMRTV}, which constructed improved pseudorandom generators for read-once CNF formulas and combinatorial rectangles, and hitting set generators for width 3 branching programs.  Their basic generator (e.g. for read-once CNF formulas) works as follows: Instead of considering a fixed partition of the bits into intervals, they pseudorandomly partition the bits into two groups, assign the bits in one group using a small-bias generator~\cite{NaorNa93}, and then recursively generate bits for the second group.  While it would not work to assign {\em all} the bits using a single sample from a small-bias generator, it turns out that generating a pseudorandom partial assignment is a significantly easier task.

An added feature of the Gopalan et al.~generator is that its pseudorandomness properties are independent of the order in which the output bits are read by a potential distinguisher.  In contrast, Nisan's generator and its variants depend heavily on the ordering of bits (the intervals $I_1$ and $I_2$ above cannot be interleaved), and in fact it is known that a particular instantiation
of Nisan's generator fails to be pseudorandom if the (space-bounded) distinguisher can read the bits in a different order~\cite[Corollary 3.18]{Tzur}.  Recent works \cite{BPW,IMZ} have constructed nontrivial pseudorandom generators for space-bounded algorithms that can read their bits in any order, but the seed length achieved is larger than $\sqrt{n}$.

In light of the above, a natural question is whether the approach of Gopalan et al.~can be extended to a wider class of space-bounded algorithms.  We make progress on this question by using the same approach to construct a pseudorandom generator with seed length $O(\log^2 n)$ for constant-width, read-once, oblivious permutation branching programs that can read their bits in any order.  In analysing our generator, we develop new Fourier-analytic tools for proving pseudorandomness against space-bounded algorithms.

\subsection{Models of Space-Bounded Computation}

A (layered) \textbf{branching program} $B$ is a nonuniform model of space-bounded computation.  The program maintains a \textbf{state} from the set $[w]=\{1,\ldots,w\}$ and, at each time step $i$, reads one bit of its input $x\in \zo^n$ and updates its state according to a transition function $B_i : \zo\times [w]\rightarrow [w]$.  The parameter $w$ is called the \textbf{width} of the program, and corresponds to a space bound of $\log w$ bits.
We allow the transition function $B_i$ to be different at each time step $i$.  We consider several restricted forms of branching programs:
\begin{itemize}
\item \textbf{Read-once branching programs} read each input bit at most once.
\item \textbf{Oblivious branching programs} choose which input bit to read depending only on the time step $i$, and not on the current state
\item \textbf{Ordered branching programs} (a.k.a. streaming algorithms) always read input bit $i$ in time step $i$ (hence are necessarily both read-once and oblivious).
\end{itemize}
To derandomize randomized space-bounded computations (e.g. prove $\RL=\Lspace$), it suffices to construct pseudorandom generators that fool ordered branching programs of polynomial width ( $w=\poly(n)$), and hence this is the model addressed by most previous constructions (including Nisan's generator).  However, the more general models of oblivious and read-once branching programs are also natural to study, and, as discussed above, can spark the development of new techniques for reasoning about pseudorandomness.

As mentioned earlier, Nisan's pseudorandom generator~\cite{Nisan} achieves $O(\log^2 n)$ seed length for {\em ordered} branching programs of polynomial width. It is known how to achieve $O(\log n)$ seed length for ordered branching programs width 2~\cite{BogdanovDvVeYe09}, and for width 3, it is only known how to construct ``hitting-set generators'' (a weaker form of pseudorandom generators) with seed length
$O(\log n)$~\cite{SimaZak,GMRTV}.  (The seed length is $\tO(\log n)$ if we want the error of the hitting set generator to be subconstant.)  For pseudorandom generators for width $w\geq 3$ and hitting-set generators for width $w\geq 4$, there is no known construction with seed length $o(\log^2 n)$.

The study of pseudorandomness against non-ordered branching programs started more recently.
Tzur~\cite{Tzur} showed that there are oblivious, read-once, constant-width branching programs that can distinguish the output of Nisan's generator from uniform.  Bogdanov, Papakonstantinou, and Wan~\cite{BPW} exhibited a pseudorandom generator with seed length $(1-\Omega(1))\cdot n$ for oblivious read-once branching programs of width $w$ for $w=2^{\Omega(n)}$.   Impagliazzo, Meka, and Zuckerman~\cite{IMZ} gave a pseudorandom generator with seed length
$s^{1/2+o(1)}$ for arbitrary branching programs of size $s$; note that $s=O(nw)$ for a read-once branching program of width $w$ and length $n$.

We consider two further restrictions on branching programs:
\begin{itemize}
\item \textbf{Regular branching programs} are oblivious branching programs with the property that, if the distribution on states in any layer is uniformly random and the input bit read by the program at that layer is uniformly random, then the resulting distribution on states in the next layer is uniformly random.  This is equivalent to requiring that the bipartite graph associated with each layer of the program, where we have edges from each state $u\in [w]$ in layer $i$ to the possible next-states $u_0,u_1\in [w]$ in layer $i+1$ (if the input bit is $b$, the state goes to $u_b$), is a regular graph.

\item \textbf{Permutation branching programs} are a further restriction, where we require that for each setting of the input string, the mappings between layers are permutations.  This is equivalent to saying that (regular) bipartite graphs corresponding to each layer are decomposed into two perfect matchings, one corresponding to each value of the current input bit being read.
\end{itemize}

The fact that pseudorandomness for permutation branching programs might be easier than for general branching programs was suggested by the proof that Undirected S-T Connectivity is in Logspace~\cite{Reingold} and its follow-ups~\cite{RTV,RV}.  Specifically, the latter works construct ``pseudorandom walk generators'' for ``consistently labelled'' graphs.  Interpreted for permutation branching programs, these results ensure that if an ordered permutation branching program has the property that every layer has a nonnegligible amount of ``mixing'' --- meaning that the distribution on states becomes closer to uniform, on a truly random input --- then the overall program will also have mixing when run on the output of the pseudorandom generator (albeit at a slower rate).  The generator has a seed length of $O(\log n)$ even for ordered permutation branching programs of width $\poly(n)$.
Reingold, Trevisan, and Vadhan~\cite{RTV} also show that if a generator with similar properties could be constructed for (ordered) regular branching programs of polynomial width, then this would suffice to prove $\RL=\Lspace$.  Thus, in the case of polynomial width, regularity is not a significant constraint.

Recently, there has been substantial progress on constructing pseudorandom generators for ordered regular and permutation branching programs of constant width.
Braverman, Rao, Raz, and Yehudayoff~\cite{BRRY} and Brody and Verbin~\cite{BrodyVerbin} gave pseudorandom generators with seed length $\tO(\log n)$ for ordered regular branching programs of constant width.  Kouck\'{y}, Nimbhorkar and Pudl\'{a}k~\cite{KNP} showed that the seed length could be further
improved to $O(\log n)$ for ordered, permutation branching programs of constant width; see \cite{De,Steinke12} for simplifications and improvements.

All of these generators for ordered regular and permutation branching programs are based on refined analyses of the pseudorandom generator construction of
Impagliazzo, Nisan, and Wigderson~\cite{INW}.

\subsection{Our Results and Techniques}

Our main result is a pseudorandom generator for read-once, oblivious, (unordered) permutation branching programs of constant width:

\begin{theorem}[Main Result] \label{thm:main-intro}
For every constant $w$, there is an explicit pseudorandom generator $G : \zo^{O(\log^2 n)} \rightarrow \zo^n$ fooling oblivious, read-once (but unordered), permutation branching programs of width $w$ and length $n$.
\end{theorem}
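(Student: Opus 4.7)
The plan is to follow the Gopalan et al.~\cite{GMRTV} partition-and-recurse framework, with the analysis driven by the new Fourier mass bound flagged in the abstract. First I would establish the key Fourier-analytic lemma: for any oblivious, read-once, regular branching program $B$ of width $w$ and length $n$, and for every $k\ge 1$, the $\ell_1$ Fourier mass of its acceptance function at level $k$ satisfies $\sum_{|S|=k} |\hat B(S)| \le (2w^2)^k$, independent of $n$. The natural proof is by induction on $n$: decompose the program at the last layer, write the acceptance function as a linear combination of width-$w$ programs on $n-1$ bits multiplied by a single-bit linear form, and track how Fourier mass at level $k$ is produced either by combining level-$k$ mass from a shorter program or by tacking the final coordinate onto level-$(k-1)$ mass. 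Regularity (equivalently, doubly-stochastic transition matrices in expectation) is what prevents the per-layer blow-up from depending on~$n$.

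Next I would define the generator recursively. Set $d = O(\log n)$. For $G_d$, draw a fresh seed of length $O(\log n)$ used to (i) sample a pseudorandom partition $T\subseteq[n]$ from an almost $k$-wise independent distribution with $k = O(\log n)$, so that each coordinate is in $T$ with probability $1/2$ and every small subset is hit with nearly the right probability, and (ii) sample a small-bias string $y\in\zo^T$ of bias $\eps = 1/\poly(n)$ to assign the coordinates in~$T$. Recursively invoke $G_{d-1}$ to assign the coordinates in $[n]\setminus T$; the base case handles $O(1)$ bits trivially. The total seed length is $d\cdot O(\log n) = O(\log^2 n)$, matching the theorem.

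For the analysis, I would fix a target permutation branching program $B$ and expand its acceptance function in the Fourier basis: $\Exp[B(G(U))] - \Exp[B(U)] = \sum_{S\neq\emptyset} \hat B(S)\,\Exp[\chi_S(G(U))]$. Conditioning on the partition $T$, the character factors as $\chi_S(x)=\chi_{S\cap T}(y)\cdot\chi_{S\setminus T}(x')$ where $x'$ is produced by $G_{d-1}$ on $[n]\setminus T$. The small-bias sample on $T$ gives $|\Exp_y[\chi_{S\cap T}(y)]|\le\eps$ whenever $S\cap T\neq\emptyset$, which (using the pseudorandom partition) occurs with all but tiny probability for every nonempty $S$ at level $k\ge 1$. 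Summing over $|S|=k$ and plugging in the Fourier mass bound $(2w^2)^k$, the contribution of level-$k$ coefficients to the error is geometric in $k$ for constant $w$ provided $\eps\ll 1/(2w^2)$, so the entire sum is bounded by $\eps\cdot\poly(w)$. To close the recursion I must also argue that the residual program on $[n]\setminus T$ is itself a restriction that still admits a comparable Fourier-mass bound, allowing the inductive hypothesis to apply to $G_{d-1}$; this is where I expect the main technical difficulty to lie, because a permutation branching program restricted by a partial assignment need not remain a permutation program, but it does remain regular under the natural extension, and the level-$k$ Fourier mass of any restriction of $B$ can be shown to be controlled by the same $(2w^2)^k$ bound by averaging.

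The main obstacle, then, is the interaction between the partition step and the Fourier bound on restrictions: one needs that a small-bias partial assignment on a pseudorandom set of coordinates leaves a residual acceptance function whose Fourier mass at every level stays bounded by essentially $(2w^2)^k$, uniformly over the random choices so far, so that the error telescopes across the $d=O(\log n)$ levels of recursion and remains at most $\poly(n)\cdot\eps$. Once this ``restriction-robust'' version of the Fourier mass bound is in hand, setting $\eps=1/\poly(n)$ with a sufficiently large polynomial yields error $1/\poly(n)$ against any width-$w$ read-once permutation branching program, regardless of the order in which the bits are read, proving the theorem.
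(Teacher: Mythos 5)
Your overall framework is the same as the paper's (a GMRTV-style pseudorandom partition, a small-bias assignment on the selected coordinates, and recursion on the rest), and you state the key level-$k$ bound $(2w^2)^k$ correctly, but the quantitative heart of the argument is missing or wrong in three places. First, your error analysis sums $\eps\cdot(2w^2)^k$ over all levels $k$, which is $2^{\Theta(n)}$: the level-$k$ bound is useless for large $k$ unless the restriction damps high levels, and your parameter choice defeats this. The paper takes the marginal probability $p$ of placing a coordinate in $T$ to be $\Theta(1/w^2)$ precisely so that $2w^2p<1$ and $\sum_k (2w^2p)^k$ converges; with your $p=1/2$ even the \emph{expected} Fourier mass of the restricted program $B|_T[x]=\ex{U}{B[\Select(T,x,U)]}$ is not controlled. (Also, coefficients with $S\cap T=\emptyset$ are not rare for small $|S|$; they must be charged to the recursion, not to the small-bias step, so the error has to be split as the paper does rather than term-by-term over characters.) Second, even with the right $p$, an almost $O(\log n)$-wise independent $T$ only controls levels up to $O(\log n)$; the paper's Main Lemma needs an idea absent from your sketch: bound levels $k\le k'<2k$ simultaneously for \emph{all} subprograms $B_{i\cdots j}$, and then decompose every higher-order coefficient of the restricted program as a product of a low-order coefficient of a prefix subprogram with a lower-order coefficient of a suffix subprogram, bootstrapping the bound to all levels and giving total mass $\poly(n)$ with high probability, after which bias $1/\poly(n)$ suffices. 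Third, your proof sketch of the level-$k$ bound itself (induction on $n$, peeling the last layer) does not work as stated: with spectral norms of single-layer coefficients bounded by $1$ it only yields an $n$-dependent bound of order $\binom{n}{k}$. The paper instead inducts on $k$ and crucially invokes the Braverman--Rao--Raz--Yehudayoff lemma that $\sum_{i}\norm{\widehat{B_{i\cdots n}}[1\circ 0^{n-i}]}_2\le 2w^2$, i.e.\ the total influence aggregated over all layers is bounded independently of $n$; this aggregated bound, not a per-layer one, is where regularity enters.

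You also have the crucial closure property exactly backwards. Fixing the bits in $t$ of a permutation branching program and collapsing those layers yields again a read-once, oblivious \emph{permutation} branching program of the same width (each fixed layer becomes a single permutation), whereas \emph{regular} programs are not closed under such restrictions. The paper's recursion relies on precisely this: the residual program $\overline{B}_{t,x}[y]=B[\Select(t,x,y)]$ is a permutation ROBP, so the inductive pseudorandomness guarantee of the next recursion level applies to it, and this is the sole reason the theorem is stated for permutation rather than general regular programs. Your claim that the restriction ``need not remain a permutation program, but it does remain regular'' is false in both directions, and if taken at face value it would break the recursion at the second level, since regularity alone would not survive further restrictions.
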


To be precise, the seed length and space complexity of the pseudorandom generator is $$O(w^2 \log(w) \log(n) \log(nw/\varepsilon) + w^4 \log^2(w/\varepsilon))$$ for oblivious, read-once, permutation branching programs of length $n$ and width $w$, where $\varepsilon$ is the error.

Previously, it was only known how to achieve a seed length of $n^{1/2+o(1)}$ for this model, as follows from the aforementioned results of Impagliazzo, Meka, and Zuckerman~\cite{IMZ} (which actually holds for arbitrary branching programs).

Our techniques also achieve seed length $n^{1/2+o(1)}$ for arbitrary read-once, oblivious branching programs of width up to $2^{n^{o(1)}}$:

\begin{theorem} \label{thm:second-intro}
There is an explicit pseudorandom generator $G : \zo^{\tO(\sqrt{n}\log w)} \rightarrow \zo^n$ fooling oblivious, read-once (but unordered) branching programs of width $w$ and length $n$.
\end{theorem}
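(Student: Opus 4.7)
My strategy is to reuse the recursive partition-and-small-bias generator and the overall Fourier-analytic framework developed for Theorem~\ref{thm:main-intro}, replacing the regular-branching-program Fourier mass bound $(2w^2)^k$ with a weaker bound that applies to arbitrary oblivious, read-once branching programs. Because the new bound will necessarily grow with the length $n$, the seed length degrades from $O(\log^2 n)$ to $\tilde{O}(\sqrt{n}\log w)$, and the analysis of the generator then goes through almost verbatim.

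The core technical step is to establish a Fourier mass estimate: for every oblivious, read-once, width-$w$, length-$n$ branching program computing $f:\zo^n\to\{0,1\}$, the level-$k$ mass $\sum_{|S|=k}|\hat{f}(S)|$ is bounded by an expression of the form $(Cn\log w)^{k}$ (or a similar expression polynomial in $n$ and $w$). Writing the transition at layer $i$ as $M_i(x_i)=A_i+(-1)^{x_i}B_i$, one expands $\hat{f}(S)$ as a length-$n$ matrix product alternating between $A_i$'s at coordinates outside $S$ and $B_i$'s at coordinates in $S$. In the regular case the mean-zero action of $A_i$ is a contraction, so arbitrarily long $A$-products are bounded uniformly in $n$; without regularity these products can grow, and the length $n$ enters the bound. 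A telescoping/hybrid argument that replaces long $A$-runs by an averaging operator and charges the residual to the end-points should yield the claimed estimate.

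Plugging this bound into the analysis of the partition-and-small-bias generator proceeds in parallel with Theorem~\ref{thm:main-intro}: the error contributed at Fourier level $k$ is the mass at that level times the bias $\eps'$ of the small-bias assignment on one side of the partition, while the tail above a truncation $K$ is controlled by a concentration argument on the pseudorandom partition (using that a pseudorandom half keeps roughly $k/2$ of the coordinates of any fixed size-$k$ set). Choosing $K=\tilde{\Theta}(\sqrt{n})$ balances the two sources of error: to make $(Cn\log w)^{K}\cdot\eps'$ negligible requires $\eps'=(nw)^{-\tilde{\Theta}(\sqrt{n})}$, which is sampleable from an $\eps'$-biased distribution with seed length $\tilde{O}(\sqrt{n}\log w)$, and the $O(\log n)$ levels of recursion contribute only a polylogarithmic overhead.

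The hard part is the Fourier mass bound without regularity. The doubly-stochastic property was essential in the regular case because it made $A_i$ a contraction on mean-zero vectors, preventing accumulation of mass across layers. Here one must argue that any such accumulation is absorbed into the $n^{k}$ factor and that the combinatorial interaction between $B$-insertions and long runs of $A$'s can be tracked cleanly. Getting the base of the exponent to be polynomial in $n$ and $\log w$ (rather than, say, $n\cdot w$) is the delicate point, since it is what separates the advertised $\tilde{O}(\sqrt{n}\log w)$ seed length from a weaker $\tilde{O}(\sqrt{n}\cdot w)$ bound, and I expect most of the work lies in shaving this factor via careful bookkeeping of which matrix norms arise in the telescoping.
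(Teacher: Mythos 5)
There is a genuine gap, on two counts. First, the Fourier mass bound you defer as ``the hard part'' is both unproven and aimed at the wrong target. The paper does not need any telescoping/hybrid analysis of the matrix product for general programs: for an arbitrary read-once, oblivious, width-$w$ program, Parseval gives $\sum_{|s|=k}\norm{\widehat{B}[s]}_2^2\leq \ex{U}{\frob{B[U]}^2}=w$ (each $B[x]$ is a $0/1$ matrix with one $1$ per row), and Cauchy--Schwarz then yields $L_2^k(B)\leq\sqrt{w\binom{n}{k}}\leq\sqrt{w}\,n^{k/2}$ in two lines. Your proposed bound $(Cn\log w)^k$ has base $n\log w$ instead of $\sqrt{n}$, and that base is exactly what determines the final seed length: to make a pseudorandom restriction kill the level-$k$ mass one needs $p\cdot(\text{base})\leq 1/2$, so base $\sqrt{n}$ forces $p=\Theta(1/\sqrt{n})$ and recursion depth $\tO(\sqrt{n})$, whereas base $n\log w$ would force $p=\Theta(1/(n\log w))$ and a seed worse than $n$. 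So even if your telescoping argument succeeded, it would not give Theorem \ref{thm:second-intro}.

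Second, your parameter balancing misattributes where the $\sqrt{n}$ comes from, and the step that would have to carry the argument fails. You keep $O(\log n)$ recursion levels (constant restriction density) and push the cost into a bias $\eps'=(nw)^{-\tilde{\Theta}(\sqrt{n})}$ together with a truncation at level $K=\tilde{\Theta}(\sqrt{n})$, handling the tail by noting that a pseudorandom half of the coordinates retains about half of any fixed size-$k$ set. But the small-bias error on the restricted program is $L_2(B|_T)\cdot\eps'$ summed over \emph{all} levels, and with constant $p$ the expected restricted mass at levels $k\geq K$ behaves like $\sum_{k\geq K}(p\sqrt{n})^k\sqrt{w}$ (or worse under your bound), which diverges; a per-set survival probability says nothing about the total mass over the exponentially many high-weight coefficients. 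The paper instead takes $p\leq 1/\sqrt{4n}$, controls levels up to $2k=O(\log(nw/\eps))$ for all subprograms using $2k$-wise independence, and then bounds all higher levels by decomposing each high-order coefficient into products of low-order coefficients of subprograms (Lemma \ref{LemmaWellOrder}); the bias needed is only $\mu=\eps/\sqrt{wn^k}=2^{-O(\log n\log(nw/\eps))}$, i.e.\ quasipolynomially small, and the $\tO(\sqrt{n})$ in the seed length comes from the $O(\log(n)/p)=\tO(\sqrt{n})$ levels of recursion, not from an exponentially small bias. Without the subprogram decomposition step (or some substitute controlling the high-order mass of $B|_T$), your recursion does not close.
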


This result is incomparable to that of Impagliazzo et al.~\cite{IMZ}. Their seed length depends polynomially on the width $w$, so require width $w=n^{o(1)}$ to achieve seed length $n^{1/2+o(1)}$. On the other hand, our result is restricted to \emph{read-once, oblivious} branching programs.

Our construction of the generator in Theorem \ref{thm:main-intro} is essentially the same as the generator of Gopalan et al.~\cite{GMRTV} for read-once CNF formulas, but with a new
analysis (and different setting of parameters) for read-once, oblivious, permutation branching programs. The generator works by selecting a subset $T\subset [n]$ of output coordinates in a pseudorandom way, assigning
the bits in $T$ using another pseudorandom distribution $X$, and then recursively assigning the bits outside $T$.   We generate $T$ using an almost $O(\log n)$-wise independent distribution, including each coordinate $i\in T$ with a constant probability $p_w$ depending only on the width $w$.  We assign the bits in
$T$ using a small-bias distribution $X$ on $\zo^n$~\cite{NaorNa93}; such a generator has the property that for every nonempty subset $S\subset [n]$, the parity $\oplus_{i\in S} X_i$ of bits in $S$ has bias at most $\eps$.  Generating $T$ requires $O(\log n)$ random bits, generating $X$ requires $O(\log n)$ bits (even for $\eps=1/\poly(n)$), and we need $O(\log n)$ levels of recursion to assign all the bits.  This gives us our $O(\log^2 n)$ seed length.

Let $B : \zo^n\rightarrow \zo$ be a function computed by an oblivious, read-once, permutation branching program of width $w$.  Following \cite{GMRTV}, to show that our pseudorandom generator fools $B$, it suffices to show that the partial assignment generated in a single level of recursion approximately preserves the acceptance probability of $B$ (on average).  To make this precise, we need a bit of notation.  For a set $t\subset [n]$, a string $x\in \zo^n$, and $y\in \zo^{n-|t|}$,
define $\Select(t,x,y)\in \zo^n$ as follows:
$$\Select(t,x,y)_i = \begin{cases}
x_i & \text{if $i\in t$}\\
y_{|\{j\leq i : j\notin t\}|}
& \text{if $i\notin t$}
\end{cases}$$
Once we choose a set $t\leftarrow T$ and an assignment $x\leftarrow X$ to the variables in $t$, the residual acceptance probability of $B$ is $\pr{U}{B(\Select(t,x,U))=1}$, where $U$ is the uniform distribution on $\{0,1\}^n$.  So, the average acceptance probability over $t\leftarrow T$ and $x\leftarrow X$ is $\pr{T,X,U}{B(\Select(T,X,U))=1}$.  We would like this to be close to the acceptance probability under uniformly random bits, namely $\pr{U}{B(U)=1}=\pr{T,U',U}{B(\Select(T,U',U)=1}$.  That is, we would like our small-bias distribution $X$ to fool the function $B'(x) := \ex{T,U}{B(\Select(T,x,U))}$.  The key insight in \cite{GMRTV}
is that $B'$ can be a significantly easier function to fool than $B$, and even than fixed restrictions of $B$ (like
$B(\Select(t,\cdot,y))$ for fixed $t$ and $y$).  We show that the same phenomenon holds for oblivious, read-once, {\em regular} branching programs. (The reason that the analysis of our overall pseudorandom generator applies only for 
{\em permutation} branching programs is that regularity is not preserved under restriction (as needed for the recursion), whereas the permutation property is.)

To show that a small-bias space fools $B'(x)$, it suffices to show that the \textbf{Fourier mass} of $B'$, namely
$\sum_{s\in \zo^n,s\neq 0} |\widehat{B'}[s]|$, is bounded by $\poly(n)$.    (Here
$\widehat{B'}[s] = \ex{U}{B'[U] \cdot (-1)^{s\cdot U}}$ is the standard Fourier transform over $\Z_2^n$. So $\widehat{B'}[s]$ measures the correlation of $B'$ with the parity function defined by $s$.)  We show that this is indeed the case (for most choices of the set $t\leftarrow T$):

\begin{theorem}[Main Lemma] \label{thm:mainlemma-intro}
For every constant $w$, there are constants $p_w>0$ and $d_w\in \N$ such that the following holds. Let
$B : \zo^n\rightarrow \zo$ be computed by an oblivious, read-once, regular branching program of width $w$ and length $n \geq d_w$.  Let
$T\subset [n]$ be a randomly chosen set so that every coordinate $i\in [n]$ is placed in $T$ with probability $p_w$ and these choices
are $n^{-d_w}$-almost $(d_w\log n)$-wise independent.
Then with high probability over $t\leftarrow T$
$B'(x) = \ex{U}{B(\Select(t,x,U))}$ has Fourier mass at most $n^{d_w}$.
\end{theorem}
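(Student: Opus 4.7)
The plan is to bound the expected Fourier mass of $B'$ under the random set $T$ and apply Markov's inequality. The crucial starting observation is that the Fourier coefficients of $B'$ pass through the restriction: $\widehat{B'}[S] = \widehat{B}[S]$ for $S \subseteq t$, and $\widehat{B'}[S] = 0$ for any $S$ containing a coordinate outside $t$ (since $B'$ depends only on coordinates in $t$). Hence the Fourier mass of $B'$ equals $\sum_{\emptyset \neq S \subseteq t}|\widehat{B}[S]|$, and the goal is to bound this by $n^{d_w}$ with high probability over $t$.

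The first key ingredient is the paper's main structural theorem (proved separately, and the technical heart of the paper): for any oblivious, read-once, regular branching program of width $w$, the level-$k$ Fourier mass satisfies $\sum_{|S|=k}|\widehat{B}[S]| \leq (2w^2)^k$, independent of the length. If $T$ were truly independent with marginal $p_w$, linearity of expectation would give
\[
\mathbb{E}_T\!\left[\sum_{\emptyset \neq S \subseteq T} |\widehat{B}[S]|\right] = \sum_{k \geq 1} p_w^k \sum_{|S|=k} |\widehat{B}[S]| \leq \sum_{k \geq 1} (p_w \cdot 2w^2)^k,
\]
which, upon choosing $p_w := 1/(4w^2)$, is a geometric series bounded by a constant. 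Markov's inequality would then immediately give Fourier mass $\leq n^{d_w}$ except with probability $n^{-d_w}$ in this idealized setting.

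To handle an actual $n^{-d_w}$-almost $(d_w\log n)$-wise independent $T$, I would split the sum at $K := d_w\log n$. For $|S|\leq K$, almost-independence yields $\Pr[S\subseteq T] = p_w^{|S|} \pm n^{-d_w}$, so the additive error contributes at most $n^{-d_w}\sum_{k\leq K}(2w^2)^k = n^{O(d_w)}$, which is polynomial and can be absorbed into the final $n^{d_w}$ bound by taking $d_w$ large enough in terms of $\log(2w^2)$. For $|S|>K$ only the crude bound $\Pr[S\subseteq T]\leq p_w^K + n^{-d_w} = n^{-\Omega(d_w)}$ is available, obtained by passing to any fixed $K$-subset of $S$.

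The hard part will be the tail: naively bounding $\sum_{|S|>K}|\widehat{B}[S]|$ by $\sum_{k>K}(2w^2)^k$ yields an exponential quantity that dwarfs the $n^{-\Omega(d_w)}$ probability factor. To overcome this, I would exploit that $B'$ is itself a regular branching program on $|t|$ bits, since marginalizing out a coordinate of a regular program preserves double-stochasticity of the layer transition matrices; consequently the level-$k$ Fourier mass bound applies to $B'$ pointwise in $t$, giving $\sum_{|S|=k,\,S\subseteq t}|\widehat{B}[S]| \leq (2w^2)^k$ for every $t$. Combining this pointwise level-wise control with the rare-event probability for large $|S|$, and tuning $p_w$, $K$, and $d_w$ so that every cross-term stays polynomial in $n$, should drive the expected total Fourier mass to $n^{O(d_w)}$; Markov's inequality then produces the high-probability conclusion. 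Making this bookkeeping go through — in particular, absorbing the $(2w^2)^k$ level-wise bound against the small inclusion probabilities without incurring an exponential loss at high levels — is the main technical obstacle I anticipate.
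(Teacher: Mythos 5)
Your first half matches the paper's warm-up: the observation that $\widehat{B|_t}[S]=\widehat{B}[S]$ for $S\subseteq t$ and vanishes otherwise, the level-$k$ bound $L_2^k(B)\le (2w^2)^k$, the choice $p_w\approx 1/(4w^2)$ making the expected low-order mass a geometric series, and the $\pm\delta$ correction for almost-independence are exactly the paper's Lemma \ref{LemmaRestrictionMass}, Proposition \ref{PropositionRandomRestriction} and Lemma \ref{LemmaKbound}. But the place you flag as ``the main technical obstacle'' is the actual content of the theorem, and the fix you sketch does not close it. The pointwise claim you invoke, $\sum_{|S|=k',\,S\subseteq t}|\widehat{B}[S]|\le (2w^2)^{k'}$ for every $t$, is true for a trivial reason (it is a sub-sum of $L_2^{k'}(B)$, so no regularity-preservation argument is needed), but it is useless for the tail: it does not decay in $k'$, so summing it over $k'>K$ gives roughly $(2w^2)^n$. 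And you cannot recover decay by multiplying it against an inclusion probability --- a pointwise-in-$t$ bound has no probability factor attached, while in expectation the best you can extract from $(d_w\log n)$-wise almost-independence for a set of size $k'>K$ is $\Pr[S\subseteq T]\le p_w^{K}+\delta=n^{-\Theta(d_w)}$, which is only polynomially small and is overwhelmed by $(2w^2)^{k'}$ as $k'$ ranges up to $n$. So neither the expectation route nor the pointwise route, as you combine them, controls $\sum_{|S|>K,\,S\subseteq t}|\widehat{B}[S]|$.

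The missing idea is the paper's decomposition of \emph{high-order coefficients of the restricted program into products of low-order coefficients of restricted subprograms} (Lemmas \ref{LemmaKbound} and \ref{LemmaWellOrder}). One first shows, by Markov plus a union bound over all $O(n^2)$ subprograms $B_{i\cdots j}$ and all levels $k\le k'<2k$ (this is why $2k$-wise independence over sets of size $<2k$ is what gets used), that with probability $1-O(n^4 2^{-k})$ \emph{every} restricted subprogram has level-$k'$ mass at most $1/n$ in that range. Then, deterministically on this event, any weight-$k''$ coefficient with $k''\ge 2k$ is split at the position of its $k$-th one and written, via the Decomposition property, as $\widehat{B_{i\cdots l}|_t}[s]\cdot\widehat{B_{l+1\cdots j}|_t}[s']$ with $|s|=k$, $|s'|=k''-k$; submultiplicativity of the spectral norm and induction on $k''$ propagate the $1/n$ bound to all levels $\ge k$, at the cost only of a factor $n$ from the choice of split point. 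Hence the total mass above level $k$ is at most $1$ with high probability, while levels below $k$ are bounded deterministically by $\sum_{k'<k}(2w^2)^{k'}\le(2w^2)^k-1\le n^{d_w}-1$. Without this subprogram bootstrapping step (or some substitute that makes the tail contribution decay with $k'$), your bookkeeping cannot be made to go through.
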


As a warm-up, we begin by analysing the Fourier mass
in the case the set $T$ is chosen completely at random, with every coordinate included independently with probability $p_w$.  In this case, it is more convenient to average over $T$ and work with
$B'(x) = \ex{T,U}{B(\Select(T,x,U))}$.
Then it turns out that $\widehat{B'}[s] = p_w^{|s|} \cdot \widehat{B}[s]$, where $|s|$ denotes the Hamming weight of the vector $s$.  Thus, it suffices to analyse the original program $B$ and show that for each $k\in \{1,\cdots,n\}$, the Fourier mass of $B$ restricted to $s$ of weight $k$ is at most $c_w^k$, where $c_w$ is a constant depending only on $w$ (not on $n$).  We prove that this is indeed the case for regular branching programs:

\begin{theorem} \label{thm:levelk-intro}
Let $B : \zo^n\rightarrow \zo$ be a function computed by an oblivious, read-once, regular branching program of width $w$.
Then for every $k\in \{1,\ldots,n\}$, we have
$$\sum_{s\in \zo^n:|s|=k} |\widehat{B}[s]| \leq (2w^2)^k.$$
\end{theorem}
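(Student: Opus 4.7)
The natural first step is to express $B$ as a matrix product and analyze the Fourier coefficients as matrix entries. Let $M_i^b\in\zo^{w\times w}$ denote the layer-$i$ transition matrix for bit $b$, and let $\alpha,\omega\in\zo^w$ encode the start state and the accept set respectively, so that $B(x) = \alpha^\top \bigl(\prod_i M_i^{x_i}\bigr)\omega$. Write $M_i^{x_i}=A_i+(-1)^{x_i}D_i$ with $A_i:=(M_i^0+M_i^1)/2$ and $D_i:=(M_i^0-M_i^1)/2$. Substituting and reading off Fourier coefficients gives
$$\widehat B[s] \;=\; \alpha^\top\, E_1^{(s_1)} E_2^{(s_2)}\cdots E_n^{(s_n)}\, \omega, \qquad E_i^{(0)}:=A_i,\ E_i^{(1)}:=D_i.$$
Regularity forces $A_i$ to be doubly stochastic, and the condition $\sum_b (M_i^b)^\top M_i^b=2I$ (each state is the image of exactly two transitions, counting both input bits) yields the single workhorse of the proof:
$$A_i^\top A_i+D_i^\top D_i\;=\;I.$$

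In particular $\|A_i\xi\|_2^2+\|D_i\xi\|_2^2=\|\xi\|_2^2$ for any $\xi\in\mathbb{R}^w$. Setting $\beta_j:=A_{j+1}\cdots A_n\omega$ and telescoping over $j$ produces the length-independent bound
$$\sum_{j=1}^n\|D_j\beta_j\|_2^2\;\le\;\|\omega\|_2^2\;\le\;w,$$
and the same telescope applies to any vector propagated through a suffix of $A_j$'s. This is essentially the only non-trivial analytic ingredient; everything else is bookkeeping.

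For the level-$k$ estimate I would proceed by induction on $k$, generalizing the statement so that the start and accept vectors are arbitrary (allowing the inductive hypothesis to be re-invoked on sub-programs with signed boundary data). For the induction step, split each $s$ with $|s|=k$ according to the smallest index $i$ in its support: the corresponding contribution is $|(\alpha^\top A_{[1,i-1]} D_i)\cdot M_{s'}^{[i+1,n]}\omega|$ summed over $s'\subseteq[i+1,n]$ with $|s'|=k-1$, which is exactly the level-$(k-1)$ Fourier mass of the sub-program on layers $[i+1,n]$ with the modified start vector $\alpha^\top A_{[1,i-1]}D_i$. Summing over the choice of $i$ and applying the inductive hypothesis, the goal is for the total to acquire exactly one more factor of $2w^2$.

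The heart of the proof, and the main obstacle, is choosing the inductive invariant so that this factor of $2w^2$ emerges without spurious $\sqrt n$ factors. The telescope is inherently an $\ell_2$-type bound, whereas $\sum_{|s|=k}|\widehat B[s]|$ is $\ell_1$-type, and the naive Cauchy--Schwarz step $\sum\le\sqrt{\binom{n}{k}}\sqrt{\sum|\cdot|^2}$ is far too weak. The right inductive quantity is likely a vector-valued Fourier mass such as $\Psi_k(\text{sub-program},\xi):=\sum_{|s|=k}\|M_s\xi\|_2$, whose recursion can be closed by combining the $\ell_2$-telescope with a $\sqrt w$ Cauchy--Schwarz in the state coordinates. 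The factor $2w^2$ should then emerge as one power of $w$ from the telescope (via $\|\omega\|_2^2\le w$), another power of $w$ from the width-$w$ bound on the dimension of the propagated vector, and a constant $2$ arising from the $\pm$ decomposition $M_i^b=A_i\pm D_i$ together with the structural bound $|D_i(u,v)|\le 1/2$.
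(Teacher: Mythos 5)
Your setup and inductive skeleton are essentially the paper's: decompose each layer as $A_i=(M_i^0+M_i^1)/2$, $D_i=(M_i^0-M_i^1)/2$, note $\widehat B[s]$ is a product with $D_i$ in the support positions, and induct on $k$ by splitting each weight-$k$ coefficient at an extreme index of its support, reducing to the level-$(k-1)$ mass of a subprogram times a first-order factor. Your identity $A_i^\top A_i+D_i^\top D_i=I$ is also correct for regular layers (it is exactly the statement that every state has total in-degree $2$). The gap is that this identity cannot deliver the step you need to close the induction. After the split, the quantity you must control, summed over the split position $i$, is an $\ell_1$-type sum of the form $\sum_{i}\bigl\|D_i A_{i+1}\cdots A_n\,\omega\bigr\|_2$ (equivalently $\sum_i\|\widehat{B_{i\cdots n}}[1\circ 0^{n-i}]\|$), and it must be bounded by $\mathrm{poly}(w)$ \emph{independently of $n$}. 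Your Pythagorean telescope gives only $\sum_i\|D_i A_{i+1}\cdots A_n\omega\|_2^2\le\|\omega\|_2^2\le w$, and any conversion from this $\ell_2$ bound to the needed $\ell_1$ bound costs $\sqrt n$ — exactly the loss you flag. Your proposed repair (a vector-valued mass $\Psi_k$ closed by ``a $\sqrt w$ Cauchy--Schwarz in the state coordinates'') cannot work as stated: the problematic summation is over the $n$ layer positions, not over the $w$ state coordinates, so a Cauchy--Schwarz in dimension $w$ leaves the $n$-dependence untouched. Indeed, $\|D_i A_{i+1}\cdots A_n\omega\|_2\approx\sqrt{w/n}$ for every $i$ is perfectly consistent with your workhorse identity, yet makes the $\ell_1$ sum grow like $\sqrt{wn}$; so the length-independence is genuinely not a consequence of $A_i^\top A_i+D_i^\top D_i=I$ alone.

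The missing ingredient is precisely Lemma \ref{LemmaBRRY} (the main lemma of Braverman, Rao, Raz, and Yehudayoff): for an ordered regular program, $\sum_{1\le i\le n}\|\widehat{B_{i\cdots n}}[1\circ 0^{n-i}]\|_2\le 2w^2$, with no dependence on $n$. This is proved not by a Pythagorean telescope but by a different potential function, $\rho(X)=\sum_{u<v}\|X(u,\cdot)-X(v,\cdot)\|_2$ (sum of pairwise $\ell_2$ distances between rows), which telescopes across layers via a midpoint/triangle-inequality argument that uses regularity to pair up the incoming edges at each state; the per-layer decrease of $\rho$ dominates $\|\widehat{B_i}[1]X\|_2$, and $\rho(I)\le w^2$ caps the total. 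Once you have this lemma, your induction (splitting at the last, or with minor adjustments the first, support index, applying the inductive hypothesis to the weight-$k$ part and the lemma to the sum over $i$ of the first-order factors) closes immediately and yields $(2w^2)^{k+1}$; this is exactly the paper's proof of Theorem \ref{thm:levelk-intro}. Without it, even the case $k=1$ of your claimed bound is not established by what you have written.
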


Our proof of Theorem~\ref{thm:levelk-intro} relies on the main lemma of Braverman et al.~\cite{BRRY}, which
intuitively says that in a bounded-width, read-once, oblivious, regular branching program, only a constant number of bits have a significant effect on the acceptance probability.  More formally, if we sum, for every time step $i$ and all possible states $v$ at time $i$, the absolute difference between the acceptance probability after reading a 0 versus reading a 1 from state $v$, the total will be bounded by $\text{poly}(w)$ (independent of $n$). This directly implies a bound of $\text{poly}(w)$ on the Fourier mass of $B$ at the first level: the correlation of $B$ with a parity of weight $1$ is bounded by the effect of a single bit on the output of $B$. We then bound the correlation of $B$ with a parity of weight $k$ by the correlation of a \emph{prefix} of $B$ with a parity of weight $k-1$ times the effect of the remaining bit on $B$. Thus we inductively obtain the bound on the Fourier mass of $B$ at level $k$.

Our proof of Theorem \ref{thm:mainlemma-intro} for the case of a pseudorandom restriction $T$ uses the fact that we can decompose the high-order Fourier coefficients of an oblivious, read-once branching program $B'$ into products of low-order Fourier coefficients of ``subprograms'' (intervals of consecutive layers) of $B'$. 
Using an almost $O(\log n)$-wise independent choice of $T$ enables us to control the Fourier mass at level $O(\log n)$ for all subprograms of $B'$, which suffices to control the total Fourier mass of $B'$.

\subsection{Organization}

In Section \ref{SectionTechniques} we introduce the definitions and tools we use in our proof. In Section \ref{SubSectionBP} we formally define branching programs and explain our view of them as matrix-valued functions. In Sections \ref{SubSectionFourier} and \ref{SectionFourierBounds} we define the matrix-valued Fourier transform and explain how we use it.

Our results use Fourier analysis of regular branching programs to analyse pseudorandom generators. In Section \ref{SubSectionLowOrder}, we give a bound on the low-order Fourier coefficients of a read-once, oblivious, regular branching program (Theorem \ref{thm:levelk-intro}) using the main lemma of Braverman et al.~\cite{BRRY}. This yields a result about random restrictions, which we define and discuss in Section \ref{SubSectionCoin}. We extend the results about random restrictions to pseudorandom restrictions in Section \ref{SubSectionPRrestriction} and prove our main lemma (Theorem \ref{thm:mainlemma-intro}). Finally, in Section \ref{SubSectionPRG} we construct and analyse our pseudorandom generator, which proves the main result (Theorem \ref{thm:main-intro}).

In Section \ref{SectionGeneral} we show how to extend our techniques to general read-once, oblivious branching programs (Theorem \ref{thm:second-intro}). We conclude in Section \ref{SectionConclusion} by discussing directions for further work.

\section{Preliminaries} \label{SectionTechniques}

\subsection{Branching Programs} \label{SubSectionBP}

We define a length-$n$, width-$w$ \textbf{program} to be a function $B : \{0,1\}^n \times [w] \to [w]$, which takes a start state $u \in [w]$ and an input string $x \in \{0,1\}^n$ and outputs a final state $B[x](u)$.

Often we think of $B$ as having a fixed \textbf{start state} $u_0$ and a set of \textbf{accept states} $S \subset [w]$. Then $B$ \textbf{accepts} $x \in \{0,1\}^n$ if $B[x](u_0) \in S$. We say that $B$ \textbf{computes the function} $f : \{0,1\}^n \to \{0,1\}$ if $f(x)=1$ if and only if $B[x](u_0) \in S$.

In our applications, the input $x$ is randomly (or pseudorandomly) chosen, in which case a program can be viewed as a Markov chain randomly taking initial states to final states. For each $x \in \{0,1\}^n$, we let $B[x] \in \{0,1\}^{w \times w}$ be a matrix defined by $$B[x](u,v) = 1 \iff B[x](u)=v.$$ 

For a random variable $X$ on $\{0,1\}^n$, we have $\ex{X}{B[X]} \in [0,1]^{w \times w},$ where $\ex{R}{f(R)}$ is the \textbf{expectation} of a function $f$ with respect to a random variable $R$. Then the entry in the $u^\text{th}$ row and $v^\text{th}$ column $\ex{X}{B[X]}(u,v)$ is the probability that $B$ takes the initial state $u$ to the final state $v$ when given a random input from the distribution $X$---that is, $$\ex{X}{B[X]}(u,v) = \pr{X}{B[X](u)=v},$$ where $\pr{R}{e(R)}$ is the \textbf{probability} of an event $e$ with respect to the random variable $R$.

A branching program reads one bit of the input at a time (rather than reading $x$ all at once) maintaining only a state in $[w] = \{1,2, \cdots, w\}$ at each step. We capture this restriction by demanding that the program be composed of several smaller programs, as follows.

Let $B$ and $B'$ be width-$w$ programs of length $n$ and $n'$ respectively. We define the \textbf{concatenation} $B \circ B' : \{0,1\}^{n+n'} \times [w] \to [w]$ of $B$ and $B'$ by $$(B \circ B')[x \circ x'](u) := B'[x'](B[x](u)),$$ which is a width-$w$, length-$(n+n')$ program. That is, we run $B$ and $B'$ on separate inputs, but the final state of $B$ becomes the start state of $B'$. Concatenation corresponds to matrix multiplication---that is, $(B \circ B')[x \circ x'] = B[x] \cdot B'[x']$, where the two programs are concatenated on the left hand side and the two matrices are multiplied on the right hand side. 

A length-$n$, width-$w$, \textbf{ordered branching program} is a program $B$ that can be written $B = B_1 \circ B_2 \circ \cdots \circ B_n$, where each $B_i$ is a length-$1$ width-$w$ program. We refer to $B_i$ as the $i^\text{th}$ \textbf{layer} of $B$. We denote the \textbf{subprogram} of $B$ from layer $i$ to layer $j$ by $B_{ i \cdots j} := B_i \circ B_{i+1} \circ \cdots \circ B_j$.

General read-once, oblivious branching programs (a.k.a. unordered branching programs) can be reduced to the ordered case by a permutation of the input bits. Formally, a \textbf{read-once, oblivious branching program} $B$ is an ordered branching program $B'$ composed with a permutation $\pi$. That is, $B[x]=B'[\pi(x)]$, where the $i^\text{th}$ bit of $\pi(x)$ is the $\pi(i)^\text{th}$ bit of $x$

For a program $B$ and an arbitrary distribution $X$, the matrix $\ex{X}{B[X]}$ is \textbf{stochastic}---that is, $$\sum_v \ex{X}{B[X]}(u,v) = 1$$ for all $u$ and $\ex{X}{B[X]}(u,v) \geq 0$ for all $u$ and $v$. A program $B$ is called a \textbf{regular program} if the matrix $\ex{U}{B[U]}$ is \textbf{doubly stochastic}---that is, both $\ex{U}{B[U]}$ and its transpose $\ex{U}{B[U]}^*$ are stochastic. A program $B$ is called a \textbf{permutation program} if $B[x]$ is a permutation matrix for every $x$ or, equivalently, $B[x]$ is doubly stochastic. Note that a permutation program is necessarily a regular program and, if both $B$ and $B'$ are regular or permutation programs, then so is their concatenation.

A regular program $B$ has the property that the uniform distribution is a stationary distribution of the Markov chain $\ex{U}{B[U]}$, whereas, if $B$ is a permutation program, the uniform distribution is stationary for $\ex{X}{B[X]}$ for \emph{any} distribution $X$.

A \textbf{regular branching program} is a branching program where each layer $B_i$ is a regular program and likewise for a \textbf{permutation branching program}.

\subsection{Norms}

We are interested in constructing a random variable $X$ (the output of the pseudorandom generator) such that $\ex{X}{B[X]} \approx \ex{U}{B[U]}$, where $U$ is uniform on $\{0,1\}^n$. Throughout we use $U$ to denote the \textbf{uniform distribution}. The error of the pseudorandom generator will be measured by the norm of the matrix $\ex{X}{B[X]} - \ex{U}{B[U]}$.

For a matrix $A \in \mathbb{R}^{w \times w}$, define the \textbf{$\rho$ operator norm} of $A$ by $$\norm{A}_\rho = \max_x \frac{\norm{xA}_\rho}{\norm{x}_\rho},$$ where $\rho$ specifies a vector norm (usually $1$, $2$, or $\infty$ norm). Define the \textbf{Frobenius norm} of $A \in \mathbb{R}^{w \times w}$ by $$\frob{A}^2 = \sum_{u,v} A(u,v)^2 = \text{trace}(A^*A) = \sum_\lambda |\lambda|^2,$$ where $A^*$ is the (conjugate) transpose of $A$ and the last sum is over the singular values $\lambda$ of $A$. Note that $\norm{A}_2 \leq \frob{A}$ for all $A$.

We almost exclusively use the Euclidean norm ($\norm{x}_2 = \sqrt{\sum_i x(i)^2}$) and the corresponding spectral norm ($\norm{A}_2 = \max_{\lambda} |\lambda|$). This is not crucial; our results would work with any reasonable norm.

\subsection{Fourier Analysis} \label{SubSectionFourier}

Let $B : \{0,1\}^n \to \mathbb{R}^{w\times w}$ be a matrix-valued function (such as given by a length-$n$, width-$w$ branching program). Then we define the \textbf{Fourier transform} of $B$ as a matrix-valued function $\widehat{B} : \{0,1\}^n \to \mathbb{R}^{w \times w}$ given by $$\widehat{B}[s] := \ex{U}{B[U] \chi_s(U)},$$ where $s \in \{0,1\}^n$ (or, equivalently, $s \subset [n]$) and $$\chi_s(x) = (-1)^{\sum_i x(i) \cdot s(i)} = \prod_{i \in s} (-1)^{x(i)}.$$ We refer to $\widehat{B}[s]$ as the $s^\text{th}$ \textbf{Fourier coefficient} of $B$. The \textbf{order} of a Fourier coefficient $\widehat{B}[s]$ is $|s|$---the \textbf{Hamming weight} of $s$, which is the size of the set $s$ or the number of $1$s in the string $s$. Note that this is equivalent to taking the real-valued Fourier transform of each of the $w^2$ entries of $B$ separately, but we will see below that this matrix-valued Fourier transform is nicely compatible with matrix algebra.

For a random variable $X$ over $\{0,1\}^n$ we define its $s^\text{th}$ \textbf{Fourier coefficient} as $$\widehat{X}(s) := \ex{X}{\chi_s(X)},$$ which, up to scaling, is the same as taking the real-valued Fourier transform of the probability mass function of $X$.
We have the following useful properties.
\begin{lemma} \label{LemmaFourier}
Let $A, B: \{0,1\}^n \to \mathbb{R}^{w\times w}$ be matrix valued functions. Let $X$, $Y$, and $U$ be independent random variables over $\{0,1\}^n$, where $U$ is uniform. Let $s,t \in \{0,1\}^n$. Then we have the following.
\begin{itemize}
\item Decomposition: If $C[x \circ y] = A[x] \cdot B[y]$ for all $x,y \in \{0,1\}^n$, then $\widehat{C}[s \circ t] = \widehat{A}[s] \cdot \widehat{B}[t]$.
\item Expectation: $\ex{X}{B[X]} = \sum_s \widehat{B}[s] \widehat{X}(s)$.
\item Fourier Inversion for Matrices: $B[x] = \sum_s \widehat{B}[s] \chi_s(x)$.
\item Fourier Inversion for Distributions: $\pr{X}{X=x} = \ex{U}{\widehat{X}(U) \chi_U(x)}$.
\item Convolution for Distributions: If $Z = X \oplus Y$, then $\widehat{Z}(s) = \widehat{X}(s) \cdot \widehat{Y}(s)$.
\item Parseval's Identity: $\sum_{s \in \{0,1\}^n} \frob{\widehat{B}[s]}^2 = \ex{U}{\frob{B[U]}^2}$.
\item Convolution for Matrices: If, for all $x \in \{0,1\}^n$, $C[x] = \ex{U}{A[U] \cdot B[U \oplus x]}$, then $\widehat{C}[s] = \widehat{A}[s] \cdot \widehat{B}[s]$.
\end{itemize}
\end{lemma}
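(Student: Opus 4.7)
The plan is to reduce every item to character orthogonality. The two core facts are that characters are homomorphisms, $\chi_s(x)\chi_t(x)=\chi_{s\oplus t}(x)$ and $\chi_{s\circ t}(x\circ y)=\chi_s(x)\chi_t(y)$, together with $\ex{U}{\chi_r(U)}=\mathbf{1}[r=0]$. Combining these gives the orthogonality relation $\ex{U}{\chi_s(U)\chi_t(U)}=\mathbf{1}[s=t]$. As the paper notes just before the lemma, the matrix-valued Fourier transform is just the entrywise real-valued Fourier transform over $\Z_2^n$, so all non-multiplicative items reduce to the scalar case applied to each of the $w^2$ coordinates; the multiplicative items (Decomposition, the two Convolutions) need only a little extra bookkeeping because matrix multiplication is bilinear and thus commutes with scalar-valued expectation.

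For Decomposition, I would expand $\widehat{C}[s\circ t]=\ex{U,U'}{A[U]B[U']\chi_s(U)\chi_t(U')}$ using the factorization of $\chi_{s\circ t}$, and then exploit independence of $U$ and $U'$ to factor the expectation into $\widehat{A}[s]\cdot\widehat{B}[t]$. For Fourier Inversion for Matrices, I would compute $\sum_s\widehat{B}[s]\chi_s(x)=\ex{U}{B[U]\sum_s\chi_s(U)\chi_s(x)}$ and note $\sum_s\chi_s(U)\chi_s(x)=2^n\cdot\mathbf{1}[U=x]$, which recovers $B[x]$; Fourier Inversion for Distributions is the same argument with the indicator $\mathbf{1}[X=x]$ in place of $B[U]$. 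For Expectation, substitute the matrix inversion formula into $\ex{X}{B[X]}$ and pull the $\ex{X}{\cdot}$ past the matrix $\widehat{B}[s]$ using linearity, giving $\sum_s\widehat{B}[s]\,\ex{X}{\chi_s(X)}=\sum_s\widehat{B}[s]\widehat{X}(s)$.

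For Convolution for Distributions, $\widehat{Z}(s)=\ex{X,Y}{\chi_s(X\oplus Y)}=\ex{X,Y}{\chi_s(X)\chi_s(Y)}$ by the homomorphism property, and independence factors this into $\widehat{X}(s)\widehat{Y}(s)$. For Parseval, I would write $\frob{B[U]}^2=\mathrm{trace}(B[U]^*B[U])$, substitute Fourier inversion into both factors, take $\ex{U}{\cdot}$, and use character orthogonality to collapse the double sum $\sum_{s,t}\mathrm{trace}(\widehat{B}[s]^*\widehat{B}[t])\,\ex{U}{\chi_s(U)\chi_t(U)}$ onto the diagonal $s=t$, yielding $\sum_s\frob{\widehat{B}[s]}^2$. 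For Convolution for Matrices, I would substitute the defining formula for $C[x]$ into $\widehat{C}[s]=\ex{V}{C[V]\chi_s(V)}$ and change variables $W=U\oplus V$ so that $\chi_s(V)=\chi_s(U)\chi_s(W)$, then factor the resulting expectation over the independent uniform variables $U$ and $W$ to obtain $\widehat{A}[s]\cdot\widehat{B}[s]$.

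The only real ``obstacle'' is notational rather than conceptual: throughout the matrix manipulations I need to preserve the original left-to-right order of all products, since matrix multiplication does not commute. Once that discipline is maintained, each identity follows from a short computation using linearity of expectation and the orthogonality of characters.
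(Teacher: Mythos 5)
Your proposal is correct and follows essentially the same route as the paper's own (appendix) proof: each identity is reduced to character orthogonality, $\ex{U}{\chi_r(U)}=\mathbf{1}[r=0]$, with independence used to factor expectations in Decomposition and the two Convolutions, and the trace expansion for Parseval. The only cosmetic difference is that you derive Fourier Inversion first and obtain Expectation from it, while the paper proves Expectation directly and treats Inversion as the special case of a point mass; the underlying computation is identical.
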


The Decomposition property is what makes the matrix-valued Fourier transform more convenient than separately taking the Fourier transform of the matrix entries as done in \cite{BPW}. If $B$ is a length-$n$ width-$w$ branching program, then, for all $s \in \{0,1\}^n$, $$\widehat{B}[s] = \widehat{B}_1[s_1] \cdot \widehat{B}_2[s_2] \cdot \cdots \cdot \widehat{B}_n[s_n].$$

\subsection{Small-Bias Distributions} \label{SubSectionSmallBias}

The \textbf{bias} of a random variable $X$ over $\{0,1\}^n$ is defined as $$\text{bias}(X) := \max_{s \ne 0} |\widehat{X}(s)|.$$ A distribution is \textbf{$\varepsilon$-biased} if it has bias at most $\varepsilon$. Note that a distribution has bias $0$ if and only if it is uniform. Thus a distribution with small bias is an approximation to the uniform distribution. We can sample an $\varepsilon$-biased distribution $X$ on $\{0,1\}^n$ with seed length $O(\log(n/\varepsilon))$ and using space $O(\log(n/\varepsilon))$ \cite{NaorNa93,AGHP}.

Small-bias distributions are useful pseudorandom generators: A $\varepsilon$-biased random variable $X$ is indistinguishable from uniform by any linear function (a parity of a subset of the bits of $X$). That is, for any $s \subset [n]$, we have $\left| \ex{X}{\bigoplus_{i \in s} X_i} - 1/2\right|\leq 2\varepsilon$. Small bias distributions are known to be good pseudorandom generators for width-$2$ branching programs \cite{BogdanovDvVeYe09}, but not width-$3$. For example, the uniform distribution over $\{x \in \{0,1\}^n : |x| \mod{3} = 0 \}$ has bias $2^{-\Omega(n)}$, but does not fool width-$3$, ordered, permutation branching programs.

\subsection{Fourier Mass} \label{SectionFourierBounds}

We analyse small bias distributions as pseudorandom generators for branching programs using Fourier analysis. Intuitively, the Fourier transform of a branching program expresses that program as a linear combination of linear functions (parities), which can then be fooled using a small-bias space.

Define the \textbf{Fourier mass} of a matrix-valued function $B$ to be $$L_\rho(B) := \sum_{s \ne 0} \norm{\widehat{B}[s]}_\rho.$$ Also, define the \textbf{Fourier mass of $B$ at level $k$} as $$L_\rho^k(B) := \sum_{s \in \{0,1\}^n : |s|=k} \norm{\widehat{B}[s]}_\rho.$$ Note that $L_\rho(B) =\sum_{k \geq 1} L_\rho^k(B)$.

The Fourier mass is unaffected by order:

\begin{lemma} \label{LemmaFourierPermutation}
Let $B, B' : \{0,1\}^n \to \mathbb{R}^{w \times w}$ be matrix-valued functions satisfying $B[x]=B'[\pi(x)]$, where $\pi : [n] \to [n]$ is a permutation. Then, for all $s \in \{0,1\}^n$, $\widehat{B}[s]=\widehat{B'}[\pi(s)]$. In particular, $L_\rho(B)=L_\rho(B')$ and $L_\rho^k(B)=L_\rho^k(B')$ for all $k$ and $\rho$.
\end{lemma}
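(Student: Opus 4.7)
The plan is to reduce the lemma to a direct change of variables in the definition of the Fourier transform. Starting from
\[
\widehat{B}[s] \;=\; \ex{U}{B[U]\,\chi_s(U)} \;=\; \ex{U}{B'[\pi(U)]\,\chi_s(U)},
\]
I would like to rewrite the character $\chi_s(U)$ in terms of $\pi(U)$. Using the paper's convention $\pi(x)_i = x_{\pi(i)}$, a simple reindexing of the exponent sum gives the key identity
\[
\chi_{\pi(s)}(\pi(U)) \;=\; (-1)^{\sum_i \pi(s)_i\, \pi(U)_i} \;=\; (-1)^{\sum_i s_{\pi(i)} U_{\pi(i)}} \;=\; (-1)^{\sum_j s_j U_j} \;=\; \chi_s(U).
\]
Substituting this into the previous display and then setting $V = \pi(U)$ (which is again uniform on $\zo^n$, since $\pi$ is a bijection of coordinates that preserves the product measure) yields
\[
\widehat{B}[s] \;=\; \ex{V}{B'[V]\,\chi_{\pi(s)}(V)} \;=\; \widehat{B'}[\pi(s)],
\]
which is the first assertion.

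For the Fourier-mass identities, I would use that $\pi$ acts as a bijection on $\zo^n \setminus \{0\}$ and preserves Hamming weight, since it merely permutes coordinates. Hence reindexing the sum by $t = \pi(s)$ gives
\[
L_\rho(B) \;=\; \sum_{s \neq 0} \norm{\widehat{B}[s]}_\rho \;=\; \sum_{s \neq 0} \norm{\widehat{B'}[\pi(s)]}_\rho \;=\; \sum_{t \neq 0} \norm{\widehat{B'}[t]}_\rho \;=\; L_\rho(B'),
\]
and restricting the sum to $|s|=k$ (equivalently $|\pi(s)|=k$) gives $L_\rho^k(B) = L_\rho^k(B')$.

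There is no real obstacle here; the whole content of the lemma is the combinatorial identity $\chi_{\pi(s)}(\pi(U)) = \chi_s(U)$ plus the observation that a coordinate permutation preserves both the uniform measure on $\zo^n$ and the Hamming weight. The only thing to be careful about is keeping the two conventions — $\pi$ acting on input bits and $\pi$ acting on the index set $s$ — consistent, which is exactly what the identity above checks.
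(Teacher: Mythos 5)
Your proof is correct and is exactly the standard change-of-variables argument the paper intends (it states this lemma without proof, treating it as immediate): the identity $\chi_{\pi(s)}(\pi(U))=\chi_s(U)$, the fact that $\pi$ preserves the uniform distribution, and the weight-preserving bijection $s\mapsto\pi(s)$ on $\zo^n\setminus\{0\}$ give both assertions. Nothing is missing.
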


Lemma \ref{LemmaFourierPermutation} implies that the Fourier mass of any read-once, oblivious branching program is equal to the Fourier mass of the corresponding ordered branching program.

If $L_\rho(B)$ is small, then $B$ is fooled by a small-bias distribution:

\begin{lemma} \label{LemmaBiasMass}
Let $B$ be a length-$n$, width-$w$ branching program. Let $X$ be a $\varepsilon$-biased random variable on $\{0,1\}^n$. For any matrix norm $\norm{\cdot}_\rho$, we have $$\norm{\ex{X}{B[X]}-\ex{U}{B[U]}}_\rho = \norm{\sum_{s \ne 0} \widehat{B}[s] \widehat{X}(s)}_\rho \leq L_\rho(B) \varepsilon.$$
\end{lemma}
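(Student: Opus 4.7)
The plan is to unwind the two claimed equalities/inequalities in turn using tools already assembled in Section~\ref{SectionTechniques}. First I would establish the equality $\ex{X}{B[X]}-\ex{U}{B[U]} = \sum_{s \ne 0} \widehat{B}[s]\widehat{X}(s)$. By the Expectation property in Lemma \ref{LemmaFourier}, $\ex{X}{B[X]} = \sum_{s} \widehat{B}[s]\widehat{X}(s)$, and the $s=0$ term is exactly $\widehat{B}[0]\cdot\widehat{X}(0) = \ex{U}{B[U]}\cdot 1$, since $\chi_0 \equiv 1$. Subtracting isolates the sum over $s \ne 0$.

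Next I would bound the norm of this sum. Applying the triangle inequality and then the fact that $|\widehat{X}(s)|$ is a scalar (so $\norm{c\cdot M}_\rho = |c|\cdot\norm{M}_\rho$ for any matrix norm),
\[
\Bigl\|\sum_{s\ne 0} \widehat{B}[s]\widehat{X}(s)\Bigr\|_\rho \;\le\; \sum_{s\ne 0} |\widehat{X}(s)|\cdot\norm{\widehat{B}[s]}_\rho.
\]
Finally, the small-bias hypothesis gives $|\widehat{X}(s)| \le \eps$ for every $s \ne 0$, so the right-hand side is at most $\eps \cdot \sum_{s\ne 0}\norm{\widehat{B}[s]}_\rho = \eps \cdot L_\rho(B)$, completing the proof.

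There is essentially no obstacle here: the statement is a direct packaging of the Fourier expansion of $\ex{X}{B[X]}$ against the $\ell_\infty$ bound on nontrivial Fourier coefficients of $X$. The only thing to be slightly careful about is that $\widehat{X}(s)$ is a scalar while $\widehat{B}[s]$ is a matrix, which is why pulling $|\widehat{X}(s)|$ out of the matrix norm is legitimate for any operator or Frobenius norm $\norm{\cdot}_\rho$.
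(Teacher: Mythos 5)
Your proof is correct and is exactly the intended argument: the paper states this lemma without an explicit proof, relying on the Expectation property of Lemma \ref{LemmaFourier}, the fact that $\widehat{B}[0]\widehat{X}(0)=\ex{U}{B[U]}$, and the triangle inequality combined with the bias bound $|\widehat{X}(s)|\leq\varepsilon$ for $s\neq 0$, just as you do. No gaps; your remark about pulling the scalar $|\widehat{X}(s)|$ out of the matrix norm is the only point needing care, and you handle it.
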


In the worst case $L_2(B) = 2^{\Theta(n)}$, even for a length-$n$ width-$3$ permutation branching program $B$. For example, the program $B_{\text{mod 3}}$ that computes the Hamming weight of its input modulo $3$ has exponential Fourier mass.

We show that, using `restrictions', we can ensure that $L_\rho(B)$ is small.

\section{Fourier Analysis of Regular Branching Programs} \label{SubSectionLowOrder}

We use a result by Braverman et al.~\cite{BRRY}. The following is a Fourier-analytic reformulation of their result.

\begin{lemma}[{\cite[Lemma 4]{BRRY}}] \label{LemmaBRRY}
Let $B$ be a length-$n$, width-$w$, ordered, regular branching program. Then $$\sum_{1 \leq i \leq n} \norm{ \widehat{B_{i \cdots n}}[1 \circ 0^{n-i}] }_2 \leq 2w^2.$$
\end{lemma}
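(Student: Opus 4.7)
The plan is to expand the Fourier coefficient into an explicit matrix product and recognize it as the quantity the main lemma of BRRY already bounds. Applying the Decomposition property of Lemma \ref{LemmaFourier} to the concatenation $B_{i\cdots n} = B_i \circ B_{i+1\cdots n}$ gives
$$\widehat{B_{i\cdots n}}[1\circ 0^{n-i}] \;=\; \widehat{B_i}[1]\cdot M_i, \qquad M_i := \widehat{B_{i+1\cdots n}}[0^{n-i}] \;=\; \Exp_U\bigl[B_{i+1\cdots n}[U]\bigr].$$
Regularity of the tail implies that $M_i$ is doubly stochastic. Writing out $\widehat{B_i}[1] = \tfrac12(B_i[0] - B_i[1])$, the $u^{\text{th}}$ row of the product equals $\tfrac12 \bigl(M_i(B_i[0](u),\cdot) - M_i(B_i[1](u),\cdot)\bigr)$, i.e.\ one half the difference between two rows of the tail transition matrix. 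This is precisely the ``influence'' of bit $i$ at starting state $u$ on the distribution over terminal states, which is the object summed over layers in BRRY's main lemma.

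The second step is to transcribe BRRY's bound into this matrix-valued Fourier language. Their Lemma~4 shows that the total variation induced by flipping individual bits, summed across all layers $i$, is bounded by a polynomial in $w$ alone, with the regularity assumption used only through the fact that $M_i$ is doubly stochastic. Combined with the row-level computation above, this directly produces a bound of the form $\sum_i \|\widehat{B_i}[1]\,M_i\|_\rho \le O(w^2)$ in a row-summed $\ell_1$-type norm $\rho$ implicit in BRRY's argument. To pass from their norm to the spectral norm used here, I would apply the standard inequality $\|A\|_2 \leq \sqrt{\|A\|_1 \cdot \|A\|_\infty}$. Here $\|\widehat{B_i}[1]\,M_i\|_\infty \leq 1$ is immediate, since each row is a half-difference of two stochastic vectors, and $\sum_i \|\widehat{B_i}[1]\,M_i\|_1$ is exactly the column-sum quantity BRRY bound; combining these with Cauchy--Schwarz should yield the constant $2w^2$.

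The step I expect to be the main obstacle is the reconciliation of norm conventions: BRRY's bound is stated entrywise/in terms of $\ell_1$ row shifts, while Lemma \ref{LemmaBRRY} is about a sum of spectral norms. Handling this without incurring a spurious $\sqrt{n}$ factor requires using both the sparsity of $\widehat{B_i}[1]$ (at most two nonzero entries per row, each of magnitude $\tfrac12$) and the doubly-stochastic structure of $M_i$ at the same time; bounding the two factors separately would lose too much. Once that conversion is in hand, the rest of the argument is essentially bookkeeping: we have reformulated BRRY's combinatorial ``total influence'' bound as a statement about level-$1$ Fourier mass of arbitrary tails $B_{i\cdots n}$, which is exactly what Theorem \ref{thm:levelk-intro} will later feed into as the base case of an inductive argument on Fourier mass at level $k$.
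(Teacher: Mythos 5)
Your overall route --- expand $\widehat{B_{i\cdots n}}[1\circ 0^{n-i}] = \widehat{B_i}[1]\cdot M_i$ with $M_i=\widehat{B_{i+1\cdots n}}[0^{n-i}]$ doubly stochastic, recognize each row as half the difference of two rows of $M_i$, and import BRRY's Lemma 4 --- is exactly the reduction the paper sketches in the prose following the lemma statement (its actual proof, in Appendix~\ref{AppendixBRRY}, is instead a self-contained argument). The genuine gap in your proposal is the norm-conversion step. Write $A_i:=\widehat{B_i}[1]M_i$ and $a_i:=\norm{A_i}_1$ (max column absolute sum). What BRRY gives, in the paper's reformulation, is the \emph{linear} control $\sum_i\sum_u \norm{A_i(\cdot,u)}_1 \le 2w(w-1)$, hence $\sum_i a_i = O(w^2)$. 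Your proposed interpolation $\norm{A_i}_2\le\sqrt{\norm{A_i}_1\norm{A_i}_\infty}$ with $\norm{A_i}_\infty\le 1$ only yields $\sum_i\norm{A_i}_2\le\sum_i\sqrt{a_i}$, and since typically almost all $a_i$ are far below $1$, this is not controlled by $\sum_i a_i$: it can be as large as $\sqrt{n\sum_i a_i}=\Theta(w\sqrt{n})$ (e.g.\ when every layer has influence $\Theta(w^2/n)$). Cauchy--Schwarz is precisely what \emph{produces} the spurious $\sqrt{n}$ factor here; it cannot remove it, and the suggestion to exploit ``the sparsity of $\widehat{B_i}[1]$ and the doubly-stochastic structure of $M_i$ simultaneously'' is not an argument --- those facts are already what gave you $\norm{A_i}_\infty\le 1$ and do not repair the per-$i$ square-root loss.

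The repair is simple and avoids any square root: bound the spectral norm linearly by the entrywise mass, $\norm{A_i}_2 \le \frob{A_i} \le \sum_u \norm{A_i(\cdot,u)}_2 \le \sum_u \norm{A_i(\cdot,u)}_1$, and sum over $i$ directly against BRRY's bound to get $\sum_i \norm{\widehat{B_{i\cdots n}}[1\circ 0^{n-i}]}_2 \le 2w(w-1)\le 2w^2$, independent of $n$. Be aware also that the paper's own proof does not use BRRY as a black box: it defines the potential $\rho(X)=\sum_{u<v}\norm{X(u,\cdot)-X(v,\cdot)}_2$, proves $\norm{\widehat{B_i}[1]X}_2\le 2\bigl(\rho(X)-\rho(\widehat{B_i}[0]X)\bigr)$ by pairing rows and moving them to midpoints (the triangle inequality step, with regularity used to guarantee each state is hit exactly twice so the process ends at the rows of $\widehat{B_i}[0]X$), and telescopes over layers to $2\rho(I)\le 2w^2$. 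If you keep the citation route, state exactly which quantity BRRY bound and use the linear comparison above rather than the $\sqrt{\norm{A}_1\norm{A}_\infty}$ interpolation; as written, your argument does not establish the lemma.
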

Braverman et al.~instead consider the sum, over all $i \in [n]$ and all states $u\in [w]$ at layer $i$, of the difference in acceptance probabilities if we run the program starting at $v$ with a $0$ followed by random bits versus a $1$ followed by random bits. They refer to this quantity as the \textbf{weight} of $B$. Their result can be expressed in Fourier-analytic terms by considering subprograms $B_{i \cdots n}$ that are the original program with the first $i-1$ layers removed: $$\sum_{1 \leq i \leq n} \norm{ \widehat{B_{i \cdots n}}[1 \circ 0^{n-i}] q}_1 \leq 2(w-1)$$ for any $q \in \{0,1\}^w$ with $\sum_u q(u)=1$. (The vector $q$ can be used to specify the accept states of $B$, and the $v^\text{th}$ row of $ \widehat{B_{i \cdots n}}[1 \circ 0^{n-i}] q$ is precisely the difference in acceptance probabilities mentioned above.) By summing over all $w$ possible $q$, we obtain $$\sum_{i \in [n]} \sum_u \norm{\widehat{B_{i \cdots n}}[1 \circ 0^{n-i}](\cdot,u)}_1 \leq 2w(w-1).$$  This implies Lemma \ref{LemmaBRRY}, as the spectral norm of a matrix is bounded by the sum of the $1$-norms of the columns. For completeness, we include a direct proof of Lemma \ref{LemmaBRRY} in Appendix \ref{AppendixBRRY}.

Lemma \ref{LemmaBRRY} is similar (but not identical) to a bound on the first-order Fourier coefficients of a regular branching program:  The term $\widehat{B_{i \cdots n}}[1 \circ 0^{n-i}]$ measures the effect of the $i^\text{th}$ bit on the output of $B$ when we start the program at layer $i$, whereas the $i^\text{th}$ first-order Fourier coefficient $\widehat{B}[0^{i-1} \circ 1 \circ 0^{n-i}]$ measures the effect of the $i^\text{th}$ bit when we start at the first layer and run the first $i-1$ layers with random bits. This difference allows us to use Lemma \ref{LemmaBRRY} to obtain a bound on all low-order Fourier coefficients of a regular branching program:

\begin{theorem} \label{TheoremLow}
Let $B$ be a length-$n$, width-$w$, read-once, oblivious, regular branching program. Then $$L_2^k(B) := \sum_{s \in \{0,1\}^n : |s|=k} \norm{\widehat{B}[s]}_2 \leq (2w^2)^{k}.$$
\end{theorem}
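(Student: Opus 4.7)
My plan is to prove Theorem \ref{TheoremLow} by induction on $k$, with Lemma \ref{LemmaBRRY} driving the inductive step and the regularity of $B$ handling the base case. By Lemma \ref{LemmaFourierPermutation}, $L_2^k$ is invariant under permutations of the input coordinates, so I may assume $B = B_1 \circ \cdots \circ B_n$ is a genuinely ordered regular branching program. The base case $k=0$ (which I need for the induction, even though the theorem is stated only for $k \geq 1$) is immediate: $\widehat{B}[0^n] = \ex{U}{B[U]}$ is doubly stochastic by regularity, hence has spectral norm at most $1 = (2w^2)^0$.

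For the inductive step with $k \geq 1$, I would decompose each $s$ with $|s|=k$ according to the position $i$ of its \emph{last} $1$, writing $s = s' \circ 1 \circ 0^{n-i}$ with $s' \in \zo^{i-1}$ of weight $k-1$. Applying the Decomposition property (Lemma \ref{LemmaFourier}) to $B = B_{1 \cdots i-1} \circ B_{i \cdots n}$ yields
$$\widehat{B}[s] = \widehat{B_{1 \cdots i-1}}[s'] \cdot \widehat{B_{i \cdots n}}[1 \circ 0^{n-i}],$$
and submultiplicativity of the spectral norm bounds $\norm{\widehat{B}[s]}_2$ by the product of the two factors' spectral norms. Summing over all $s$ of weight $k$ while grouping by $i$, the inner sum over $s' \in \zo^{i-1}$ of weight $k-1$ recognizes itself as $L_2^{k-1}(B_{1 \cdots i-1})$, giving
$$L_2^k(B) \leq \sum_{i=k}^{n} L_2^{k-1}(B_{1 \cdots i-1}) \cdot \norm{\widehat{B_{i \cdots n}}[1 \circ 0^{n-i}]}_2.$$
Since any prefix $B_{1 \cdots i-1}$ is itself a regular branching program, the inductive hypothesis gives $L_2^{k-1}(B_{1 \cdots i-1}) \leq (2w^2)^{k-1}$; pulling this uniform constant outside and invoking Lemma \ref{LemmaBRRY} on the surviving sum $\sum_i \norm{\widehat{B_{i \cdots n}}[1 \circ 0^{n-i}]}_2 \leq 2w^2$ delivers $L_2^k(B) \leq (2w^2)^k$, completing the induction.

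The main obstacle---really the key design choice---is to split $s$ at its \emph{last} $1$ rather than its first. Splitting at the first $1$ would leave behind a factor $\widehat{B_i}[1]$ in isolation, and there is no analogue of Lemma \ref{LemmaBRRY} bounding $\sum_i \norm{\widehat{B_i}[1]}_2$. Splitting at the last $1$, by contrast, fuses $\widehat{B_i}[1]$ with the trivial tail coefficient $\widehat{B_{i+1 \cdots n}}[0^{n-i}] = \ex{U}{B_{i+1 \cdots n}[U]}$ into exactly the quantity $\widehat{B_{i \cdots n}}[1 \circ 0^{n-i}]$ controlled by Lemma \ref{LemmaBRRY}, while peeling off a prefix that is again a regular branching program and thus amenable to induction. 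The remaining ingredients---that prefixes of regular branching programs are regular and that doubly stochastic matrices have spectral norm at most $1$---are routine.
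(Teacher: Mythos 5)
Your proposal is correct and matches the paper's proof essentially step for step: reduce to the ordered case via Lemma \ref{LemmaFourierPermutation}, induct on $k$ with the doubly stochastic zeroth coefficient as the base case, split each weight-$k$ coefficient at the position of its last $1$, apply Decomposition and submultiplicativity, bound the prefix sum by the inductive hypothesis, and finish with Lemma \ref{LemmaBRRY}. Your remark on why one must split at the last $1$ rather than the first is a nice articulation of the same design choice the paper makes implicitly.
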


The key point is that the bound does not depend on $n$, even though we are summing ${ n \choose k }$ terms. 

\begin{proof} 
By Lemma \ref{LemmaFourierPermutation}, we may assume that $B$ is ordered. We perform an induction on $k$. If $k=0$, then there is only one Fourier coefficient to bound---namely, $\widehat{B}[0^n] = \ex{U}{B[U]}$. Since $\ex{U}{B[U]}$ is doubly stochastic, the base case follows from the fact that every doubly stochastic matrix has spectral norm $1$. Now suppose the bound holds for $k$ and consider $k+1$. We split the Fourier coefficients based on where the last $1$ is:
\begin{align*}
\lefteqn{\sum_{s \in \{0,1\}^n : |s|=k+1} \norm{\widehat{B}[s]}_2}~~~~~~~~~~&\\ =& \sum_{1 \leq i \leq n} \sum_{s \in \{0,1\}^{i-1} : |s|=k} \norm{\widehat{B}[s \circ 1 \circ 0^{n-i}]}_2\\
=& \sum_{1 \leq i \leq n} \sum_{s \in \{0,1\}^{i-1} : |s|=k} \norm{\widehat{B_{1 \cdots i-1}}[s] \cdot \widehat{B_{i \cdots n}}[1 \circ 0^{n-i}]}_2~~~\text{(by Lemma \ref{LemmaFourier} (Decomposition))}\\
\leq& \sum_{1 \leq i \leq n} \sum_{s \in \{0,1\}^{i-1} : |s|=k} \norm{\widehat{B_{1 \cdots i-1}}[s]}_2 \cdot \norm{ \widehat{B_{i \cdots n}}[1 \circ 0^{n-i}]}_2\\
\leq& \sum_{1 \leq i \leq n} (2w^2)^{k} \cdot \norm{ \widehat{B_{i \cdots n}}[1 \circ 0^{n-i}]}_2~~~\text{(by the induction hypothesis)}\\
\leq& (2w^2)^{k} \cdot 2 w^2~~~\text{(by Lemma \ref{LemmaBRRY})}\\
=&(2w^2)^{k+1},\\
\end{align*}
as required.
\end{proof}

\section{Random Restrictions} \label{SubSectionCoin}

Our results involve restricting branching programs. However, our use of restrictions is different from elsewhere in the literature. Here, as in \cite{GMRTV}, we use (pseudorandom) restrictions in the usual way, but we \emph{analyse} them by averaging over the \emph{unrestricted} bits. Formally, we define a restriction as follows.

\begin{definition}
For $t \in \{0,1\}^n$ and a length-$n$ branching program $B$, let $B|_t$ be the \textbf{restriction} of $B$ to $t$---that is, $B|_t : \{0,1\}^n \to \mathbb{R}^{w \times w}$ is a matrix-valued function given by $B|_t [x] := \ex{U}{B[\mathrm{Select}(t,x,U)]}$, where $U$ is uniform on $\{0,1\}^n$. 
\end{definition}
Here $\text{Select}$ takes a set $t \subset [n]$, a string $x \in \{0,1\}^n$, and a string $y$ of length at least $n-|t|$ and produces a string of length $n$ given by $$\text{Select}(t,x,y)(i) = \left\{ \begin{array}{cl} x(i) & i \in t\\ y(|[i]\backslash t|) & i \in [n]\backslash t  \end{array} \right\}.$$ 
Intuitively, $\text{Select}(t,x,y)$ `stretches' $y$ by `skipping' the bits in $t$ and using bits from $x$ instead. For example, $\mathrm{Select}(0101000,1111111,00001)=0101001$.

The most important aspect of restrictions is how they relate to the Fourier transform: For all $B$, $s$, and $t$, we have $\widehat{B|_t}[s] = \widehat{B}[s]$ if $s \subset t$ and $\widehat{B|_t}[s] = 0$ otherwise. The restriction $t$ `kills' all the Fourier coefficients that are not contained in it. This means that a restriction significantly reduces the Fourier mass: 

\begin{lemma} \label{LemmaRestrictionMass}
Let $B$ be a length-$n$, width-$w$ program. Let $T$ be $n$ independent random bits each with expectation $p$. Then $$\ex{T}{L_2(B|_T)} = \sum_{s \ne 0} p^{|s|} \norm{\widehat{B}[s]}_2.$$
\end{lemma}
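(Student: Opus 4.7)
The plan is to first verify the formula for the Fourier coefficients of a restriction (which the paragraph just before the lemma asserts without proof), and then take expectation over $T$ and simplify using linearity and independence of the bits of $T$.

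First I would show directly from the definition that $\widehat{B|_t}[s] = \widehat{B}[s]$ if $s \subset t$ and $\widehat{B|_t}[s] = 0$ otherwise. Writing $\widehat{B|_t}[s] = \ex{U,V}{B[\Select(t,U,V)]\,\chi_s(U)}$ with $U,V$ independent and uniform, observe that $W := \Select(t,U,V)$ is uniform on $\zo^n$. If $s \subset t$, then for $i \in s$ we have $W(i) = U(i)$, so $\chi_s(U) = \chi_s(W)$ and the expectation collapses to $\ex{W}{B[W]\chi_s(W)} = \widehat{B}[s]$. If $s \not\subset t$, pick $i_0 \in s \setminus t$; then $B[\Select(t,U,V)]$ does not depend on $U(i_0)$, but $\chi_s(U)$ does, and averaging over $U(i_0)$ alone gives zero.

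Given that identity, the rest is bookkeeping. For each fixed $t \in \zo^n$,
\[
L_2(B|_t) = \sum_{s \ne 0} \norm{\widehat{B|_t}[s]}_2 = \sum_{\substack{s \ne 0 \\ s \subset t}} \norm{\widehat{B}[s]}_2.
\]
Taking expectation over $T$ and swapping sum and expectation by linearity,
\[
\ex{T}{L_2(B|_T)} = \sum_{s \ne 0} \norm{\widehat{B}[s]}_2 \cdot \pr{T}{s \subset T}.
\]
Since the coordinates of $T$ are independent and each is included with probability $p$, $\pr{T}{s \subset T} = p^{|s|}$, yielding the claimed formula.

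I do not expect a real obstacle here; the only subtlety is the Fourier identity for restrictions, and that is a short calculation by the argument above. Everything else is linearity of expectation combined with independence of the coordinates of $T$.
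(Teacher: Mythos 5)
Your proof is correct and follows essentially the same route as the paper: condition on $t$, note that only coefficients with $s \subset t$ survive, then take expectation and use $\pr{T}{s \subset T} = p^{|s|}$. The only addition is your explicit verification of the identity $\widehat{B|_t}[s] = \widehat{B}[s]$ for $s \subset t$ and $0$ otherwise, which the paper asserts without proof just before the lemma, and your argument for it is correct.
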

\begin{proof}
$$\ex{T}{L_2(B|_T)} = \ex{T}{\sum_{s \ne 0 : s \subset T} \norm{\widehat{B}[s]}_2} = \sum_{s \ne 0} \pr{T}{s \subset T} \norm{\widehat{B}[s]}_2 = \sum_{s \ne 0} p^{|s|} \norm{\widehat{B}[s]}_2.$$
\end{proof}

We will overload notation as follows. Let $B = B' \circ B''$ be a branching program, where $B'$ has length $n'$ and $B''$ has length $n''$ and $B$ has length $n=n'+n''$. For $t \in \{0,1\}^n$, we define $B'|_t = B'|_{t'}$ and $B''|_t = B'|_{t''}$ where $t' \in \{0,1\}^{n'}$, $t'' \in \{0,1\}^{n''}$ and $t = t' \circ t''$. Then $B|_t = B'|_{t'} \circ B''|_{t''} = B'|_t \circ B''|_t$.

Now we use Theorem \ref{TheoremLow} to prove a result about random restrictions of regular branching programs:

\begin{proposition} \label{PropositionRandomRestriction}
Let $B$ be a length-$n$, width-$w$, read-once, oblivious, regular branching program. Let $T$ be $n$ independent random bits each with expectation $p < 1/2w^2$. Then $$\ex{T}{L_2(B|_T)} \leq \frac{2w^2 \cdot p}{1-2 w^2 \cdot p}.$$
\end{proposition}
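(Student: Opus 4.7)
The proof should be a direct consequence of the two results just established, combined via a geometric series.

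The plan is to first apply Lemma \ref{LemmaRestrictionMass} to rewrite the expected Fourier mass of the restriction as
$$\ex{T}{L_2(B|_T)} = \sum_{s \neq 0} p^{|s|} \norm{\widehat{B}[s]}_2.$$
Then I would group Fourier coefficients by Hamming weight $k = |s|$, pulling out the factor $p^k$, so that the sum becomes $\sum_{k=1}^n p^k \cdot L_2^k(B)$.

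Next, I would apply Theorem \ref{TheoremLow} to each level separately: $L_2^k(B) \leq (2w^2)^k$, valid because $B$ is a read-once, oblivious, regular branching program of width $w$. This gives the bound
$$\ex{T}{L_2(B|_T)} \leq \sum_{k=1}^n p^k (2w^2)^k = \sum_{k=1}^n (2w^2 p)^k.$$

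Finally, since the hypothesis $p < 1/2w^2$ guarantees $2w^2 p < 1$, the finite sum is bounded by the corresponding infinite geometric series, yielding
$$\sum_{k=1}^n (2w^2 p)^k \leq \sum_{k=1}^\infty (2w^2 p)^k = \frac{2w^2 p}{1 - 2w^2 p},$$
which is exactly the claimed bound. There is no genuine obstacle here: all the substantive work has been carried out in Theorem \ref{TheoremLow} (which in turn rests on the BRRY-style Lemma \ref{LemmaBRRY}), and the role of the condition $p < 1/2w^2$ is simply to ensure convergence of the geometric series so that the per-level bound $(2w^2)^k$, which grows exponentially in $k$, is dominated by the shrinking factor $p^k$.
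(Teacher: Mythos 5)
Your proposal is correct and matches the paper's proof essentially step for step: apply Lemma \ref{LemmaRestrictionMass}, group coefficients by Hamming weight, invoke Theorem \ref{TheoremLow} level by level, and bound the resulting sum by the infinite geometric series $\sum_{k\geq 1}(2w^2 p)^k = \frac{2w^2 p}{1-2w^2 p}$, which converges precisely because $p < 1/2w^2$. No differences worth noting.
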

In particular, if $p \leq 1/4w^2$, then $\ex{T}{L_2(B|_T)} \leq 1$.
\begin{proof}
By Lemma \ref{LemmaRestrictionMass}, we have,
\begin{align*}
\ex{T}{L_2(B|_T)} =& \sum_{s \ne 0} p^{|s|} \cdot \norm{\widehat{B}[s]}_2\\
=& \sum_{1 \leq k \leq n} p^k \cdot \sum_{s : |s|=k} \norm{\widehat{B}[s]}_2\\
\leq& \sum_{1 \leq k \leq n} p^k \cdot (2w^2)^{k}~~~~~\text{(by Theorem \ref{TheoremLow})}\\
\leq& \sum_{k \geq 1} (2w^2 \cdot p)^{k}\\
=& \frac{2 w^2 \cdot p}{1-2 w^2 \cdot p}.\\
\end{align*}
\end{proof}

\paragraph{Relation to Coin Theorem}

The Coin Theorem of Brody and Verbin~\cite{BrodyVerbin} shows that general (non-regular) oblivious, read-once branching programs of width $w$ cannot distinguish $n$ independent and unbiased coin flips from ones with bias $1/(\log n)^{\Theta(w)}$, and they show that this bound is the best possible. Braverman et al.~\cite{BRRY} show that regular, oblivious, read-once branching programs of width $w$ cannot distinguish coins with bias $\Theta(1/w)$ from unbiased ones. (They state this result in terms of $\alpha$-biased spaces.)

If $Z$ is $n$ independent coin flips with bias $p$ and $B$ is a branching program, then $$\norm{ \ex{Z}{B[Z]} - \ex{U}{B[U]} } = \norm{ \sum_{s \ne 0} \widehat{B}[s] \widehat{Z}(s) } = \norm{ \sum_{s \ne 0} p^{|s|} \widehat{B}[s] }.$$
Thus Proposition \ref{PropositionRandomRestriction} implies a Coin Theorem showing that read-once, oblivious, regular branching programs cannot distinguish coins with bias $p$ for some $p=\Theta(1/w^2)$ from unbiased ones. This Coin Theorem is weaker than the Braverman et al.~result, which gives $p=\Theta(1/w)$. However, Proposition \ref{PropositionRandomRestriction} gives more than a Coin Theorem, as the sum is taken outside the norm---that is, we bound $$\sum_{s \ne 0} p^{|s|} \norm{\widehat{B}[s]}_2.$$ This distinction is important for our purposes, as it will allow us to reason about small-bias distributions and restrictions together.

\section{Pseudorandom Restrictions} \label{SubSectionPRrestriction}

To analyse our generator, we need a pseudorandom version of Proposition \ref{PropositionRandomRestriction}. That is, we need to prove that, for a \emph{pseudorandom} $T$ (generated using few random bits), $L_2(B|_T)$ is small. We will generate $T$ using an almost $O(\log n)$-wise independent distribution:

\begin{definition} \label{DefinitionLimitedIndependence}
A random variable $X$ on $\Omega^n$ is \textbf{$\delta$-almost $k$-wise independent} if, for any\\ $I=\{i_1, i_2, \cdots, i_k\} \subset [n]$ with $|I| = k$, the coordinates $(X_{i_1}, X_{i_2}, \cdots, X_{i_k}) \in \Omega^k$ are $\delta$ statistically close to being independent---that is, for all $T \subset \Omega^k$, $$\left|\sum_{x \in T} \left( \pr{X}{(X_{i_1},X_{i_2},\cdots,X_{i_k}) = x} - \prod_{l \in [k]} \pr{X}{X_{i_l} = x_l} \right) \right| \leq \delta.$$ We say that $X$ is \textbf{$k$-wise independent} if it is $0$-almost $k$-wise independent.
\end{definition}

We can sample a random variable $X$ on $\{0,1\}^n$ that is $\delta$-almost $k$-wise independent such that each bit has expectation $p=2^{-d}$ using $O(kd+\log(1/\delta)+d\log(nd))$ random bits. See Lemma \ref{LemmaLimitedIndependenceSeed} for more details.

Our main lemma (stated informally as Theorem \ref{thm:mainlemma-intro}) is as follows.

\begin{theorem}[Main Lemma] \label{TheoremMainLemma}
Let $B$ be a length-$n$, width-$w$, read-once, oblivious, regular branching program. Let $T$ be a random variable over $\{0,1\}^n$ where each bit has expectation $p$ and the bits are $\delta$-almost $2k$-wise independent. Suppose $p \leq (2w)^{-2}$ and $\delta \leq (2w)^{-4k}$. Then $$\pr{T}{L_2(B|_T) \leq (2w^2)^k } \geq 1 -  n^4 \cdot \frac{2}{2^k}.$$
\end{theorem}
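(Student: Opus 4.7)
The plan is to combine a Fourier-decomposition argument with a moment-method concentration bound. The two ingredients are:~(i) the restricted Fourier mass of every \emph{subprogram} of $B$ at every level $\le k$ is tightly concentrated around its expectation, exploiting $2k$-wise independence, and (ii) a canonical partition of $[n]$ based on $T$ writes every Fourier coefficient $\widehat{B}[s]$ (for $s\subset T$) as a product of low-order Fourier coefficients of subprograms.

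\emph{Step 1 (expectation and variance per subprogram/level).} Fix an interval $I\subset [n]$ and a level $l\in\{1,\dots,k\}$. Write
\[ L_2^l(B_I|_T) \;=\; \sum_{s\subset I,\,|s|=l} \mathbf{1}[s\subset T]\,\|\widehat{B_I}[s]\|_2. \]
Since $l\le 2k$, the $\delta$-almost $2k$-wise independence of $T$ gives $|\Pr[s\subset T]-p^l|\le\delta$ for each $s$ in the sum, so by Theorem~\ref{TheoremLow}
\[ \Exp_T\bigl[L_2^l(B_I|_T)\bigr] \;\le\; (p^l+\delta)\,(2w^2)^l. \]
For the variance, each pair $(s,s')$ with $|s|,|s'|\le k$ involves at most $2k$ coordinates, so again $2k$-wise independence controls $\Pr[s\cup s'\subset T]$. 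A Chebyshev-type bound then yields, for appropriately chosen thresholds $\tau_l$,
\[ \Pr_T\bigl[L_2^l(B_I|_T)>\tau_l\bigr] \;\le\; \frac{2}{n^4\cdot 2^k}. \]
The choice $p\le (2w)^{-2}$ forces $p\cdot(2w^2)\le \tfrac12$, which makes the expectations geometrically decaying in~$l$, and the choice $\delta\le(2w)^{-4k}$ ensures the additive error from almost-independence is negligible at all scales we care about.

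\emph{Step 2 (union bound).} Union-bound over the $\binom{n+1}{2}\le n^2/2$ intervals $I$ and the $k\le n^2$ levels $l$. With probability at least $1-n^4\cdot 2/2^k$, the good event
\[ G \;=\; \bigl\{\,\forall\,I\subset[n],\,\forall\,l\le k:\ L_2^l(B_I|_T)\le \tau_l\,\bigr\} \]
holds. In particular, under $G$, for every interval $I$, $L_2^{\le k}(B_I|_T)\le \sum_{l=1}^k \tau_l$ is controlled.

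\emph{Step 3 (Fourier decomposition).} Conditional on a value of $T$ satisfying $G$, define a canonical partition of $[n]$ into consecutive intervals $I_1,\dots,I_m$ such that $|T\cap I_j|\le k$ for every $j$ (e.g.\ scan $T$ left to right and cut after every $k$-th one). For any nonempty $s\subset T$, the Decomposition property of Lemma~\ref{LemmaFourier} gives $\widehat{B}[s]=\prod_j \widehat{B_{I_j}}[s\cap I_j]$, and since each $|s\cap I_j|\le k$, only level-at-most-$k$ Fourier coefficients of subprograms appear. Using submultiplicativity of the spectral norm and $\|\widehat{B_{I_j}}[\emptyset]\|_2\le 1$ (because $B_{I_j}$ is regular and thus doubly stochastic in expectation),
\[ L_2(B|_T) \;=\; \sum_{\emptyset\ne s\subset T}\|\widehat{B}[s]\|_2 \;\le\; \prod_{j=1}^{m}\bigl(1+L_2^{\le k}(B_{I_j}|_T)\bigr)\;-\;\prod_{j=1}^m\|\widehat{B_{I_j}}[\emptyset]\|_2. \]
Combining with $G$ and a careful tuning of the thresholds $\tau_l$ (so that the cumulative low-level mass per interval stays small enough for the product to telescope), this is at most $(2w^2)^k$.

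\emph{Main obstacle.} The delicate part is the coordination between Steps~1-3: the thresholds $\tau_l$ must be small enough that the product $\prod_j(1+L_2^{\le k}(B_{I_j}|_T))$ over all $m$ canonical intervals is bounded by $(2w^2)^k$, while simultaneously being large enough relative to the variance to give Chebyshev failure $\le 2/(n^4 2^k)$ per $(I,l)$. The number of blocks $m$ can be as large as $|T|/k$, so proving the product bound requires using that the expectation of each factor is $O(pw^2)\ll 1$ rather than just $O(1)$; this is precisely where the assumption $p\le (2w)^{-2}$ bites. The role of $\delta\le (2w)^{-4k}$ is to ensure the additive error from $2k$-wise almost-independence remains swamped by the $p^l$ factors at every level up to $k$.
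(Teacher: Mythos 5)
There is a genuine gap, and it lies in Step 3 (with a knock-on problem in Step 1). The bound $L_2(B|_t) \le \prod_j\bigl(1+L_2^{\le k}(B_{I_j}|_t)\bigr)-\prod_j\|\widehat{B_{I_j}}[\emptyset]\|_2$ over a partition into $m\approx |T|/k = \Theta(pn/k)$ blocks is exponentially lossy and cannot be brought down to $(2w^2)^k$. The per-block excess $L_2^{\le k}(B_{I_j}|_T)$ has expectation of order $p\cdot w^2$ (a constant under $p\le(2w)^{-2}$), not $O(k/m)$, so the product is typically $2^{\Omega(pn/k)}$ even when the true mass is tiny. A concrete regular example: width $2$, every layer maps both states to the state named by the input bit (so $B$ outputs the last bit read). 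Here $\widehat{B_{I_j}}[s]=0$ unless $s$ is the singleton at the right end of the block, where it has norm $1$; with your canonical "cut after each $k$-th one" rule every block ends at a coordinate of $T$, so every factor is $1+1$ and your bound is $2^{m}$, while the true $L_2(B|_T)\le 1$. The same example shows Step 1 cannot be repaired by tuning: at level $l=1$ a single coefficient of constant norm lies in $T$ with probability about $p$, so no threshold $\tau_1=o(1)$ can hold with failure probability $2/(n^4 2^k)$ (and Chebyshev in any case gives only polynomial, not $2^{-k}$, tails unless the mean itself is that small); but constant thresholds make the product blow up. In short, the product-over-blocks decomposition discards exactly the inter-block contraction that makes the true Fourier mass small.

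The paper's proof sidesteps both issues. Levels $0<k'<k$ of the whole program are bounded \emph{deterministically}: restriction only deletes coefficients, so $\sum_{0<k'<k}L_2^{k'}(B|_T)\le\sum_{0<k'<k}(2w^2)^{k'}\le (2w^2)^k-1$ by Theorem \ref{TheoremLow}. For levels $k\le k'<2k$ of \emph{every} subprogram $B_{i\cdots j}$, no concentration is needed: the expectation $\Exp_T[L_2^{k'}(B_{i\cdots j}|_T)]\le (2w^2)^{k'}(p^{k'}+\delta)\le 2\cdot 2^{-k'}$ is already exponentially small, so plain Markov plus a union bound over the $\le n^2$ subprograms and $k$ levels gives the $1-n^4\cdot 2/2^k$ probability (Lemma \ref{LemmaKbound}). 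Finally, levels $k''\ge 2k$ are handled by Lemma \ref{LemmaWellOrder}: split each weight-$k''$ coefficient at the position of its $k$-th one, so it is a product (Lemma \ref{LemmaFourier}, Decomposition) of a weight-$k$ coefficient of a prefix subprogram and a weight-$(k''-k)$ coefficient of a suffix subprogram, each of whose restricted level masses is at most $1/n$ on the good event; summing over the $\le n$ split positions keeps the bound $1/n$, and summing over all levels $\ge k$ contributes at most $1$. So the paper multiplies two quantities that are each $\le 1/n$, rather than many factors of size $1+\Theta(1)$, which is the idea your argument is missing.
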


In particular, we show that, for $w=O(1)$, $k=O(\log n)$, and $\delta=1/\text{poly}(n)$, we have $L_2(B|_T) \leq \text{poly}(n)$ with probability $1-1/\text{poly}(n)$.

First we show that the Fourier mass at level $O(\log n)$ is bounded by $1/n$ with high probability. This also applies to all subprograms---that is, $$\pr{T}{\forall i,j ~~L_2^k(B_{i \cdots j}|_T) \leq 1/n} \geq 1-1/{\text{poly}(n)}.$$

\begin{lemma} \label{LemmaKbound}
Let $B$ be a length-$n$, width-$w$, ordered, regular branching program. Let $T$ be a random variable over $\{0,1\}^n$ where each bit has expectation $p$ and the bits are $\delta$-almost $k$-wise independent. If $p \leq (2w)^{-2}$ and $\delta \leq (2w)^{-2k}$, then, for all $\beta > 0$, $$\pr{T}{\forall 1 \leq i \leq j \leq n ~~ L_2^k(B_{i \cdots j}|_T) \leq \beta} \geq  1 - n^2\frac{2}{2^k \beta}.$$
\end{lemma}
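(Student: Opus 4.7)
The plan is to fix a single subprogram $B_{i \cdots j}$, bound the expectation $\ex{T}{L_2^k(B_{i \cdots j}|_T)}$, apply Markov's inequality, and then take a union bound over the $O(n^2)$ choices of $(i,j)$.

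For a fixed pair $(i,j)$, I would expand the level-$k$ Fourier mass of the restriction using the key property that $\widehat{B_{i\cdots j}|_T}[s] = \widehat{B_{i \cdots j}}[s]$ when $s \subset T$ (restricted to the relevant coordinates) and is zero otherwise. This yields
\[
\ex{T}{L_2^k(B_{i\cdots j}|_T)} \;=\; \sum_{s:|s|=k} \pr{T}{s \subset T_{[i,j]}}\,\norm{\widehat{B_{i\cdots j}}[s]}_2.
\]
Since each $s$ involves exactly $k$ coordinates, $\delta$-almost $k$-wise independence (applied to the singleton event that those $k$ bits are all $1$, as in Definition \ref{DefinitionLimitedIndependence}) gives $\pr{T}{s \subset T_{[i,j]}} \leq p^k + \delta$. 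Now $B_{i\cdots j}$ is itself a length-$(j-i+1)$, width-$w$, ordered, regular branching program, so Theorem \ref{TheoremLow} bounds $\sum_{s:|s|=k} \norm{\widehat{B_{i\cdots j}}[s]}_2 \leq (2w^2)^k$. Combining,
\[
\ex{T}{L_2^k(B_{i\cdots j}|_T)} \;\leq\; (p^k + \delta)\,(2w^2)^k.
\]
The hypotheses are tailored so each term is at most $2^{-k}$: $p \leq (2w)^{-2}$ gives $p^k(2w^2)^k \leq 2^{-k}$, and $\delta \leq (2w)^{-2k}$ gives $\delta(2w^2)^k \leq 2^{-k}$. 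Hence the expectation is at most $2/2^k$.

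Next I would apply Markov's inequality to get $\pr{T}{L_2^k(B_{i\cdots j}|_T) > \beta} \leq 2/(2^k \beta)$ for each fixed $(i,j)$, and then union-bound over the at most $n^2$ pairs $1 \leq i \leq j \leq n$ to conclude
\[
\pr{T}{\exists (i,j):\ L_2^k(B_{i\cdots j}|_T) > \beta} \;\leq\; n^2 \cdot \frac{2}{2^k \beta},
\]
which is the stated bound.

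There is no serious obstacle here; the only subtlety is making sure that almost $k$-wise independence is invoked correctly for events that depend on exactly $k$ coordinates (so the approximation error is additive $\delta$, not something worse). The calibration $p \leq (2w)^{-2}$ and $\delta \leq (2w)^{-2k}$ is exactly what is needed to dominate the factor $(2w^2)^k$ coming from Theorem \ref{TheoremLow} at each of the two terms, giving the clean $2/2^k$ bound on the expectation.
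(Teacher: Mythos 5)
Your proof is correct and follows essentially the same route as the paper's: bound $\ex{T}{L_2^k(B_{i\cdots j}|_T)}$ by $(p^k+\delta)(2w^2)^k \leq 2/2^k$ using Theorem \ref{TheoremLow} and almost $k$-wise independence, then apply Markov's inequality and a union bound over the at most $n^2$ subprograms. The calibration of $p$ and $\delta$ matches the paper's argument exactly.
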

\begin{proof}
By Theorem \ref{TheoremLow}, for all $i$ and $j$, $$\ex{T}{L_2^k(B_{i \cdots j}|_T)} = \sum_{s \subset \{i \cdots j\} : |s|=k} \pr{T}{s \subset T} \norm{\widehat{B_{i \cdots j}}[s]}_2 \leq (2w^2)^k (p^k + \delta) \leq \frac{2}{2^k}.$$ The result now follows from Markov's inequality and a union bound.
\end{proof}

Now we use Lemma \ref{LemmaKbound} to bound the Fourier mass at higher levels. We decompose high-order ($k'\geq2k$) Fourier coefficients into low-order ($k \leq k' < 2k$) ones:

\begin{lemma} \label{LemmaWellOrder}
Let $B$ be a length-$n$, ordered branching program and $t \in \{0,1\}^n$. Suppose that, for all $i$, $j$, and $k'$ with $1 \leq i \leq j \leq n$ and $k \leq k' < 2k$, $L_2^{k'}(B_{i \cdots j}|_t) \leq 1/n$. Then, for all $k'' \geq k$ and all $i$ and $j$, $L_2^{k''}(B_{i \cdots j}|_t) \leq 1/n$.
\end{lemma}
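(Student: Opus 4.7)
The plan is strong induction on $k''$. The base case $k \leq k'' < 2k$ is immediate from the hypothesis, so I would focus on the inductive step with $k'' \geq 2k$, assuming the conclusion already holds for every $k''' \in [k, k'')$ and every subprogram of $B$.

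Fix $i \leq j$. For each $s \in \{0,1\}^{j-i+1}$ of weight $k''$, canonically set $j^*(s) \in [i,j]$ to be the (absolute) position of the $k$-th one of $s$, and split $s = s' \circ s''$ along that boundary so that $|s'| = k$ and $|s''| = k'' - k$. By Lemma \ref{LemmaFourier} (Decomposition) and submultiplicativity of the spectral norm,
\[
\norm{\widehat{B_{i \cdots j}|_t}[s]}_2 \;\leq\; \norm{\widehat{B_{i \cdots j^*}|_t}[s']}_2 \cdot \norm{\widehat{B_{j^*+1 \cdots j}|_t}[s'']}_2.
\]
Summing over all $s$ of weight $k''$ grouped by $j^* = j^*(s)$, and then relaxing the inner sum to range over all weight-$k$ strings on $[i,j^*]$ (an over-count, since I drop the constraint that $s'$ have its $k$-th one at $j^*$), I obtain
\[
L_2^{k''}(B_{i \cdots j}|_t) \;\leq\; \sum_{j^*=i}^{j-1} L_2^k(B_{i \cdots j^*}|_t) \cdot L_2^{k''-k}(B_{j^*+1 \cdots j}|_t) \;\leq\; n \cdot \frac{1}{n} \cdot \frac{1}{n} \;=\; \frac{1}{n},
\]
where the first factor is at most $1/n$ by the hypothesis (since $k \in [k,2k)$) and the second is at most $1/n$ by the inductive hypothesis (since $k \leq k''-k < k''$). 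This closes the induction.

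The only real subtlety is the canonical choice of split point. If $j^*$ were not tied to a specific feature of $s$, a single Fourier coefficient could be charged against many values of $j^*$ at once and the $n \cdot (1/n)^2$ bookkeeping would blow up; anchoring $j^*$ to the $k$-th one of $s$ guarantees that each $s$ is accounted for exactly once, and the subsequent relaxation to full level-$k$ and level-$(k''-k)$ masses of the two subprograms is only a harmless over-approximation. Everything else is routine application of the identities already established for the matrix-valued Fourier transform.
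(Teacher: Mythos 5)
Your proof is correct and is essentially the paper's argument: the paper proves the same statement via a minimal-counterexample formulation (equivalent to your strong induction), splitting each weight-$k''$ coefficient at a boundary into a weight-$k$ prefix and weight-$(k''-k)$ suffix, applying Decomposition and submultiplicativity, and bounding by $\sum_l L_2^k \cdot L_2^{k''-k} \leq n\cdot(1/n)^2$. Your explicit anchoring of the split at the $k$-th one is a harmless refinement of the paper's (slightly over-counting but still valid) sum over all split points.
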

\begin{proof}
Suppose otherwise and let $k''$ be the smallest $k''\geq k$ such that $L_2^{k''}(B_{i \cdots j}|_t) > 1/n$ for some $i$ and $j$. Clearly $k'' \geq 2k$. So, by minimality, the result holds for $k''-k$. Fix $i$ and $j$. Now
\begin{align*}
L_2^{k''}(B_{i \cdots j}|_t) =& \sum_{s \in \{0,1\}^{j-i+1} : |s|=k''} \norm{\widehat{B_{i \cdots j}}[s]}_2\\
\leq& \sum_{l \in \{i \cdots j\}} \sum_{s \in \{0,1\}^{l-i+1} : |s|=k} \sum_{s' \in \{0,1\}^{j-l} : |s'|=k''-k} \norm{\widehat{B_{i \cdots l}}[s] \cdot \widehat{B_{l+1 \cdots j}}[s']}_2\\
\leq& \sum_{l \in \{i \cdots j\}} \left(\sum_{s \in \{0,1\}^{l-i+1} : |s|=k} \norm{\widehat{B_{i \cdots l}}[s]}_2 \right) \left( \sum_{s' \in \{0,1\}^{j-l} : |s'|=k''-k} \norm{\widehat{B_{l+1 \cdots j}}[s']}_2 \right)\\
\leq& \sum_{l \in \{i \cdots j\}} \frac{1}{n^2}\\
\leq& \frac{1}{n}.\\
\end{align*}
Since $i$ and $j$ were arbitrary, this contradicts our supposition and proves the result.
\end{proof}

\begin{proof}[Proof of Theorem \ref{TheoremMainLemma}]
By Lemma \ref{LemmaFourierPermutation}, we may assume that $B$ is ordered. By Lemma \ref{LemmaKbound} and a union bound, $$\pr{T}{\forall k \leq k' < 2k ~ \forall 1 \leq i \leq j \leq n ~~ L_2^{k'}(B_{i \cdots j}|_T) \leq \frac{1}{n}} \geq  1 - n^4 \cdot \frac{2}{2^k}.$$ Lemma \ref{LemmaWellOrder} thus implies that $$\pr{T}{\forall k'' \geq k ~ \forall 1 \leq i \leq j \leq n ~~ L_2^{k''}(B_{i \cdots j}|_T) \leq \frac{1}{n}} \geq  1 - n^4 \cdot \frac{2}{2^k}.$$ Thus $$\pr{T}{\sum_{k'' \geq k} L_2^{k''}(B|_T) \leq 1} \geq 1 - n^4 \cdot \frac{2}{2^k}.$$ By Theorem \ref{TheoremLow}, $$\sum_{0 < k' < k} L_2^{k'}(B|_T)  \leq \sum_{0< k' < k} L_2^{k'}(B)  \leq \sum_{0 \leq k' \leq k-1} (2w^2)^{k'} = \frac{(2w^2)^k-1}{2w^2-1} \leq (2w^2)^k-1.$$ The result now follows.
\end{proof}

\section{The Pseudorandom Generator} \label{SubSectionPRG}

Our main result Theorem \ref{thm:main-intro} is stated more formally as follows.

\begin{theorem}[Main Result] \label{TheoremPRG}
There exists a pseudorandom generator family $G_{n, w, \varepsilon} : \{0,1\}^{s_{n, w, \varepsilon}} \to \{0,1\}^n$ with seed length $$s_{n, w, \varepsilon} = O(w^2 \log(w) \log(n) \log(nw/\varepsilon) + w^4 \log^2(w/\varepsilon))$$ such that, for any length-$n$, width-$w$, read-once, oblivious (but unordered), permutation branching program $B$ and $\varepsilon > 0$, $$\norm{\ex{U_{s_{n, w, \varepsilon}}}{B[G_{n,w,\varepsilon}(U_{s_{n, w, \varepsilon}})]}-\ex{U}{B[U]}}_2 \leq \varepsilon.$$ Moreover, $G_{n, w, \varepsilon}$ can be computed in space $O(s_{n,w,\varepsilon})$.
\end{theorem}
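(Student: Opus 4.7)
The plan is to follow Gopalan et al.~\cite{GMRTV} and define $G_{n,w,\varepsilon}$ recursively. At each level, sample a pseudorandom subset $T \subset [n]$ where each coordinate is included independently with probability $p_w := (2w)^{-2}$, subject to being $\delta$-almost $2k$-wise independent with $k = \Theta(\log(nw/\varepsilon))$ and $\delta = (2w)^{-4k}$ (matching the hypotheses of Theorem \ref{TheoremMainLemma}); sample $X \in \zo^n$ from an $\varepsilon_0$-biased distribution; recursively sample $Y \leftarrow G_{n-|T|,\, w,\, \varepsilon'}$; and output $\Select(T, X, Y)$. The recursion halts at a base case once $n$ drops below a threshold $n_0 = \Theta(w^2 \log(w/\varepsilon))$, at which point a direct small-bias PRG (with bias tuned via Theorem \ref{TheoremLow}) suffices.

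Correctness is established by a hybrid argument at each of the $\ell = O(\log n / p_w) = O(w^2 \log n)$ recursion levels. Fix a level against a program $B$, and introduce the hybrids $H_0 := \ex{T, X, Y}{B[\Select(T, X, Y)]}$, $H_1 := \ex{T, X, U'}{B[\Select(T, X, U')]}$, and $H_2 := \ex{U}{B[U]}$. The bound $\norm{H_0 - H_1}_2 \leq \varepsilon'$ follows inductively: for every fixing of $T = t$ and $X = x$, the map $y \mapsto B[\Select(t, x, y)]$ is itself a length-$(n-|t|)$, width-$w$, read-once, oblivious \emph{permutation} branching program, since partial assignment of input bits preserves the permutation property. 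For $\norm{H_1 - H_2}_2$, rewrite the difference as $\ex{T}{\ex{X}{B|_T[X]} - \ex{U'}{B|_T[U']}}$; Theorem \ref{TheoremMainLemma} ensures $L_2(B|_T) \leq (2w^2)^k$ except with probability $2n^4/2^k$, and then Lemma \ref{LemmaBiasMass} controls the small-bias error by $(2w^2)^k \cdot \varepsilon_0$. Choosing $k$ and $\varepsilon_0$ so that each level contributes error $\varepsilon / \ell$ yields total error $\varepsilon$ after summing.

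For the seed length, the per-level budget is $O(\log w \log(nw/\varepsilon))$: the almost $2k$-wise independent set $T$ with marginal $p_w = 2^{-d}$ (for $d = O(\log w)$) requires $O(kd + \log(1/\delta) + d \log(nd))$ bits, and the $\varepsilon_0$-biased $X$ requires $O(\log(n/\varepsilon_0))$ bits. Multiplying by $\ell = O(w^2 \log n)$ recursion levels yields the first term $O(w^2 \log w \log n \log(nw/\varepsilon))$. The additive $O(w^4 \log^2(w/\varepsilon))$ term arises from the base case: once $n$ drops below $n_0$, the direct small-bias PRG must have bias small enough to overcome the worst-case Fourier mass bound $(2w^2)^{n_0}$ from Theorem \ref{TheoremLow}, which contributes this additional cost after optimization.

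The main obstacle, and the reason the theorem is stated for permutation rather than regular programs, is that the inductive step requires the residual program to continue satisfying the hypotheses of the Main Lemma at every recursion level. Permutation programs are closed under partial assignment of input bits, so the hypothesis propagates; regularity alone is \emph{not} preserved under restriction (as noted in the introduction), and the Fourier mass analysis would break down in the recursive step if we only assumed regularity. A secondary subtlety is that $|T|$ is a random variable, so the recursion depth is not a priori fixed; this is handled by invoking concentration of $|T|$ (which follows from the almost $2k$-wise independence for $k = \Omega(\log n)$) to ensure the recursion depth is close to $\ell$ with overwhelming probability, folding worst-case branches into the error budget.
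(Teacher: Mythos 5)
Your proposal is correct and is essentially the paper's own construction and analysis: the same pseudorandom-restriction generator (almost $O(\log n)$-wise independent $T$ with inclusion probability $\Theta(1/w^2)$, small-bias assignment on $T$, recursion on the rest), with the one-step error controlled by Theorem \ref{TheoremMainLemma} plus Lemma \ref{LemmaBiasMass} and the inductive step justified by closure of permutation programs under restriction. The only deviations are implementation details: the paper recurses on the fixed length $\lfloor n(1-p/2)\rfloor$ and treats $|T|<pn/2$ as a low-probability failure (rather than recursing on the random length $n-|T|$), and its base case simply outputs $n\leq O((\log(1/\varepsilon)/p)^2)$ truly random bits, which is where the additive $O(w^4\log^2(w/\varepsilon))$ term actually arises and which lets the weaker almost-$k$-wise Chernoff bound suffice for the concentration of $|T|$; your smaller base-case threshold would need the sharper moment bound of Lemma \ref{LemmaChernoffGeneral} to make that concentration go through.
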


The following lemma gives the basis of our pseudorandom generator.

\begin{lemma} \label{LemmaOneStep}
Let $B$ be a length-$n$, width-$w$, read-once, oblivious, regular branching program. Let $\varepsilon \in (0,1)$. Let $T$ be a random variable over $\{0,1\}^n$ that is $\delta$-almost $2k$-wise independent and each bit has expectation $p$, where we require $$p \leq 1/4w^2 ,~~~~ k \geq \log_2 \left( 4 \sqrt{w} n^4 / \varepsilon \right), ~~~~\text{and}~~~~\delta \leq (2w)^{-4k}.$$ Let $U$ be uniform over $\{0,1\}^n$. Let $X$ be a $\mu$-biased random variable over $\{0,1\}^n$ with $\mu \leq \varepsilon (2w^2)^{-k}$. Then $$\norm{\ex{T,X,U}{B[\mathrm{Select}(T,X,U)]} - \ex{U}{B[U]}}_2 \leq 2\varepsilon.$$
\end{lemma}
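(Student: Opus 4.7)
The plan is to reduce the claim to the Main Lemma (Theorem \ref{TheoremMainLemma}) plus Lemma \ref{LemmaBiasMass}, via a good-event/bad-event split over the choice of $T$.

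First I would rewrite both sides in terms of the restriction $B|_T$. By the definition of restriction, $\ex{U}{B[\Select(t,x,U)]} = B|_t[x]$, so $\ex{T,X,U}{B[\Select(T,X,U)]} = \ex{T}{\ex{X}{B|_T[X]}}$. On the other hand, for any fixed $t$, the distribution $\Select(t,U',U)$ with independent uniform $U',U$ is simply uniform on $\zo^n$, so $\ex{U}{B[U]} = \ex{T}{\ex{U'}{B|_T[U']}}$. Pulling the outer expectation over $T$ out of the norm by the triangle inequality gives
\[
\norm{\ex{T,X,U}{B[\Select(T,X,U)]} - \ex{U}{B[U]}}_2 \;\le\; \ex{T}{\norm{\ex{X}{B|_T[X]} - \ex{U'}{B|_T[U']}}_2}.
\]

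Next, for each fixed $t$, I would apply Lemma \ref{LemmaBiasMass} to the matrix-valued function $B|_t$ and the $\mu$-biased distribution $X$ to obtain $\norm{\ex{X}{B|_t[X]} - \ex{U'}{B|_t[U']}}_2 \le L_2(B|_t)\cdot \mu$. So a bound on $L_2(B|_T)$ immediately translates into a bound on the inner norm. Here the hypothesis $p\le 1/(4w^2)\le (2w)^{-2}$ and $\delta\le (2w)^{-4k}$ are exactly what the Main Lemma needs, so with probability at least $1 - n^4\cdot 2/2^k$ over $T$ we have $L_2(B|_T)\le (2w^2)^k$, and then the bound becomes $(2w^2)^k\cdot \mu \le \eps$ by the choice of $\mu$.

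Finally I would handle the bad event by a crude worst-case bound. For every $t,x$, the matrix $B|_t[x]$ is a convex combination of zero-one stochastic matrices (the $B[y]$'s), hence itself stochastic with entries in $[0,1]$; so any such matrix has Frobenius norm at most $\sqrt{w}$ and therefore spectral norm at most $\sqrt{w}$. Thus the inner norm is at most $2\sqrt{w}$ always. Combining the two regimes,
\[
\ex{T}{\norm{\ex{X}{B|_T[X]} - \ex{U'}{B|_T[U']}}_2} \;\le\; \eps \;+\; \frac{2n^4}{2^k}\cdot 2\sqrt{w},
\]
and the choice $k\ge \log_2(4\sqrt{w}\,n^4/\eps)$ makes the second term at most $\eps$, giving the claimed $2\eps$.

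The only delicate step is choosing the right trivial bound for the bad event; I expect balancing the failure probability of the Main Lemma against this $2\sqrt{w}$ worst case to be where the parameter $k \ge \log_2(4\sqrt{w}\,n^4/\eps)$ in the hypothesis comes from, and everything else is a direct assembly of earlier lemmas.
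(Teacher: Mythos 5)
Your proof is correct and follows essentially the same route as the paper: bound the inner deviation by $L_2(B|_t)\mu$ (the paper inlines the Fourier expansion that is the proof of Lemma \ref{LemmaBiasMass}, which indeed applies to the matrix-valued function $B|_t$), invoke Theorem \ref{TheoremMainLemma} for the good event, and use the trivial $2\sqrt{w}$ bound weighted by the failure probability for the bad event, with the same parameter balancing via the choice of $k$ and $\mu$. The only cosmetic difference is that you pull $\ex{T}{\cdot}$ outside the norm before splitting into events, whereas the paper splits directly; both yield the identical $\varepsilon + \varepsilon$ accounting.
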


Theorem \ref{TheoremMainLemma} says that with high probability over $T$, $B|_T = \ex{U}{B[\mathrm{Select}(T,\cdot,U)]}$ has small Fourier mass. This implies that $B|_T$ is fooled by small bias $X$ and thus $$\ex{T,X,U}{B[\mathrm{Select}(T,X,U)]} \approx \ex{T,U,U'}{B[\mathrm{Select}(T,U',U)]} = \ex{U}{B[U]}.$$

\begin{proof}
For $t \in \{0,1\}^n$, we have
\begin{align*}
\norm{\ex{X,U}{B[\text{Select}(t,X,U)]} - \ex{U}{B[U]}}_2 =& {\norm{\ex{X}{B|_t[X]} - \ex{U}{B[U]}}_2}\\
=& {\norm{\sum_{s \ne 0} \widehat{B|_t}[s] \widehat{X}(s)}_2}\\
\leq& {\sum_{s \ne 0} \norm{\widehat{B|_t}[s]}_2 |\widehat{X}(s)|}\\
\leq& {L_2(B|_t) \mu}.\\
\end{align*}
We apply Theorem \ref{TheoremMainLemma} and, with probability at least $1- 2 \cdot n^4/2^{k}$ over $T$, we have $L_2(B|_T) \mu \leq \varepsilon$. Thus
\begin{align*}
\norm{\ex{T,X,U}{B[\text{Select}(T,X,U)]} - \ex{U}{B[U]}}_2 \leq& \pr{T}{L_2(B|_T) \mu > \varepsilon} \max_{t} \norm{\ex{X,U}{B[\text{Select}(t,X,U)]} - \ex{U}{B[U]}}_2 \\&+ \pr{T}{L_2(B|_T) \mu \leq \varepsilon} \varepsilon\\
\leq& 2 n^4 / 2^{k} \cdot 2\sqrt{w} + \varepsilon\\
\leq& 2 \varepsilon.\\
\end{align*}
\end{proof}

Now we use the above results to construct our pseudorandom generator for a read-once, oblivious, permutation branching program $B$.

Lemma \ref{LemmaOneStep} says that, if we define $\overline{B}_{t,x}[y]:=B[\mathrm{Select}(t,x,y)]$, then $\ex{T,X,U}{\overline{B}_{T,X}[U]} \approx \ex{U}{B[U]}$, where $T$ is almost $k$-wise independent with each bit having expectation $p$ and $X$ has small bias. So now we need only construct a pseudorandom generator for $\overline{B}_{t,x}$, which is a length-$(n-|t|)$ permutation branching program. Then $$\ex{T,X,\tilde{U}}{\overline{B}_{T,X}[\tilde{U}]}  \approx \ex{T,X,U}{\overline{B}_{T,X}[U]} \approx \ex{U}{B[U]},$$ where $\tilde{U}$ is the output of the pseudorandom generator for $\overline{B}_{t,x}$. We construct $\tilde{U} \in \{0,1\}^{n-|T|}$ recursively; each time we recurse, the required output length is reduced to $n-|T| \approx n(1-p)$. Thus after $O(\log(n)/p)$ levels of recursion the required output length is constant.

The only place where the analysis breaks down for regular branching programs is when we recurse. If $B$ is only a \emph{regular} branching program, $\overline{B}_{t,x}$ may not be regular. However, if $B$ is a \emph{permutation} branching program, then $\overline{B}_{t,x}$ is too. Essentially, the only obstacle to generalising the analysis to regular branching programs is that regular branching programs are not closed under restrictions.

The pseudorandom generator is formally defined as follows.
\begin{quote}
\begin{center}\textbf{Algorithm for $G_{n, w, \varepsilon} : \{0,1\}^{s_{n, w, \varepsilon}} \to \{0,1\}^n$.}\end{center}
\begin{itemize}
\item[Parameters:] $n \in \mathbb{N}$, $w \in \mathbb{N}$, $\varepsilon>0$.
\item[Input:] A random seed of length $s_{n, w, \varepsilon}$.
\item[1.] Compute appropriate values of $p \in [1/8w^2, 1/4w^2]$, $k \geq \log_2 \left( 4 \sqrt{w} n^4 / \varepsilon \right)$, $\delta=\varepsilon (2w)^{-4k}$, and $\mu = \varepsilon (2w^2)^{-k}$.\footnote{For the purposes of the analysis we assume that $p$, $k$, $\delta$, and $\mu$ are the same at every level of recursion. So if $G_{n,w,\varepsilon}$ is being called recursively, use the same values of $p$, $k$, $\delta$, and $\mu$ as at the previous level of recursion.}
\item[2.] If $n \leq (4 \cdot \log_2(2/\varepsilon)/p)^2$, output $n$ truly random bits and stop.
\item[3.] Sample $T \in \{0,1\}^n$ where each bit has expectation $p$ and the bits are $\delta$-almost $2k$-wise independent.
\item[4.] If $|T|<pn/2$, output $0^n$ and stop.
\item[5.] Recursively sample $\tilde{U} \in \{0,1\}^{\lfloor n(1-p/2) \rfloor}$. i.e. $\tilde{U}=G_{\lfloor n(1-p/2) \rfloor,w,\varepsilon}(U)$. 
\item[6.] Sample $X \in \{0,1\}^n$ from a $\mu$-biased distribution.
\item[7.] Output $\mathrm{Select}(T,X,\tilde{U}) \in \{0,1\}^n$.
\end{itemize}
\end{quote}

The analysis of the algorithm proceeds roughly as follows.
\begin{itemize}
\item Every time we recurse, $n$ is decreased to $\lfloor n(1-p/2) \rfloor$. After $O(\log(n)/p)$ recursions, $n$ is reduced to $O(1)$. So the maximum recursion depth is $r=O(\log(n)/p)$.
\item The probability of failing because $|T|<pn/2$ is small by a Chernoff bound for limited independence. (This requires that $n$ is not too small and, hence, step 2.)
\item The output is pseudorandom, as $$\ex{U}{B[G_{n,w,\varepsilon}(U)]} = \ex{T,X,\tilde{U}}{B[\text{Select}(T,X,\tilde{U})]} \approx \ex{T,X,U}{B[\text{Select}(T,X,{U})]} \approx \ex{U}{B[U]}.$$ The first approximate equality holds because we inductively assume that $\tilde{U}$ is pseudorandom. The second approximate equality holds by Lemma \ref{LemmaOneStep}.
\item The total seed length is the seed length needed to sample $X$ and $T$ at each level of recursion and $O((\log(1/\varepsilon)/p)^2)$ truly random bits at the last level. Sampling $X$ requires seed length $O(\log(n/\mu)) = O(\log(n/\varepsilon)+k \log(w))$ and sampling $T$ requires seed length $O(k \log(1/p) + \log(\log(n)/\delta)) = O(k \log (w) + \log(\log(n)/\varepsilon))$ so the total seed length is $$O(r \cdot (k \log(w) + \log(n/\varepsilon)) + w^4 \log^2(1/\varepsilon)) = O(w^2 \log(w) \log(n) \log(nw/\varepsilon) + w^4 \log^2(1/\varepsilon)).$$
\end{itemize}

\begin{lemma} \label{LemmaPRGfail}
The probability that $G_{n, w, \varepsilon}$ fails at step 4 is bounded by $2\varepsilon$---that is, $\pr{T}{|T|<pn/2}\leq 2\varepsilon$. 
\end{lemma}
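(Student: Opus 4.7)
The plan is to apply a $k$-th moment Markov bound to $|T|$. Write $S := |T| = \sum_{i=1}^n T_i$, a sum of $\delta$-almost $2k$-wise independent Bernoullis of mean $p$. Since $\ex{T}{S} = pn$, the failure event $\{|T| < pn/2\}$ is a downward deviation of $pn/2$ from the mean, which (for $k$ even) is contained in the event $\{(S - pn)^{k} \geq (pn/2)^{k}\}$. Markov then yields
\begin{equation*}
\pr{T}{|T| < pn/2} \leq \frac{\ex{T}{(S - pn)^{k}}}{(pn/2)^{k}},
\end{equation*}
so the task reduces to bounding the $k$-th central moment of $S$.

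The $k$-th central moment splits into a ``truly $k$-wise independent'' part plus an ``almost-independence correction.'' Expanding $(S-pn)^k = \sum_{(i_1,\ldots,i_k)\in [n]^k} \prod_{j=1}^k (T_{i_j}-p)$, each term depends on at most $k$ coordinates. Under true $k$-wise independence only tuples in which every distinct index appears at least twice contribute nontrivially, and a standard counting argument (Bellare--Rompel style) gives $\ex{}{(S-pn)^k} \leq (C \cdot k \cdot pn)^{k/2}$ for an absolute constant $C$, provided $pn \gtrsim k$. Under $\delta$-almost $k$-wise independence (which $T$ enjoys, since $\delta$-almost $2k$-wise independence is strictly stronger), each of the $n^k$ terms suffers at most an additive $\delta$ perturbation of its expectation, so the total correction is at most $\delta \cdot n^k$. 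Plugging into Markov gives
\begin{equation*}
\pr{T}{|T| < pn/2} \leq \left(\frac{4Ck}{pn}\right)^{k/2} + \delta \cdot \left(\frac{2}{p}\right)^{k}.
\end{equation*}

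Both terms are then at most $\varepsilon$ under the algorithm's parameter choices. For the correction term, substituting $\delta = \varepsilon (2w)^{-4k}$ and $p \geq 1/(8w^2)$ gives $\delta (2/p)^k \leq \varepsilon (2w)^{-4k} (16 w^2)^k = \varepsilon \cdot w^{-2k} \leq \varepsilon$; the choice $\delta = \varepsilon(2w)^{-4k}$ is calibrated exactly so that the $(16 w^2)^k$ factor is absorbed. For the concentration term, since the algorithm reached step 3 we have $n > (4\log_2(2/\varepsilon)/p)^2$, hence $pn \geq 64\, w^2 \log_2^2(2/\varepsilon)$, and combining this with the lower bound $k \geq \log_2(4\sqrt{w} n^4/\varepsilon)$ gives $pn \gtrsim k$ and $(4Ck/(pn))^{k/2} \leq \varepsilon$ after a short calculation. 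Summing gives the claimed bound of $2\varepsilon$.

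The main subtlety is coupling the two error terms: the moment order $k$ must be large enough to make the true-independence concentration bound $\leq \varepsilon$, but the $\delta$-almost-independence correction term $\delta (2/p)^k$ grows in $k$, so $\delta$ must shrink at least as fast as $(2/p)^k$ does. The algorithm's parameters are designed to respect this tradeoff, and using degree-$k$ Markov (rather than degree-$2k$, even though $2k$-wise independence is available) leaves exactly the slack needed to make $w^{-2k}$ the dominant factor in the correction term.
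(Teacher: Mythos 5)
Your argument is correct and is essentially the paper's own: the paper simply invokes its Lemma \ref{LemmaChernoff}, whose proof in Appendix \ref{AppendixLimitedIndependence} is exactly your expansion of the centered $k$-th moment (repeated-index counting for the truly independent part plus a $\delta \cdot n^{k}$ correction for almost-independence), followed by the same parameter check that the correction term is at most $\varepsilon$ and the concentration term is at most $\varepsilon$. The only real difference is the choice of moment order: the paper applies the bound at the small order $k'=2\lceil\log_2(1/\varepsilon)\rceil\le 2k$, so the step-2 threshold $n>(2k'/p)^2$ makes the concentration term at most $\varepsilon$ immediately and keeps the correction $\delta/(p/2)^{k'}$ harmless, whereas you work at order $k$ and therefore need the variance-based (Bellare--Rompel-type) moment bound together with $pn\gtrsim \log_2^2(2/\varepsilon)$ and $pn\gtrsim k$ --- a slightly more delicate but valid calculation, with your observation that $\delta(2/p)^{k}\le\varepsilon w^{-2k}$ playing the role the small $k'$ plays in the paper.
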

\begin{proof}
By a Chernoff bound for limited independence (see Lemma \ref{LemmaChernoff}), $$\pr{T}{|T| < pn/2} \leq \left( \frac{k'^2}{4n (p/2)^2} \right)^{\lfloor k'/2 \rfloor} + \frac{\delta}{(p/2)^{k'}},$$ where $k' \leq 2k$ is arbitrary. Set $k'= 2 \lceil \log_2(1/\varepsilon) \rceil$. Step 2 ensures that $n > (4 \cdot \log_2(2/\varepsilon)/p)^2 > (2k'/p)^2$. Thus we have $$\pr{T}{|T| < pn/2} \leq \left( \frac{k'^2}{4 (2k'/p)^2 (p/2)^2} \right)^{\log_2(1/\varepsilon)} + \frac{\varepsilon (2w)^{-4k}}{(p/2)^{k}} \leq 2\varepsilon.$$
\end{proof}

The following bounds the error of $G_{n,w,\varepsilon}$.

\begin{lemma} \label{LemmaPRGerror}
Let $B$ be a length-$n$, width-$w$, read-once, oblivious, permutation branching program. Then $$\norm{\ex{U_{s_{n,w,\varepsilon}}}{B[G_{n,w,\varepsilon}(U_{s_{n,w,\varepsilon}})]}-\ex{U}{B[U]}}_2 \leq 6\sqrt{w} r \varepsilon = O(w^{2.5} \log(n) \varepsilon),$$ where $r=O(\log(n)/p)$ is the maximum recursion depth of $G_{n,w,\varepsilon}$.
\end{lemma}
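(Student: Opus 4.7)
The plan is to prove the bound by induction on the recursion depth $r = r(n)$ of the call $G_{n, w, \varepsilon}$, showing that for every read-once, oblivious, permutation branching program $B$ of width $w$ and length $n'$,
\[
e(n') := \norm{\ex{U_s}{B[G_{n', w, \varepsilon}(U_s)]} - \ex{U}{B[U]}}_2 \leq 6\sqrt{w}\, r(n')\, \varepsilon.
\]
The base case is step 2 of the algorithm: when $n'$ is small enough that $G$ outputs $n'$ truly uniform bits, the error is zero.

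For the inductive step, fix $B$ of length $n$ at recursion depth $r \geq 1$ and denote the failure event $\{|T| < pn/2\}$ by $F$. I will introduce two intermediate matrices
\[
A_1 := \ex{T, X, \tilde{U}}{B[\Select(T, X, \tilde{U})]}, \qquad A_2 := \ex{T, X, U}{B[\Select(T, X, U)]},
\]
and split the error via the triangle inequality into three pieces: a \emph{failure gap} $\ex{U_s}{B[G(U_s)]} - A_1$, a \emph{recursive gap} $A_1 - A_2$, and a \emph{one-step gap} $A_2 - \ex{U}{B[U]}$. The failure gap is nonzero only on $F$; since $B[x]$ is always a permutation matrix (so $\norm{B[x]}_2 = 1$), it is bounded by $2\Pr[F] \leq 4\varepsilon$ using Lemma \ref{LemmaPRGfail}. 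The one-step gap is bounded by $2\varepsilon$ by direct appeal to Lemma \ref{LemmaOneStep}, whose hypotheses are exactly the parameter choices $p, k, \delta, \mu$ from step 1 of the algorithm.

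The recursive gap is where the inductive hypothesis enters. For each $t, x$ with $|t| \geq pn/2$, the matrix-valued function $\overline{B}_{t,x}[y] := B[\Select(t, x, y)]$ is itself a read-once, oblivious, permutation branching program of length $n - |t| \leq \lfloor n(1 - p/2) \rfloor$, because the permutation property (unlike mere regularity) is preserved under restriction. Padding $\overline{B}_{t, x}$ with identity layers to reach length $\lfloor n(1 - p/2) \rfloor$ if necessary, the inductive hypothesis gives
\[
\norm{\ex{\tilde{U}}{\overline{B}_{t, x}[\tilde{U}]} - \ex{U}{\overline{B}_{t, x}[U]}}_2 \leq 6\sqrt{w}\,(r - 1)\,\varepsilon,
\]
and averaging over $T, X$ preserves this bound by the triangle inequality for matrix norms. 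Summing the three contributions yields $e(n) \leq 4\varepsilon + 2\varepsilon + 6\sqrt{w}\,(r - 1)\varepsilon \leq 6\sqrt{w}\, r\, \varepsilon$, using $\sqrt{w} \geq 1$ to absorb the additive $6\varepsilon$.

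\textbf{Main obstacle.} I expect the main subtlety is the length bookkeeping around the recursive call: $\tilde{U}$ has a fixed length $\lfloor n(1-p/2) \rfloor$ but is used only in $n - |T|$ positions, so making the inductive hypothesis formally applicable requires the identity-padding observation above. A secondary conceptual point worth flagging is that this argument does \emph{not} extend to regular branching programs, precisely because restrictions of regular programs need not themselves be regular; this is exactly the reason the main theorem is stated only for the permutation case.
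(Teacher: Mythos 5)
Your proposal is correct and follows essentially the same route as the paper: induction on the recursion depth, a triangle-inequality split into the step-4 failure term (Lemma \ref{LemmaPRGfail}), the one-step term (Lemma \ref{LemmaOneStep}), and the recursive term handled by applying the inductive hypothesis to the restricted program $\overline{B}_{t,x}$, which stays a permutation branching program---exactly the point where regularity would fail. The only differences are cosmetic: you bound the failure contribution by $2\Pr[F]$ using that permutation matrices have spectral norm $1$ (the paper uses $2\sqrt{w}$), and you make explicit the identity-padding/length bookkeeping that the paper leaves implicit; just note that on the failure event $\Select(T,X,\tilde{U})$ is not defined (since $\tilde{U}$ is too short), so your intermediate matrix $A_1$ should be defined with fresh uniform bits or by conditioning on $|T|\geq pn/2$, which changes nothing in the constants.
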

\begin{proof}
For $0 \leq i < r$, let $n_i$, $T_i$, $X_i$, and $\tilde{U}_i$ be the values of $n$, $T$, $X$, and $\tilde{U}$ at recursion level $i$. We have $n_{i+1}=\lfloor n_i(1-p/2) \rfloor \leq n(1-p/2)^{i+1}$ and $\tilde{U}_{i-1} = \mathrm{Select}(T_i,X_i,\tilde{U}_i)$. Let $\Delta_i$ be the error of the output from the $i^\text{th}$ level of recursion---that is, $$\Delta_i := \max_{B'} \norm{\ex{T_i,X_i,\tilde{U}_i}{B'[\mathrm{Select}(T_i,X_i,\tilde{U}_i)]}-\ex{U}{B'[U]}}_2,$$ where the maximum is taken over all length-$n_i$, width-$w$, read-once, oblivious, permutation branching programs $B'$.

Since the last level of recursion outputs uniform randomness, $\Delta_r=0$. For $0 \leq i<r$, we have, for some $B'$, 
\begin{align*}
\Delta_i \leq& \norm{\ex{T_i,X_i,\tilde{U}_i}{B'[\text{Select}(T_i,X_i,\tilde{U}_i)]}-\ex{U}{B'[U]}}_2 \cdot \pr{T}{|T|\geq pn/2} \\&+ 2\sqrt{w} \cdot \pr{T}{|T|<pn/2}\\
\leq& \norm{\ex{T_i,X_i,\tilde{U}_i}{B'[\text{Select}(T_i,X_i,\tilde{U}_i)]}-\ex{T_i,X_i,{U}}{B'[\text{Select}(T_i,X_i,{U})]}}_2 \\&+ \norm{\ex{T_i,X_i,{U}}{B'[\text{Select}(T_i,X_i,{U})]}-\ex{U}{B'[U]}}_2 \\&+2\sqrt{w}\cdot\pr{T}{|T|<pn/2}\\
\end{align*}
By Lemma \ref{LemmaOneStep}, $$\norm{\ex{T_i,X_i,{U}}{B'[\text{Select}(T_i,X_i,{U})]}-\ex{U}{B'[U]}}_2 \leq 2 \varepsilon.$$ By Lemma \ref{LemmaPRGfail}, $$\pr{T}{|T|<pn/2} \leq 2\varepsilon.$$
We claim that $$\norm{\ex{T_i,X_i,\tilde{U}_i}{B'[\text{Select}(T_i,X_i,\tilde{U}_i)]}-\ex{T_i,X_i,{U}}{B'[\text{Select}(T_i,X_i,{U})]}}_2 \leq \Delta_{i+1}.$$ Before we prove the claim, we complete the proof: This gives $\Delta_i \leq \Delta_{i+1} + 2\varepsilon + 2\sqrt{w} \cdot 2\varepsilon$. It follows that $\Delta_0 \leq 6 \sqrt{w} r \varepsilon$, as required. 

Now to prove the claim. In fact, we prove a stronger result: for \emph{every} fixed $T_i=t$ and $X_i=x$. We have $$\norm{\ex{\tilde{U}_i}{B'[\text{Select}(t,x,\tilde{U}_i)]}-\ex{U}{B'[\text{Select}(t,x,{U})]}}_2 \leq \Delta_{i+1}.$$

Consider $\overline{B}_{x,t}[y] := B'[\text{Select}(t,x,y)]$ as a function of $y \in \{0,1\}^{n_i-|t|}$. Then $\overline{B}_{x,t}$ is a width-$w$, length-$(n_i-|t|)$, read-once, oblivious, permutation branching program---$\overline{B}_{x,t}$ is obtained from $B'$ by fixing the bits in $t$ to the values from $x$ and `collapsing' those layers. (If $B'$ is a \emph{regular} branching program, then $\overline{B}_{x,t}$ is not necessarily a regular branching program. This is the only part of the proof where we need to assume that $B$ is a permutation branching program.)

We inductively know that $\tilde{U}_i $ is pseudorandom for $\overline{B}_{x,t}$---that is, $\norm{\ex{\tilde{U}_i}{\overline{B}_{x,t}[\tilde{U}_i]} - \ex{U}{\overline{B}_{x,t}[U]}}_2 \leq \Delta_{i+1}$. Thus $$\norm{\ex{\tilde{U}_i}{B'[\text{Select}(t,x,\tilde{U}_i)]}-\ex{U}{B'[\text{Select}(t,x,{U})]}}_2 = \norm{\ex{\tilde{U}_i}{\overline{B}_{x,t}[\tilde{U}_i]}-\ex{U}{\overline{B}_{x,t}[U]}}_2 \leq \Delta_{i+1},$$ as required.
\end{proof}

\begin{proof}[Proof of Theorem \ref{TheoremPRG}]
Choose  $\varepsilon' = \Theta(\varepsilon / w^{2.5} \log(n))$ such that $G_{n,w,\varepsilon'}$ has error $\varepsilon$. The seed length is $$s_{n,w,\varepsilon'}=O(w^2 \log(w) \log(n) \log(nw/\varepsilon) + w^4 \log^2(w\log(n)/\varepsilon)),$$ as required.
\end{proof}

\section{General Read-Once, Oblivious Branching Programs} \label{SectionGeneral}

With a different setting of parameters, our pseudorandom generator can fool arbitrary oblivious, read-once branching programs, rather than just permutation branching programs. 

\begin{theorem} \label{TheoremPRGgeneral}
There exists a pseudorandom generator family $G'_{n,w,\varepsilon} : \{0,1\}^{s'_{n,w,\varepsilon}} \to \{0,1\}^n$ with seed length $s'_{n,w,\varepsilon} = O(\sqrt{n} \log^3(n) \log(nw/\varepsilon))$ such that, for any length-$n$, width-$w$, oblivious, read-once branching program $B$ and $\varepsilon > 0$, $$\norm{\ex{U_{s'_{n,w,\varepsilon}}}{B[G'_{n,w,\varepsilon}(U_{s'_{n,w,\varepsilon}})]}-\ex{U}{B[U]}}_2 \leq \varepsilon.$$
Moreover, $G'_{n,w,\varepsilon}$ is computable in space $O(s_{n,w,\varepsilon})$.
\end{theorem}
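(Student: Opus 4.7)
The plan is to rerun the recursive generator $G_{n,w,\varepsilon}$ from Section \ref{SubSectionPRG} essentially verbatim, but with (i) a weaker Fourier-mass bound that applies to \emph{all} oblivious, read-once branching programs (not just regular ones), and (ii) an accordingly smaller restriction probability $p = \Theta(1/\sqrt{n})$ and smaller bias $\mu$.  A crucial observation is that the class of oblivious, read-once branching programs of width $w$ is closed under the restriction operation $B \mapsto \bar B_{t,x}$---fixed bits simply become fixed matrix multiplications that can be absorbed into adjacent layers---so the recursion that drove Theorem \ref{TheoremPRG} remains legitimate.  The only reason Theorem \ref{TheoremPRG} was stated for permutation programs was that the main lemma (Theorem \ref{TheoremMainLemma}) was proved for regular programs, which are \emph{not} closed under restriction.

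The first step is to replace Theorem \ref{TheoremLow} with a bound that does not require regularity.  Parseval's identity (Lemma \ref{LemmaFourier}) together with the fact that $B[x]$ is a $0/1$ matrix whose rows each contain exactly one $1$ yields
$$\sum_s \frob{\widehat B[s]}^2 \;=\; \ex{U}{\frob{B[U]}^2} \;=\; w.$$
Combined with $\norm{A}_2 \leq \frob{A}$ and Cauchy-Schwarz, this gives $L_2^k(B) \leq \sqrt{w\binom{n}{k}}$ for every length-$n$, width-$w$, oblivious, read-once branching program $B$.  Unlike the regular case, this bound grows with $n$, which forces $p = \Theta(1/\sqrt{n})$---the largest value for which the expected restricted mass $\sum_k p^k \sqrt{w\binom{n}{k}}$ converges.

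With this bound in place, the proofs of Lemma \ref{LemmaKbound}, Lemma \ref{LemmaWellOrder}, and Theorem \ref{TheoremMainLemma} go through with only cosmetic changes.  Setting $k = \Theta(\log(nw/\varepsilon))$ and $\delta = n^{-\Theta(k)}$, each expected mid-level mass $\ex{T}{L_2^{k'}(B_{i\cdots j}|_T)} \leq (p^{k'}+\delta)\sqrt{w\binom{n}{k'}}$ is $1/\poly(n)$ for $k \leq k' < 2k$.  Markov, a union bound, and Lemma \ref{LemmaWellOrder} then give $\sum_{k''\geq k} L_2^{k''}(B|_T) \leq 1$ with high probability, while the low-level mass is bounded deterministically by $\sum_{k'<k} L_2^{k'}(B|_T) \leq k\sqrt{w\binom{n}{k}} = n^{O(\log(nw/\varepsilon))}$.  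Combining, $L_2(B|_T) \leq n^{O(\log(nw/\varepsilon))}$ with high probability over $T$---super-polynomial in $n$, but still manageable provided $\mu$ is chosen small enough to absorb it.

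Finally I would instantiate the generator of Section \ref{SubSectionPRG} with $p = \Theta(1/\sqrt{n})$, $\mu = \varepsilon/n^{\Theta(\log(nw/\varepsilon))}$, and a stopping threshold that outputs $\polylog(n/\varepsilon)$ truly random bits once the current length becomes small.  Lemmas \ref{LemmaOneStep}, \ref{LemmaPRGfail}, and \ref{LemmaPRGerror} all transfer once we replace the $(2w^2)^k$ bound by $\sqrt{w\binom{n}{k}}$ in the calculation and observe closure under restriction.  The recursion depth becomes $r = O(\log(n)/p) = O(\sqrt{n}\log n)$, and each level uses $O(\log(n/\mu) + k\log(1/p) + \log(1/\delta)) = O(\log(n)\log(nw/\varepsilon))$ random bits for $X$ and $T$; multiplying yields seed length $O(\sqrt{n}\log^3(n)\log(nw/\varepsilon))$, and the space bound follows since each component distribution is samplable in space proportional to its seed length.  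The hard part is that swapping $(2w^2)^k$ for $\sqrt{w\binom{n}{k}}$ simultaneously constrains $p$ (down to $O(1/\sqrt{n})$) and $\mu$ (down to $\varepsilon/n^{\Theta(\log(nw/\varepsilon))}$); the delicate bookkeeping is to verify that these tight parameter choices still fit within the claimed seed length and that error does not blow up over the $O(\sqrt{n}\log n)$ recursion levels.
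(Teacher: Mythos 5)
Your outline is the same as the paper's: the level-$k$ bound $L_2^k(B)\leq\sqrt{w\binom{n}{k}}$ via Parseval and Cauchy--Schwarz is exactly Lemma~\ref{LemmaLowGeneral}, the adapted main lemma with $p=\Theta(1/\sqrt n)$ is Lemma~\ref{LemmaMainGeneral}, and the recursion goes through precisely because general read-once oblivious programs (unlike regular ones) are closed under the restriction $B\mapsto \overline{B}_{t,x}$, which is the observation the paper also makes (Lemma~\ref{LemmaPRGerrorGeneral} ``no longer needs'' the permutation assumption). Your seed-length accounting also matches.

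There is, however, one step that does not ``transfer'' as you claim: the bound on $\Pr[|T|<pn/2]$ (the analogue of Lemma~\ref{LemmaPRGfail}). That lemma's proof uses the Chernoff bound for \emph{almost} $k$-wise independence, Lemma~\ref{LemmaChernoff}, whose first term is $\bigl(k'^2/(4n(p/2)^2)\bigr)^{\lfloor k'/2\rfloor}$. In the regular case $p=\Theta(1/w^2)$ is constant and the stopping rule forces $np^2\gg k'^2$, so this term is tiny; but with $p=\Theta(1/\sqrt n)$ we have $np^2=O(1)$ no matter how the stopping threshold is chosen, so that bound is vacuous and no choice of $\delta$ rescues it (the offending term is independent of $\delta$). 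The paper handles this by switching to \emph{exact} $2k$-wise independence for $T$ and invoking a variance-based tail bound (Lemma~\ref{LemmaChernoffGeneral}), where the relevant quantity is $\mathrm{Var}[|T|]\leq np=\Theta(\sqrt n)$ rather than $n$; your almost-$k$-wise version would need a similarly variance-sensitive inequality, which you have not supplied. Relatedly, your stopping rule ``output $\polylog(n/\eps)$ truly random bits'' is incompatible with keeping $p=\Theta(1/\sqrt n)$ fixed through the recursion: once the current length $m$ drops below roughly $\log(1/\eps)/p=\Theta(\sqrt n\log(1/\eps))$, the event $|T|<pm/2$ is no longer rare (indeed for $pm<2$ it occurs with probability close to $1$), so the recursion must terminate there, as the paper's step 2 does, outputting $\Theta(\sqrt n\log(1/\eps))$ truly random bits --- which is still consistent with the claimed seed length. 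With these two repairs (a variance-based concentration bound, or exact $2k$-wise independence, plus the correct stopping threshold) your argument coincides with the paper's proof.
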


Theorem \ref{TheoremPRGgeneral} implies Theorem \ref{thm:second-intro}.

Compared to Theorem \ref{TheoremPRG}, the seed length has a worse dependence on the length ($\sqrt{n}$ versus $\log^2 n$), but has a much better dependence on the width ($\log w$ versus $\mathrm{poly}(w)$).

Impagliazzo, Meka, and Zuckerman \cite{IMZ} obtain the seed length $\sqrt{s}\cdot 2^{O(\sqrt{\log s})} = s^{1/2+o(1)}$ for arbitrary branching programs of size $s$. For a width-$w$, length-$n$, read-once branching program, $s=O(wn)$. Our result is incomparable to that of Impagliazzo et al. Our result only covers oblivious, read-once branching programs, while that of Impagliazzo et al. covers non-read-once and non-oblivious branching programs. However, our seed length depends logarithmically on the width, while theirs depends polynomially on the width; we can achieve seed length $n^{1/2+o(1)}$ for width $2^{n^{o(1)}}$ while they require width $n^{o(1)}$.

The key to proving Theorem \ref{TheoremPRGgeneral} is the following Fourier mass bound for arbitrary branching programs.

\begin{lemma} \label{LemmaLowGeneral}
Let $B$ be a length-$n$, width-$w$, read-once, oblivious branching program. Then, for all $k \in [n]$, $$L_2^k(B) := \sum_{s \in \{0,1\}^n : |s|=k} \norm{\widehat{B}[s]}_2 \leq \sqrt{w {n \choose k}} \leq \sqrt{w n^k}.$$
\end{lemma}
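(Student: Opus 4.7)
The plan is to prove this by a direct combination of Parseval's identity with Cauchy--Schwarz, replacing the spectral norm with the (larger) Frobenius norm as a bridge.

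First I would reduce to the ordered case using Lemma \ref{LemmaFourierPermutation}, which preserves $L_2^k$. Next I would observe that for every $x \in \{0,1\}^n$ the matrix $B[x] \in \{0,1\}^{w \times w}$ is the $0/1$ indicator matrix of a function $[w] \to [w]$, so each row contains exactly one $1$ and $\frob{B[x]}^2 = w$ identically in $x$. Plugging this into Parseval's identity (Lemma \ref{LemmaFourier}) gives
\[
\sum_{s \in \{0,1\}^n} \frob{\widehat{B}[s]}^2 \;=\; \ex{U}{\frob{B[U]}^2} \;=\; w.
\]

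Then I would use the standard inequality $\norm{A}_2 \leq \frob{A}$ for any matrix $A$ to pass from the spectral norm appearing in $L_2^k$ to the Frobenius norm, and apply Cauchy--Schwarz on the level-$k$ layer:
\[
L_2^k(B) \;=\; \sum_{s : |s|=k} \norm{\widehat{B}[s]}_2 \;\leq\; \sum_{s : |s|=k} \frob{\widehat{B}[s]} \;\leq\; \sqrt{\binom{n}{k}} \cdot \sqrt{\sum_{s : |s|=k} \frob{\widehat{B}[s]}^2}.
\]
The inner sum is bounded by the full Parseval sum $w$, yielding $L_2^k(B) \leq \sqrt{w \binom{n}{k}} \leq \sqrt{w n^k}$.

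There is essentially no obstacle; the only place one has to be a bit careful is the Frobenius bound $\frob{B[x]}^2 = w$, which is what makes the argument give a bound independent of any structural assumption like regularity or permutation. The entire proof is just Parseval, $\norm{\cdot}_2 \leq \frob{\cdot}$, and Cauchy--Schwarz in succession. The reason this weaker bound (compared to Theorem \ref{TheoremLow}) cannot be improved without extra assumptions is that the $\sqrt{\binom{n}{k}}$ factor from Cauchy--Schwarz is necessary in general, and without regularity we have no analogue of Lemma \ref{LemmaBRRY} to exploit.
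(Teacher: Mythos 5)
Your proof is correct and follows essentially the same route as the paper: Parseval's identity together with $\frob{B[x]}^2 = w$, the bound $\norm{A}_2 \leq \frob{A}$, and Cauchy--Schwarz over the $\binom{n}{k}$ coefficients at level $k$. The preliminary reduction to the ordered case via Lemma \ref{LemmaFourierPermutation} is harmless but not needed, since the argument never uses the layer structure.
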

\begin{proof}
By Parseval's Identity, $$\sum_{s \in \{0,1\}^n : |s|=k} \norm{\widehat{B}[s]}_2^2 \leq \sum_{s \in \{0,1\}^n} \frob{\widehat{B}[s]}^2 = \ex{U}{\frob{B[U]}^2}= w.$$ The result follows from Cauchy-Schwartz.
\end{proof}

The bound of Lemma \ref{LemmaLowGeneral} is different that of Theorem \ref{TheoremLow}. This leads to the different seed length in Theorem \ref{TheoremPRGgeneral} versus Theorem \ref{TheoremPRG}.

Lemma \ref{LemmaLowGeneral} gives a different version of our main lemma (Theorem \ref{TheoremMainLemma}).

\begin{lemma} \label{LemmaMainGeneral}
Let $B$ be a length-$n$, width-$w$, read-once, oblivious branching program. Let $T$ be a random variable over $\{0,1\}^n$ where each bit has expectation $p$ and the bits are $2k$-wise independent. Suppose $p \leq 1/\sqrt{4n}$. Then $$\pr{T}{L_2(B|_T) \leq \sqrt{w} \cdot n^{k/2}} \geq 1 - k \cdot \sqrt{w} \cdot n^3 \cdot 2^{-k}.$$
\end{lemma}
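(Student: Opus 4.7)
The proof should follow exactly the template of Theorem \ref{TheoremMainLemma}, but with Lemma \ref{LemmaLowGeneral} in place of Theorem \ref{TheoremLow}. First, I would bound the expected Fourier mass at each intermediate level $k'$ with $k \leq k' < 2k$ for every subprogram $B_{i\cdots j}$. By Lemma \ref{LemmaLowGeneral}, $L_2^{k'}(B_{i\cdots j}) \leq \sqrt{w \cdot n^{k'}}$, and since $T$ is $2k$-wise independent (hence exactly $k'$-wise independent for $k' \leq 2k$), each subset $s$ of size $k'$ is contained in $T$ with probability exactly $p^{k'}$. Combined with $p \leq 1/\sqrt{4n}$ this gives
\[
\ex{T}{L_2^{k'}(B_{i\cdots j}|_T)} \leq p^{k'} \sqrt{w \cdot n^{k'}} \leq \sqrt{w}\cdot (p\sqrt{n})^{k'} \leq \sqrt{w} \cdot 2^{-k'} \leq \sqrt{w} \cdot 2^{-k}.
\]

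Next, apply Markov's inequality with threshold $1/n$ to each such subprogram and level, then union bound over the at most $n^2$ pairs $(i,j)$ and the $k$ values $k' \in \{k, k+1, \ldots, 2k-1\}$. This yields
\[
\pr{T}{\forall\, 1 \leq i \leq j \leq n,\ \forall\, k \leq k' < 2k\colon\ L_2^{k'}(B_{i\cdots j}|_T) \leq 1/n} \geq 1 - k \cdot n^3 \cdot \sqrt{w} \cdot 2^{-k},
\]
which matches the failure probability in the statement. Conditioning on this good event, Lemma \ref{LemmaWellOrder} extends the bound to all $k'' \geq k$ and all subprograms, and summing over $k''$ from $k$ to $n$ gives $\sum_{k'' \geq k} L_2^{k''}(B|_T) \leq n \cdot (1/n) = 1$.

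For the low-order contributions, since restrictions only decrease Fourier mass, we bound directly using Lemma \ref{LemmaLowGeneral}:
\[
\sum_{1 \leq k' < k} L_2^{k'}(B|_T) \leq \sum_{1 \leq k' < k} \sqrt{w}\cdot n^{k'/2} \leq \sqrt{w} \cdot n^{k/2}
\]
(by a geometric sum, for $n$ sufficiently large, which we may assume else the claim is trivial). Adding the $\leq 1$ from the high-order part yields $L_2(B|_T) \leq \sqrt{w}\cdot n^{k/2}$ on the good event, completing the proof. There is no real obstacle here: the only substantive change from Theorem \ref{TheoremMainLemma} is that the worse per-level bound $\sqrt{w\cdot n^{k'}}$ (instead of $(2w^2)^{k'}$) forces us to take $p = O(1/\sqrt{n})$ rather than $p = \Theta(1/w^2)$ to drive the expected mass down geometrically, which is precisely the source of the $\sqrt{n}$ dependence in the seed length of Theorem \ref{TheoremPRGgeneral}.
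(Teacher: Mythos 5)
Your proposal is correct and follows essentially the same route as the paper's proof in Appendix \ref{AppendixGeneral}: bound $\ex{T}{L_2^{k'}(B_{i\cdots j}|_T)} \leq \sqrt{w}\,2^{-k}$ for $k \leq k' < 2k$ via Lemma \ref{LemmaLowGeneral} and exact $2k$-wise independence, apply Markov plus a union bound over subprograms and levels, extend to all $k'' \geq k$ by Lemma \ref{LemmaWellOrder}, and add the low-order mass bounded by a geometric sum. The only cosmetic difference is that the paper bounds the geometric sum by $\sqrt{w}\,n^{k/2}-1$ so that the $+1$ from the high-order levels is absorbed exactly, whereas you invoke ``$n$ sufficiently large''; this is a negligible slack, not a gap.
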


Using Lemma \ref{LemmaMainGeneral}, we can construct a pseudorandom generator fooling general read-once, oblivious branching programs (Theorem \ref{TheoremPRGgeneral}), similarly to the proof of Theorem \ref{TheoremPRG}. For more details, see Appendix \ref{AppendixGeneral}.

\section{Further Work} \label{SectionConclusion}

One open problem is to extend the main result (Theorem \ref{TheoremPRG}) to regular or even non-regular branching programs while maintaining $\mathrm{polylog}(n)$ seed length. As discussed in Section \ref{SubSectionPRG}, the only part of our analysis that fails for regular branching programs is the recursive analysis. The problem is that regular branching programs are not closed under restriction---that is, setting some of the bits of a regular branching program does not necessarily yield a regular branching program. In particular, we cannot bound $$\norm{\ex{\tilde{U}}{B[\text{Select}(t,x,\tilde{U})]}-\ex{U}{B[\text{Select}(t,x,{U})]}}$$ for fixed $t$ and $x$ by the distinguishability of $\tilde{U}$ and $U$ by another read-once, oblivious, regular branching program $\overline{B}_{x,t}$. We have two options:
\begin{itemize}
\item Find another way to bound $\norm{\ex{T,X,\tilde{U}}{B[\text{Select}(T,X,\tilde{U})]}-\ex{T,X,}{B[\text{Select}(T,X,{U})]}}$.
\item Extend the main lemma (Theorem \ref{TheoremMainLemma}) to non-regular branching programs.
\end{itemize}
Towards the latter option, we have the following conjecture.

\begin{conjecture} \label{Conjecture}
For every constant $w$, the following holds. Let $B$ be a length-$n$, width-$w$, read-once, oblivious branching program. Then $$L_2^k(B) = \sum_{s \in \{0,1\}^n : |s| = k} \norm{\widehat{B}[s]}_2 \leq n^{O(1)}(\log n )^{O(k)}$$
for all $k \geq 1$.
\end{conjecture}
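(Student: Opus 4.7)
The conjecture weakens the $n$-free bound $(2w^2)^k$ of Theorem \ref{TheoremLow} for regular branching programs to $n^{O(1)}(\log n)^{O(k)}$ in exchange for dropping the regularity assumption. My plan would be to reduce the general width-$w$ case to the regular case by carving the program into a logarithmic number of ``almost-regular'' epochs and applying Theorem \ref{TheoremLow} within each.

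First, given a width-$w$, read-once, oblivious BP $B$, I would track the reachable-state distribution $\mu_i \in \mathbb{R}^w$ at each layer $i$ under uniformly random input, together with a scalar potential such as the collision probability $\|\mu_i\|_2^2 \in [1/w,1]$ or a shifted entropy. Since the potential lies in a $\poly(w) = O(1)$ range, it can undergo at most $r = O(\log n)$ dyadic changes before stabilizing to within $1/n^{O(1)}$. I would use these change-points to partition $[n]$ into $r = O(\log n)$ intervals $I_1,\ldots,I_r$ on which $\mu_i$ is essentially stationary, so that every state reachable on some layer of $I_j$ is reached with probability $\Omega(1/\poly(n))$ uniformly throughout $I_j$.

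Second, on each epoch $I_j$ I would replace $B_{I_j}$ with a regular surrogate $\tilde B_{I_j}$ obtained by restricting attention to the effective support and rebalancing the transitions against the approximately stationary $\mu$ to be doubly stochastic; a conjugation-style argument should show that the Fourier coefficients of $B_{I_j}$ and $\tilde B_{I_j}$, when applied to vectors aligned with the effective support, agree up to $1/n^{O(1)}$ error. Applying Theorem \ref{TheoremLow} to each $\tilde B_{I_j}$ gives $L_2^{k_j}(\tilde B_{I_j}) \leq (2w^2)^{k_j}$. Using the Decomposition property of Lemma \ref{LemmaFourier} to write $\widehat B[s] = \prod_j \widehat{B_{I_j}}[s_{I_j}]$, submultiplicativity of the spectral norm, and summing over the $\binom{k+r-1}{r-1} \leq (2r)^k$ ways of splitting $k$ nonzero bits among the $r$ epochs, the total bound would be roughly $(2r)^k (2w^2)^k \cdot n^{O(1)} = n^{O(1)}(\log n)^{O(k)}$.

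The main obstacle, and presumably the reason this is left open, is the ``regularization'' step: one must quantify how much the Fourier coefficients of $B_{I_j}$ can deviate from those of $\tilde B_{I_j}$, and show that the cumulative error across all $r = O(\log n)$ epochs does not wipe out the $(\log n)^{O(k)}$ savings. This seems to demand a delicate potential-function argument controlling how non-regularity evolves at sub-epoch scales, as well as a robust variant of Lemma \ref{LemmaBRRY}. A secondary subtlety is that Lemma \ref{LemmaBRRY} relies essentially on double stochasticity, so any ``approximately regular'' extension would likely have to be reproved from scratch rather than obtained as a black-box corollary; the width-$2$ case, where small-bias distributions are known to fool arbitrary BPs, is a natural testbed for calibrating the right notion of approximate regularity.
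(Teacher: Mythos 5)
This statement is left as an open conjecture in the paper (Conjecture \ref{Conjecture}); the authors give no proof, only the remark that even a slightly strengthened form would imply the Brody--Verbin Coin Theorem. So what you have written cannot be compared to a paper proof, and on its own it is an attack sketch rather than a proof: the step you defer to the end (a ``robust'' or approximate version of Lemma \ref{LemmaBRRY}) is essentially the open problem itself, not a technical loose end.

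Two parts of the sketch fail concretely. First, the epoch count $r=O(\log n)$ is unsupported: neither the collision probability $\norm{\mu_i}_2^2$ nor entropy is monotone under a general (non-doubly-stochastic) stochastic layer map --- a layer that merges two states can increase collision probability and a layer that splits mass can decrease it --- so a potential confined to a constant range can still oscillate $\Omega(n)$ times, and the ``at most $O(\log n)$ dyadic changes'' argument has nothing to rest on. Monotonicity of such potentials is exactly what regularity buys, so this step quietly assumes a weak form of the property you are trying to dispense with. Second, the regularization of each epoch is not an approximation issue but a structural one: the proof of Lemma \ref{LemmaBRRY} is a pairing argument that needs every state in the next layer to have exactly two incoming edges, and conjugating or rebalancing against a near-stationary $\mu$ does not remove in-degree imbalance; the discrepancy between $B_{I_j}$ and any doubly stochastic surrogate is a per-layer constant, not $1/n^{O(1)}$, and it compounds over the up-to-$n$ layers of an epoch. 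Moreover, even granting per-epoch bounds, an additive $n^{-O(1)}$ error per Fourier coefficient is useless after multiplying coefficients across epochs (Lemma \ref{LemmaFourier}, Decomposition) and summing over up to $\binom{n}{k}$ frequencies at level $k$. Finally, note that the states with small but nonzero mass and the merging layers that your surrogate would smooth over are precisely the source of the $(\log n)^{\Theta(w)}$ phenomena in the tight Brody--Verbin examples, so any successful argument has to confront them rather than approximate them away; the $(\log n)^{O(k)}$ in your final count instead comes from the (unjustified) epoch number, which is the wrong source.
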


This conjecture relates to the Coin Theorem of Brody and Verbin (see the discussion in Section \ref{SubSectionCoin}). Specifically, if we remove the $n^{O(1)}$ factor, this conjecture implies the Coin Theorem. 

Conjecture \ref{Conjecture} would suffice to construct a pseudorandom generator for constant-width, read-once, oblivious branching programs with seed length $\mathrm{polylog}(n)$.

\bigskip

The seed length of our generators is worse than that of generators for ordered branching programs. Indeed, for ordered \emph{permutation} branching programs of constant width, it is known how to achieve seed length $O(\log n)$ \cite{KNP}, whereas we only achieve seed length $O(\log^2 n)$ in Theorem \ref{TheoremPRG}. For general ordered branching programs, Nisan \cite{Nisan} obtains seed length $O(\log(nw) \log(n))$, whereas Theorem \ref{TheoremPRGgeneral} gives seed length $\tO(\sqrt{n} \log(w))$. It would be interesting to close these gaps.

\begin{small}
\bibliographystyle{plain}
\bibliography{fourier}
\end{small}

\appendix

\omitted{
\section{Fourier Transform} \label{AppendixFourier}

\begin{proof}[Proof of Lemma \ref{LemmaFourier}]
\begin{itemize}
\item[]
\item Decomposition: If $C[x \circ y] = A[x] \cdot B[y]$ for all $x,y \in \{0,1\}^n$, then $\widehat{C}[s \circ t] = \widehat{A}[s] \widehat{B}[t]$.

Let $U$ and $V$ be uniform. Then $$\widehat{C}[s \circ t] = \ex{C[U \circ V] \chi_{s \circ t}(U \circ V} = \ex{A[U] \chi_s(U) B[V] \chi_t(V)} = \ex{A[U] \chi_s(U)} \ex{ B[V] \chi_t(V)} = \widehat{A}[s] \cdot \widehat{B}[t].$$

\item Expectation: $B[X] = \sum_s \widehat{B}[s] \widehat{X}(s)$.

We have 
\begin{align*}
\sum_s \widehat{B}[s] \widehat{X}(s) =& \sum_s \widehat{B}[s] \ex{\chi_s(X)} \\=& \sum_s \ex{{B}[U] \chi_s(U)} \ex{\chi_s(X)} \\=& \ex{B[U] \sum_s \chi_s(U) \chi_s(X)} \\=& \ex{B[U] \sum_s \chi_s(U \oplus X)}.
\end{align*}
Note that $\sum_s \chi_s(y)=0$ if $y \ne 0$ and $\sum_s \chi_s(0) = \sum_s 1 = 2^n$. So $$\sum_s \widehat{B}[s] \widehat{X}(s) = \sum_x B[x] \sum_s \chi_s(x \oplus x) \pr{U=X} = \sum_x B[x] \pr{X=x} = B[X].$$

\item Fourier Inversion for Matrices: $B[x] = \sum_s \widehat{B}[s] \chi_s(x)$.

This is a special case of Expectation where $X=x$ is a deterministic distribution and $\widehat{X}(s) = \chi_s(x)$.

\item Fourier Inversion for Distributions: $\pr{X=x} = \ex{\widehat{X}(U) \chi_U(x)}$.

$$\ex{\widehat{X}(U) \chi_U(x)} = \ex{\ex{\chi_U(X)} \chi_U(x)} = \ex{\chi_U(X\oplus x)} = \pr{X=x}$$

\item Convolution for Distributions: If $Z = X \oplus Y$, then $\widehat{Z}(s) = \widehat{X}(s) \cdot \widehat{Y}(s)$.

$$\widehat{Z}(s) = \ex{\chi_s(Z)} = \ex{\chi_s(X \oplus Y)} = \ex{\chi_s(X)\chi_s(Y)} = \ex{\chi_s(X)}\ex{\chi_s(Y)}=\widehat{X} \cdot \widehat{Y}.$$

\item Parseval's Identity: $\sum_s \frob{\widehat{B}[s]}^2 = \ex{\frob{B[U]}^2}$.

\begin{align*}
\sum_s \frob{\widehat{B}[s]} =& \sum_s \frob{\ex{B[U] \chi_s(U)}}\\
=& \sum_s \text{trace} \left(\ex{B[U]^* \chi_s(U)} \ex{B[V] \chi_s(V)}\right)\\
=& \sum_s \text{trace} \left(\ex{B[U]^* B[V] \chi_s(U \oplus V)}\right)\\
=& \ex{\text{trace} \left(B[U]^* B[V] \sum_s \chi_s(U \oplus V)\right)}\\
=& \ex{\text{trace} \left(B[U]^* B[U] \right)}\\
=& \ex{\frob{B[U]}}.\\
\end{align*}

\item Convolution for Matrices: If, for all $x \in \{0,1\}^n$, $C[x] = \ex{A[U] \cdot B[U \oplus x]}$, then $\widehat{C}[s] = \widehat{A}[s] \widehat{B}[s]$.

\begin{align*}
\widehat{C}[s] =& \ex{C[V] \chi_s(V)} \\=& \ex{A[U] B[U+V] \chi_s(V)} \\=& \ex{A[U] B[W] \chi_s(U \oplus W)} \\=& \ex{A[U] \chi_s(U)} \ex{B[W] \chi_s(W)} \\=& \widehat{A}[s] \widehat{B}[s].\\
\end{align*}

\end{itemize}

\end{proof}
}

\section{Proof of Lemma \ref{LemmaBRRY}} \label{AppendixBRRY}

\begin{lem:brry}
Let $B$ be a length-$n$, width-$w$, ordered, regular branching program. Then $$\sum_{1 \leq i \leq n} \norm{ \widehat{B_{i \cdots n}}[1 \circ 0^{n-i}] }_2 \leq 2w^2.$$
\end{lem:brry}

This proof is adapted from \cite[Lemma 4]{BRRY}.

\begin{proof}
Define $\rho : \mathbb{R}^{w \times w} \to \mathbb{R}$ by $$\rho(X) := \sum_{1 \leq u < v \leq w} \norm{X(u,\cdot)-X(v,\cdot)}_2,$$ where $X(u,\cdot)$ and $X(v,\cdot)$ are the $u^\text{th}$ and $v^\text{th}$ rows of $X$ respectively. We claim that, for all $i \in [n]$ and $X \in \mathbb{R}^{w \times w}$,
$$\norm{ \widehat{B_i}[1] X }_2 \leq 2(\rho(X) - \rho(\widehat{B_i}[0]X)).$$
It follows that $$\sum_{i \in [n]} \norm{ \widehat{B_i}[1] \widehat{B_{i+1 \cdots n}}[0^{n-i}]}_2 \leq 2 \sum_{i \in [n]} \rho(\widehat{B_{i+1 \cdots n}}[0^{n-i}]) - \rho(\widehat{B_{i \cdots n}}[0^{n-i+1}]) = 2(\rho(I) - \rho(\widehat{B}[0^n])).$$ Noting that $\rho(I) \leq w^2$, we obtain the result.

Intuitively, $\rho(\widehat{B_{i \cdots n}}[0^n])$ measures the `correlation' between the state at layer $i$ and the state at the last layer when run on uniform randomness; $\rho$ measures how much the distribution of the final state changes if the state at layer $i$ is changed from $u$ to $v$ and this is summed over all pairs $\{u,v\}$. On the other hand, $\norm{ \widehat{B_{i \cdots n}}[1 \circ 0^{n-i}] }_2$ measures the correlation between bit $i$ and the final state; the Fourier coefficient shows how much $B_{i \cdots n}$ correlates with bit $i$. Our claim simply states that this correlation is `conserved'---that is, the correlation of the state at layer $i$ plus the correlation of bit $i$ is bounded by the correlation of the state at layer $i+1$. This makes sense as the state at layer $i+1$ is determined by the state at layer $i$ and bit $i$.

Now we prove our claim: Consider the rows of $X$ and $\widehat{B_i}[0]X$ as points in $\mathbb{R}^w$. We start with $2w$ points corresponding to each row of $X$ repeated twice and we move these points one by one until they correspond to the rows of $\widehat{B_i}[0]X$ repeated twice: In step $u \in [w]$, we take one point corresponding to row $B_i[0](u)$ of $X$ (that is, $B_i[0](u,\cdot)X$) and one point corresponding to row $B_i[1](u)$ of $X$ ($B_i[1](u,\cdot)X$) and move both to their midpoint $\widehat{B_i}[0](u,\cdot)X = (B_i[0](u,\cdot)X+B_i[1](u,\cdot)X)/2$.

Let $P_u$ be the multiset of points at step $u \in [w] \cup \{0\}$. That is, $$P_0 = \bigcup_{u \in [w]} \{X(u,\cdot),X(u,\cdot)\} $$ and, for all $u \in [w]$, $$P_u = \left( P_{u-1} \backslash \{ B_i[0](u,\cdot)X, B_i[1](u,\cdot)X \} \right) \cup \{\widehat{B_i}[0](u,\cdot)X,\widehat{B_i}[0](u,\cdot)X\}.$$ By regularity, we have $P_w = \bigcup_{u \in [w]} \{\widehat{B_i}[0](u,\cdot)X,\widehat{B_i}[0](u,\cdot)X\}$.

Now we can consider $\rho$ as a function on the collection of points $P_u$: $$\rho'(P_u) := \sum_{x,y \in P_u: x \prec y} \norm{x-y}_2,$$ where the comparison $x \prec y$ is with respect to some arbitrary ordering on the multiset of points. (We simply do not want to double count pairs.) We have $\rho'(P_0) = 4\rho(X)$ and $\rho'(P_w) = 4\rho(\widehat{B_i}[0] X)$. Note that the $4$ factor comes from the fact that every pair of rows $(X(u,\cdot),X(v,\cdot))$ becomes four pairs of points in $P_0$, as every row corresponds to two points.

Fix $u \in [w]$. Now we bound $\rho'(P_{u-1})-\rho'(P_u)$: Let $x=B_i[0](u,\cdot)X$, $y=B_i[1](u,\cdot)X$, and $z=\widehat{B_i}[0](u,\cdot)X$. Then $P_u = (P_{u-1} \backslash \{x,y\}) \cup \{z,z\}$ and $z = (x+y)/2$. We have $$\rho'(P_{u-1})-\rho'(P_u) = \norm{x-y}_2 + \sum_{w \in (P_{u-1} \backslash \{x,y\})} \norm{w-x}_2+\norm{w-y}_2 -2\norm{w-z}_2.$$ By the triangle inequality, $$2\norm{w-z}_2 = \norm{2w - (x + y)}_2 \leq \norm{w-x}_2 + \norm{w-y}_2.$$ So $$\rho'(P_{u-1})-\rho'(P_u) \geq \norm{x-y}_2 = \norm{B_i[0](u,\cdot)X-B_i[1](u,\cdot)X}_2 = \norm{2\widehat{B_i}[1](u,\cdot)X}_2.$$

It follows that $$4\rho(X) - 4\rho(\widehat{B_i}[0] X) = \rho'(P_0) - \rho'(P_w) = \sum_{u \in [w]} \rho'(P_{u-1}) - \rho'(P_u) \geq 2\sum_{u \in [w]} \norm{\widehat{B_i}[1](u,\cdot)X}_2 \geq 2 \norm{\widehat{B_i}[1]X}_2.$$
\end{proof}

\section{Limited Independence} \label{AppendixLimitedIndependence}

We use the following fact.

\begin{lemma}[Chernoff Bound for Limited Independence] \label{LemmaChernoff}
Let $X_1 \cdots X_\ell$ be $\delta$-almost $k$-wise independent random variables with $0 \leq X_i \leq 1$ for all $i$. Set $X = \sum_i X_i$. Then, for all $\zeta \in (0,1)$, $$\pr{X}{\left|X - \ex{X}{X}\right| \geq \ell \zeta} \leq \left( \frac{k^2}{4 \ell \zeta^2} \right)^{\lfloor k/2 \rfloor} + \frac{\delta}{\zeta^k}$$
\end{lemma}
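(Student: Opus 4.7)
The plan is a standard Bellare--Rompel style method-of-moments argument: apply Markov's inequality to an even moment of the centered variables and estimate that moment using the limited-independence hypothesis. Concretely, I would set $Y_i := X_i - \Exp[X_i] \in [-1,1]$ with $\Exp[Y_i]=0$, and $Y := X - \Exp[X] = \sum_i Y_i$. With $m := \lfloor k/2 \rfloor$ and $k' := 2m$, Markov's inequality gives
\[
\Pr\bigl[\,|Y| \geq \ell\zeta\,\bigr] \;=\; \Pr\bigl[Y^{k'} \geq (\ell\zeta)^{k'}\bigr] \;\leq\; \frac{\Exp[Y^{k'}]}{(\ell\zeta)^{k'}},
\]
so the task reduces to estimating $\Exp[Y^{k'}]$.

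Under true $k'$-wise independence I would expand
\[
\Exp[Y^{k'}] \;=\; \sum_{(i_1,\ldots,i_{k'}) \in [\ell]^{k'}} \Exp\bigl[Y_{i_1}\cdots Y_{i_{k'}}\bigr],
\]
and observe that each summand factors as $\prod_j \Exp[Y_j^{m_j}]$ over the distinct indices with multiplicities $m_j$, and vanishes unless every $m_j \geq 2$. Grouping the surviving tuples by the set partition of $[k']$ they induce, using $|\Exp[Y_j^{m_j}]| \leq 1$ (since $|Y_j|\leq 1$), and observing that the dominant contribution comes from perfect matchings (for which there are $(k'-1)!!$ matchings and at most $\ell^m$ choices of distinct indices, while partitions with $r < m$ blocks of size $\geq 2$ contribute only $\ell^r$ to the index count), a routine combinatorial count yields $\Exp[Y^{k'}] \leq (k^2/4)^m \cdot \ell^m$. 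Dividing by $(\ell\zeta)^{k'}$ produces the main term $\bigl(k^2/(4\ell\zeta^2)\bigr)^m$.

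Next I would handle the $\delta$-almost hypothesis. Since each $Y_i$ is a deterministic function of $X_i$, any $k' \leq k$ of the $Y_i$'s have joint distribution within total variation distance $\delta$ of the truly independent product. Because $|Y_{i_1}\cdots Y_{i_{k'}}| \leq 1$, each of the $\ell^{k'}$ expansion terms thus differs from its truly-independent value by $O(\delta)$. Summing these errors and dividing by $(\ell\zeta)^{k'}$ adds a term of order $\delta/\zeta^{k'} \leq \delta/\zeta^k$ (using $\zeta < 1$ and $k' \leq k$), which combines with the main term to give the stated bound.

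The main obstacle is the combinatorial accounting: extracting the precise constant $k^2/4$ requires a sharp estimate on the double factorial $(k'-1)!!$ together with care about the non-matching partition types. The high-level outline (centering, Markov on an even moment, near-independence of low-order moments) is standard, and settling for a looser bound of the form $\bigl(Ck/(\ell\zeta^2)\bigr)^{k/2}$ for some absolute constant $C$ would make the count essentially immediate.
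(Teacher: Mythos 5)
Your proposal is correct and takes essentially the same route as the paper's proof (an adaptation of Theorem 4 of Schmidt--Siegel--Srinivasan): apply Markov's inequality to an even central moment, bound that moment by counting index tuples in which every index is repeated, and charge the remaining tuples to the $\delta$-almost $k$-wise independence at total cost $\ell^k\delta$, yielding the $\delta/\zeta^k$ term. The only cosmetic difference is that the paper obtains the constant directly from the crude count $\binom{\ell}{k/2}(k/2)^k$ of surviving tuples (at most $k/2$ distinct indices, each position assigned one of them), so no double-factorial or partition-type analysis is needed to reach $\bigl(k^2/(4\ell\zeta^2)\bigr)^{\lfloor k/2\rfloor}$.
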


The following proof is based on \cite[Theorem 4]{SSS}. The only difference is that we extend to almost $k$-wise independence from $k$-wise independence.

\begin{proof}
Assume, without loss of generality, that $k$ is even. Let $\mu = \ex{X}{X}$ and $\mu_i = \ex{X}{X_i}$. We have
$$\ex{X}{(X-\mu)^k} = \ex{X}{\left( \sum_{1 \leq i \leq \ell} X_i - \mu_i \right)^k} = \ex{X}{ \sum_{S \in [\ell]^k} \prod_{i \in S} (X_i - \mu_i) } = \sum_{S \in [\ell]^k} \ex{X}{ \prod_{i \in S} (X_i -\mu_i) }.$$
Note that $\ex{X}{X_i - \mu_i}=0$ for all $i$. By $\delta$-almost $k$-wise independence, each term in the product $\prod_{i \in S} (X_i - \mu_i)$ is almost independent unless it is a repeated term. Thus, unless every term is a repeated term, $\left|\ex{X}{\prod_{i \in S} (X_i - \mu_i)}\right|\leq \delta$. If every term is repeated, then there are at most $k/2$ different terms. This means we need only consider ${\ell \choose k/2} (k/2)^k$ values of $S$.

Also $|X_i - \mu_i| \leq 1$, so $\ex{X}{ \prod_{i \in S} (X_i - \mu_i) } \leq 1$. Thus $$\ex{X}{(X-\mu)^k}  \leq {\ell \choose k/2} (k/2)^k + \ell^k \delta \leq \left( (k/2)^2 \ell \right)^{k/2} + \ell^k \delta.$$
And, by Markov's inequality, $$\pr{X}{|X-\mu| \geq \ell \zeta} = \pr{X}{(X-\mu)^k \geq (\ell \zeta)^k} \leq \frac{\ex{X}{(X-\mu)^k}}{(\ell \zeta)^k} \leq \left( \frac{k^2 }{4 \ell \zeta^2} \right)^{k/2} + \frac{\delta}{\zeta^k},$$ as required.
\end{proof}

\begin{lemma} \label{LemmaLimitedIndependenceSeed}
We can sample a $\delta$-almost $k$-wise independent random variable $T$ over $\{0,1\}^n$ with each bit having expectation $p=2^{-d}$ using $O(kd + \log(\log(n)/\delta))$ random bits.
\end{lemma}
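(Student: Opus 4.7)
The plan is to reduce the construction of a biased almost-$k$-wise-independent distribution to the standard construction of a uniform almost-$k'$-wise-independent distribution, via the natural ``AND gadget'' of grouping bits into blocks.

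First I would invoke the standard result (due to Alon, Goldreich, H\r{a}stad, and Peralta, building on Naor and Naor) that a $\delta'$-almost $k'$-wise independent \emph{uniform} distribution over $\{0,1\}^N$ can be sampled using $O(k' + \log\log N + \log(1/\delta'))$ random bits. Applied with $N = nd$, $k' = kd$, and $\delta' = \delta$, this yields a distribution $Y \in \{0,1\}^{nd}$ using $O(kd + \log\log(nd) + \log(1/\delta)) = O(kd + \log(\log(n)/\delta))$ bits (absorbing the $\log d$ into $kd$).

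Next I would parse $Y$ as $n$ consecutive blocks $Y_1,\ldots,Y_n \in \{0,1\}^d$, and define
\[
T_i \;:=\; \mathbb{1}[Y_i = 0^d].
\]
Under the \emph{truly} uniform distribution, each $T_i$ is Bernoulli with expectation $2^{-d}=p$, and the $T_i$ are independent. To pass from truly uniform to $\delta$-almost $k$-wise independent, I would fix any $k$ coordinates $i_1 < \cdots < i_k$ and note that the $kd$ bits of $Y$ making up $Y_{i_1},\ldots,Y_{i_k}$ are, by hypothesis on $Y$, at statistical distance at most $\delta$ from uniform over $\{0,1\}^{kd}$. Since $(Y_{i_1},\ldots,Y_{i_k}) \mapsto (T_{i_1},\ldots,T_{i_k})$ is a deterministic function and statistical distance does not increase under such a map, the marginal of $(T_{i_1},\ldots,T_{i_k})$ is within $\delta$ of the product of $k$ independent Bernoulli$(p)$ variables. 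This is exactly $\delta$-almost $k$-wise independence with marginal bias $p$.

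There is no real obstacle here beyond citing the right off-the-shelf bound; the only subtle point is ensuring the $\log\log n$ (rather than $\log n$) dependence, which requires the tighter AGHP-style construction rather than the weaker bound obtained by deriving almost $k$-wise independence from a small-bias distribution via the $2^{k/2}$ loss (which would yield $\log n$ instead of $\log\log n$). Everything else — the AND gadget, the bias calculation, and the data-processing bound on statistical distance — is routine.
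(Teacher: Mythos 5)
Your construction is essentially the paper's: sample a $\delta$-almost $kd$-wise independent string of $nd$ bits with the standard $O(kd+\log\log(nd)+\log(1/\delta))$ seed length and take the AND of each block of $d$ consecutive bits. There is, however, one concrete gap relative to the statement being proved: the lemma asserts that each output bit has expectation \emph{exactly} $p=2^{-d}$, whereas in your construction a single block of $d$ bits is only $\delta$-close to uniform, so you only get $\Pr[T(i)=1]=2^{-d}\pm O(\delta)$. The paper closes exactly this gap with one extra step you omit: before taking the block-wise ANDs, it XORs the $nd$-bit string with $n$ copies of a single truly uniform $d$-bit string $Y$ (only $d$ extra seed bits, absorbed into the $O(kd)$ term). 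XORing with an independent string preserves the $\delta$-almost $kd$-wise independence, and it makes every block of $d$ consecutive bits exactly uniform, so each $T(i)$ has expectation exactly $2^{-d}$ as claimed.

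A secondary, more minor point: what you prove is that any $k$ coordinates of $T$ are $\delta$-close in statistical distance to a product of $k$ Bernoulli($p$) variables, whereas Definition \ref{DefinitionLimitedIndependence} measures distance to the product of the coordinates' \emph{own} marginals; with your inexact marginals, translating between the two notions costs roughly a factor $k$ in $\delta$. In fairness, your formulation is precisely what is used downstream (e.g.\ the bound $\Pr[s\subset T]\le p^{|s|}+\delta$ in the proof of Lemma \ref{LemmaKbound}, and the Chernoff-type argument tolerates an $O(n\delta)$ shift in $\mathbb{E}[|T|]$ since $\delta$ is taken polynomially small), so the overall generator analysis would survive with trivial adjustments; but as a proof of the lemma as stated you need either the paper's XOR trick or a restatement allowing expectation $p\pm\delta$.
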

\begin{proof}
The algorithm is as follows.
\begin{itemize}
\item[1.] Sample $X \in \{0,1\}^{nd}$ that is $\delta$-almost $kd$-wise independent.
\item[2.] Sample $Y \in \{0,1\}^{d}$ uniformly at random.
\item[3.] Let $Z=X \oplus (Y,Y,\cdots,Y)$. That is, XOR $X$ with $n$ copies of $Y$.
\item[4.] Let $T(i) = \prod_{(i-1)d < j \leq id} Z(j)$ for all $i$.
\item[5.] Output $T$.
\end{itemize}
Sampling $X$ requires $O(kd +\log(\log(n)/\delta))$ random bits \cite{NaorNa93,AGHP}. Sampling $Y$ requires $d$ random bits. By XORing, $Z$ has the property that it is both $\delta$-almost $kd$-wise independent and every block of $d$ consecutive bits is independent. The latter property ensures that, for all $i$, $$\pr{T}{T(i)=1}=\pr{Z}{Z((i-1)d+1) = Z((i-1)d+2) = \cdots = Z(id) = 1} = 2^{-d}=p.$$ The former property ensures that $T$ is $\delta$-almost $k$-wise independent, as required.
\end{proof}

\section{General Read-Once, Oblivious Branching Programs} \label{AppendixGeneral}

First we prove the main lemma for general branching programs:

\begin{lem:maingeneral}
Let $B$ be a length-$n$, width-$w$, read-once, oblivious branching program. Let $T$ be a random variable over $\{0,1\}^n$ where each bit has expectation $p$ and the bits are $2k$-wise independent. Suppose $p \leq 1/\sqrt{4n}$. Then $$\pr{T}{L_2(B|_T) \leq \sqrt{w} \cdot n^{k/2}} \geq 1 - k \cdot \sqrt{w} \cdot n^3 \cdot 2^{-k}.$$
\end{lem:maingeneral}
\begin{proof}
Fix $i$, $j$, and $k'$ with $1 \leq i \leq j \leq n$ and $k \leq k' < 2k$. Then $$\ex{T}{L_2^{k'}(B_{i \cdots j}|_T)} = \sum_{s \subset [n] : |s|=k'} \pr{T}{s \subset T} \norm{\widehat{B_{i \cdots j}}[s]} \leq \sqrt{wn^{k'}} p^{k'} \leq \sqrt{w} 2^{-k'} \leq \sqrt{w} 2^{-k}.$$
By Markov's inequality and a union bound, $$\pr{T}{\forall 1 \leq i \leq j \leq n ~ \forall k \leq k' < 2k ~~ L_2^{k'}(B_{i \cdots j}|_T) \leq 1/n} \geq 1 - \sqrt{w} \cdot n^4 \cdot 2^{-k}.$$
Lemma \ref{LemmaWellOrder} now implies that $$\pr{T}{\forall 1 \leq i \leq j \leq n ~ \forall k' \geq k ~~ L_2^{k'}(B_{i \cdots j}|_T) \leq 1/n} \geq 1 - \sqrt{w} \cdot n^4 \cdot 2^{-k}.$$
Thus $$\pr{T}{\sum_{k' \geq k} L_2^{k'}(B|_T) \leq 1} \geq 1 - \sqrt{w} \cdot n^4 \cdot 2^{-k}.$$
Note that $$\sum_{1 \leq k' < k} L_2^{k'}(B|_T) \leq \sum_{1 \leq k' < k} \sqrt{w n^{k'}} = \sqrt{wn} \frac{n^{(k-1)/2}-1}{\sqrt{n}-1} \leq \sqrt{w} \cdot n^{k/2}-1.$$ The result follows.
\end{proof}

Now we prove the result for general read-once, oblivious branching programs:

\begin{thm:prggeneral} 
There exists a pseudorandom generator family $G'_{n,w,\varepsilon} : \{0,1\}^{s'_{n,w,\varepsilon}} \to \{0,1\}^n$ with seed length $s'_{n,w,\varepsilon} = O(\sqrt{n} \log^3(n) \log(nw/\varepsilon))$ such that, for any length-$n$, width-$w$, oblivious, read-once branching program $B$ and $\varepsilon > 0$, $$\norm{\ex{U_{s'_{n,w,\varepsilon}}}{B[G'_{n,w,\varepsilon}(U_{s'_{n,w,\varepsilon}})]}-\ex{U}{B[U]}}_2 \leq \varepsilon.$$
Moreover, $G'_{n,w,\varepsilon}$ is computable in space $O(s_{n,w,\varepsilon})$.
\end{thm:prggeneral}

The pseudorandom generator is formally defined as follows.
\begin{quote}
\begin{center}\textbf{Algorithm for $G'_{n, w, \varepsilon} : \{0,1\}^{s'_{n, w, \varepsilon}} \to \{0,1\}^n$.}\end{center}
\begin{itemize}
\item[Parameters:] $n \in \mathbb{N}$, $w \in \mathbb{N}$, $\varepsilon>0$.
\item[Input:] A random seed of length $s'_{n, w, \varepsilon}$.
\item[1.] Compute appropriate values of $p \in [1/4\sqrt{n}, 1/2\sqrt{n}]$, $k \geq \log_2 \left( 2 w n^4 / \varepsilon \right)$, and $\mu = \varepsilon / \sqrt{w n^k}$.\footnote{For the purposes of the analysis we assume that $p$, $k$, and $\mu$ are the same at every level of recursion. So if $G'_{n,w,\varepsilon}$ is being called recursively, use the same values of $p$, $k$, and $\mu$ as at the previous level of recursion.}
\item[2.] If $n \leq 16 \log_2(2/\varepsilon)/p$, output $n$ truly random bits and stop.
\item[3.] Sample $T \in \{0,1\}^n$ where each bit has expectation $p$ and the bits are $2k$-wise independent.
\item[4.] If $|T|<pn/2$, output $0^n$ and stop.
\item[5.] Recursively sample $\tilde{U} \in \{0,1\}^{\lfloor n(1-p/2) \rfloor}$. i.e. $\tilde{U}=G'_{\lfloor n(1-p/2) \rfloor,w,\varepsilon}(U)$. 
\item[6.] Sample $X \in \{0,1\}^n$ from a $\mu$-biased distribution.
\item[7.] Output $\mathrm{Select}(T,X,\tilde{U}) \in \{0,1\}^n$.
\end{itemize}
\end{quote}

\begin{lemma} \label{LemmaPRGfailGeneral}
The probability that $G'_{n, w, \varepsilon}$ fails at step 4 is bounded by $\varepsilon$---that is, $\pr{T}{|T|<pn/2}\leq \varepsilon$. 
\end{lemma}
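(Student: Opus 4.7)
My plan is to prove this as a lower-tail Chernoff bound for $|T|=\sum_{i=1}^{n} T_i$. Under the given distribution, $T$ has mean $pn$ and its coordinates are fully $2k$-wise independent Bernoulli$(p)$ variables, so the $\delta$-term of Lemma~\ref{LemmaChernoff} vanishes. I would use $\ell=n$ and $\zeta=p/2$ so that the event $\{|T|<pn/2\}$ is contained in the symmetric deviation event $\{\bigl||T|-pn\bigr|\ge n\zeta\}$, and apply Lemma~\ref{LemmaChernoff} with an even moment order $k'\le 2k$ to obtain
$$\pr{T}{|T|<pn/2} \;\le\; \left(\frac{k'^{2}}{np^{2}}\right)^{k'/2}.$$

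For the moment order I would pick $k'=2\lceil\log_2(2/\varepsilon)\rceil$; this is at most $2k$ because $k\ge\log_2(2wn^{4}/\varepsilon)\ge\log_2(2/\varepsilon)$, and it makes the exponent on the right-hand side equal to about $\log_2(2/\varepsilon)$. The remaining task is then purely numerical: show that the invariant left by step~2 (we only reach step~4 when $n>16\log_2(2/\varepsilon)/p$), together with the range $p\in[1/(4\sqrt{n}),1/(2\sqrt{n})]$ fixed at the top of the recursion, forces $k'^{2}/(np^{2})\le 1/4$. Once that is in hand, the tail bound becomes $(1/4)^{\log_2(2/\varepsilon)}=(\varepsilon/2)^{2}\le\varepsilon$, exactly as in the corresponding step of the proof of Lemma~\ref{LemmaPRGfail}.

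The main obstacle is precisely this last arithmetic check. Because $p=\Theta(1/\sqrt{n})$ at the top of the recursion, the quantity $np^{2}$ is only of constant order, whereas the step~2 threshold provides a lower bound on $np$ rather than on $np^{2}$. If the generic estimate of Lemma~\ref{LemmaChernoff} turns out too loose in this regime, my backup plan is to replace it with a Bernoulli-specific central-moment bound that exploits $\mathbb{E}[(T_i-p)^{2a}]=O(p)$ for $a\ge 1$, rather than the generic $O(1)$ bound used in Appendix~\ref{AppendixLimitedIndependence}. This sharper moment estimate improves the tail to roughly $(Ck'/np)^{k'/2}$, and the hypothesis $np\ge 16\log_2(2/\varepsilon)$ from step~2 then comfortably delivers $\pr{T}{|T|<pn/2}\le\varepsilon$, completing the proof.
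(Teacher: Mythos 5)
Your Plan A is a dead end, and for exactly the reason you flag as the ``main obstacle.'' Since $p\le 1/(2\sqrt{n_0})$ is fixed at the top of the recursion and the current length satisfies $n\le n_0$, one always has $np^2\le 1/4$, so with $\zeta=p/2$ the quantity $k'^2/(4n\zeta^2)=k'^2/(np^2)\ge 4k'^2>1$ and the bound from Lemma \ref{LemmaChernoff} is vacuous for every admissible moment order $k'$; step 2 only lower-bounds $np$, never $np^2$, so no arithmetic with the step-2 invariant can force $k'^2/(np^2)\le 1/4$. This is precisely why the paper does not reuse Lemma \ref{LemmaChernoff} here but instead introduces Lemma \ref{LemmaChernoffGeneral} (\cite[Theorem 4 III]{SSS}), explicitly remarking that it ``gives a better dependence on $p$.''

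Your Plan B is, in substance, the paper's actual proof, so the lemma is established only once you promote the fallback to the main argument. The paper applies Lemma \ref{LemmaChernoffGeneral} with $\sigma^2=np\ge\mathrm{Var}[|T|]=np(1-p)$, deviation $\alpha=pn/2$, and even moment order $2k'$ with $k'=\lceil\log_2(1/\varepsilon)\rceil\le k$, obtaining $\pr{T}{|T|<pn/2}\le\bigl(2k'np/(e^{2/3}(pn/2)^2)\bigr)^{k'}=\bigl(8k'/(e^{2/3}np)\bigr)^{k'}$, and then uses the step-2 guarantee $np>16\log_2(2/\varepsilon)\ge 16k'$ to make the ratio at most $1/(2e^{2/3})<1/2$, hence the tail at most $2^{-k'}\le\varepsilon$. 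Your observation that $\mathbb{E}[(T_i-p)^{2a}]\le p$ is exactly the reason such a variance-sensitive bound holds, and your note that the $\delta$-term disappears is correct since $T$ is exactly $2k$-wise independent in $G'$. The only remaining work is to state and justify (or cite) that moment bound rather than Lemma \ref{LemmaChernoff}; the minor difference between your claimed exponent $k'/2$ and the paper's $k'$ is immaterial, as either suffices.
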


We need the following Chernoff bound for limited independence, which gives a better dependence on $p$ than Lemma \ref{LemmaChernoff}.

\begin{lemma}[{\cite[Theorem 4 III]{SSS}}] \label{LemmaChernoffGeneral}
Let $X_1, X_2, \cdots, X_n$ be $k$-wise independent random variables on $[0,1]$. Let $X=\sum_i X_i$. Let $\mu=\ex{X}{X}$ and $\sigma^2 \geq \text{Var}[X]$. If $k \geq 2$ is even, then $$\pr{X}{|X-\mu| \geq \alpha} \leq \left(\frac{k \cdot \max \{k, \sigma^2\}}{e^{2/3}\alpha^2}\right)^{k/2}.$$
\end{lemma}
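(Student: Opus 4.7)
The plan is to bound $\Pr[|X - \mu| \geq \alpha]$ by applying Markov's inequality to the $k$-th central moment: since $k$ is even, $(X - \mu)^k \geq 0$, so
\[
\Pr[|X - \mu| \geq \alpha] \;=\; \Pr[(X - \mu)^k \geq \alpha^k] \;\leq\; \frac{\Exp[(X-\mu)^k]}{\alpha^k}.
\]
So the whole task reduces to proving $\Exp[(X-\mu)^k] \leq (k \cdot \max\{k,\sigma^2\} / e^{2/3})^{k/2}$, and everything else is combinatorics and calculus.

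For the moment expansion, write $Y_i = X_i - \mu_i$ so that each $Y_i \in [-1,1]$, $\Exp[Y_i] = 0$, and $\Exp[Y_i^2] = \sigma_i^2$ with $\sum_i \sigma_i^2 = \mathrm{Var}[X] \leq \sigma^2$ (pairwise independence suffices for this equality). Expand
\[
\Exp[(X-\mu)^k] \;=\; \sum_{(i_1,\ldots,i_k) \in [n]^k} \Exp\!\left[\prod_{j=1}^k Y_{i_j}\right],
\]
and group tuples by the set partition $\pi$ of $[k]$ they induce (with $j \sim j'$ iff $i_j = i_{j'}$). Since $\pi$ has at most $k$ blocks, $k$-wise independence lets each term factor over the blocks of $\pi$. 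Any partition with a singleton block then contributes $0$, because the singleton factor is $\Exp[Y_i] = 0$. Thus only partitions whose blocks all have size $\geq 2$ survive, and in particular the number of blocks $r$ satisfies $r \leq k/2$.

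For such a partition with block sizes $m_1, \ldots, m_r \geq 2$, the factorization is a product $\prod_\ell \Exp[Y_{j_\ell}^{m_\ell}]$ over distinct indices $j_1,\ldots,j_r$. Since $|Y_i| \leq 1$ and $m_\ell \geq 2$, we have $|Y_i^{m_\ell}| \leq Y_i^2$, so $|\Exp[Y_{j_\ell}^{m_\ell}]| \leq \sigma_{j_\ell}^2$. Summing over the choice of distinct indices yields
\[
\sum_{j_1,\ldots,j_r \text{ distinct}} \prod_\ell \sigma_{j_\ell}^2 \;\leq\; \Bigl(\sum_j \sigma_j^2\Bigr)^{\!r} \;\leq\; \sigma^{2r}.
\]
Letting $N_{k,r}$ denote the number of set partitions of $[k]$ into exactly $r$ blocks each of size $\geq 2$, this gives
\[
\Exp[(X-\mu)^k] \;\leq\; \sum_{r=1}^{\lfloor k/2 \rfloor} N_{k,r}\,\sigma^{2r}.
\]

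The main obstacle is bounding this polynomial in $\sigma^2$ by $(k \max\{k,\sigma^2\}/e^{2/3})^{k/2}$. The key combinatorial bound is $N_{k,r} \leq \binom{k}{2r}(2r-1)!! \cdot r^{k-2r}$, obtained by first choosing which $2r$ of the $k$ positions form the ``cores'' of the blocks (two per block), pairing them up into $r$ blocks in $(2r-1)!!$ ways, and then distributing the remaining $k-2r$ positions among the $r$ blocks. Using $(2r-1)!! \leq (2r)^r / (2^r r!) \cdot r! = r^r$ and $\binom{k}{2r} \leq k^{2r}/(2r)!$ together with Stirling, each term is bounded by a clean expression in $k$ and $r$. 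Substituting and splitting into the two cases $\sigma^2 \leq k$ (where the $r = k/2$ term with $\sigma^{k}$ is not necessarily dominant and the geometric-type sum is controlled by $k^k$) and $\sigma^2 \geq k$ (where the $r = k/2$ term dominates, contributing $(k\sigma^2)^{k/2}$ up to constants), one extracts the constant $e^{2/3}$ by optimizing over $r$ using $\binom{k}{2r}(2r-1)!! / \alpha^k$ type expressions; the factor $e^{2/3}$ arises precisely from the Stirling estimate $(k/e)^k \approx k!$ applied to the extremal partition count. This final optimization is the most delicate piece and is where the specific constant $e^{2/3}$ comes from.
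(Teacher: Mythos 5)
First, note that the paper does not prove this lemma at all: it is quoted directly from Schmidt--Siegel--Srinivasan (cited as [SSS, Theorem 4 III]), so your proposal is being measured against the statement itself rather than an in-paper argument. Your skeleton is the standard (and essentially the SSS) route, and the parts you carry out are correct: Markov's inequality applied to the even central moment, the expansion of $\Exp[(X-\mu)^k]$ over tuples grouped by the induced partition, factorization over blocks (valid under $k$-wise independence since at most $k$ distinct indices occur), the vanishing of any partition with a singleton block, the bound $|\Exp[Y_i^{m}]|\le \sigma_i^2$ for $m\ge 2$, and the estimate $\sum_{\text{distinct}}\prod_\ell \sigma_{j_\ell}^2 \le \sigma^{2r}$, yielding $\Exp[(X-\mu)^k]\le \sum_{r=1}^{k/2} N_{k,r}\,\sigma^{2r}$; the combinatorial bound $N_{k,r}\le \binom{k}{2r}(2r-1)!!\,r^{k-2r}$ is also fine (take the two smallest elements of each block as its ``core'').

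The genuine gap is at exactly the point where the lemma's quantitative content lives. The statement carries a specific constant $e^{2/3}$ and no leading multiplicative constant, so a proof must actually establish $\sum_{r=1}^{k/2} N_{k,r}\,\sigma^{2r}\le \bigl(k\max\{k,\sigma^2\}/e^{2/3}\bigr)^{k/2}$ for every even $k\ge 2$. Your write-up replaces this with ``substituting and splitting into two cases \dots one extracts the constant $e^{2/3}$ by optimizing over $r$ \dots this final optimization is the most delicate piece'' --- i.e., the delicate piece is named but not performed, and nothing you have written verifies that your intermediate bounds (e.g., $(2r-1)!!\le r^r$, $\binom{k}{2r}\le k^{2r}/(2r)!$, Stirling) actually deliver $e^{2/3}$ rather than some weaker constant. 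This is not a routine omission: all polynomial-in-$k$ prefactors (Stirling's $\sqrt{2\pi k}$, the factor of $k/2$ from the number of values of $r$, the overcounting in the bound on $N_{k,r}$) must be absorbed into exponential slack that is genuinely thin --- at $k=2$ the claim reduces to $\sigma^2\le 2\max\{2,\sigma^2\}/e^{2/3}$, which holds with only a few percent of room, so small even $k$ and the case split $\sigma^2\le k$ versus $\sigma^2\ge k$ require explicit computation rather than an asymptotic gesture. Until that summation/optimization is carried out (or the lemma is simply invoked from [SSS], as the paper does), the stated bound with the constant $e^{2/3}$ has not been proved.
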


\begin{proof}[Proof of Lemma \ref{LemmaPRGfailGeneral}]
We have $\text{Var}[|T|] = n \text{Var}[T(i)] = n p (1-p) \leq np$. By Lemma \ref{LemmaChernoffGeneral}, $$\pr{T}{|T| < pn/2} \leq \left( \frac{2 k' np}{e^{2/3} (pn/2)^2} \right)^{k'},$$ where $k' \leq k$ is arbitrary. Set $k'= \lceil \log_2(1/\varepsilon) \rceil$. Step 2 ensures that $n > 16 \log_2(2/\varepsilon)/p > 16k'/p$. Thus we have $$\pr{T}{|T| < pn/2} \leq \left( \frac{8 k'}{e^{2/3} np} \right)^{\log_2(1/\varepsilon)} \leq \left( \frac{8 k'}{e^{2/3} (16k') } \right)^{\log_2(1/\varepsilon)} \leq\varepsilon$$.
\end{proof}

\begin{lemma} \label{LemmaOneStepGeneral}
Let $B$ be a length-$n$, width-$w$, read-once, oblivious branching program. Let $\varepsilon \in (0,1)$. Let $T$ be a random variable over $\{0,1\}^n$ that is $2k$-wise independent and each bit has expectation $p$, where we require $$p \leq 1/\sqrt{4n} ,~~~~ k \geq \log_2 \left( 2 w n^4 / \varepsilon \right).$$ Let $U$ be uniform over $\{0,1\}^n$. Let $X$ be a $\mu$-biased random variable over $\{0,1\}^n$ with $\mu \leq \varepsilon / \sqrt{w n^k}$. Then $$\norm{\ex{T,X,U}{B[\mathrm{Select}(T,X,U)]} - \ex{U}{B[U]}}_2 \leq 2\varepsilon.$$
\end{lemma}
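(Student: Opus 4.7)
The plan is to mimic the proof of Lemma \ref{LemmaOneStep} almost verbatim, substituting Lemma \ref{LemmaMainGeneral} for Theorem \ref{TheoremMainLemma} and using the Fourier-mass bound $\sqrt{w n^k}$ in place of $(2w^2)^k$. For any fixed $t \in \{0,1\}^n$, Lemma \ref{LemmaFourier} (Expectation) applied to $B|_t$ gives
$$\ex{X}{B|_t[X]} - \ex{U}{B|_t[U]} \;=\; \sum_{s \neq 0} \widehat{B|_t}[s]\,\widehat{X}(s),$$
and combining with the $\mu$-bias of $X$ and the triangle inequality yields
$$\norm{\ex{X,U}{B[\mathrm{Select}(t,X,U)]} - \ex{U}{B[U]}}_2 \;\leq\; L_2(B|_t)\cdot\mu.$$

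Next, by Lemma \ref{LemmaMainGeneral} the event $E := \{L_2(B|_T) \leq \sqrt{w}\,n^{k/2}\}$ holds with probability at least $1 - k\sqrt{w}\,n^3\cdot 2^{-k}$, and on $E$ the above estimate together with the hypothesis $\mu \leq \varepsilon/\sqrt{w n^k}$ forces the error to be at most $\varepsilon$. On the complement of $E$, I will use the crude uniform bound
$$\norm{\ex{X,U}{B[\mathrm{Select}(t,X,U)]} - \ex{U}{B[U]}}_2 \;\leq\; 2\sqrt{w},$$
which follows because both terms are averages of stochastic matrices and any $w\times w$ stochastic matrix $M$ satisfies $\norm{M}_2 \leq \frob{M} \leq \sqrt{\sum_{i,j} M_{ij}} = \sqrt{w}$ (using $M_{ij}\in[0,1]$ and row sums equal to $1$).

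Putting the two cases together, the total error is at most
$$\varepsilon \;+\; 2\sqrt{w}\cdot k\sqrt{w}\,n^3 \cdot 2^{-k} \;=\; \varepsilon \;+\; 2kwn^3\cdot 2^{-k}.$$
The hypothesis $k \geq \log_2(2wn^4/\varepsilon)$ gives $2^{-k} \leq \varepsilon/(2wn^4)$, so the second term is bounded by $k\varepsilon/n$; since in the intended regime $k = O(\log(wn/\varepsilon)) \leq n$, this is at most $\varepsilon$, and the overall bound is $2\varepsilon$, as claimed. No step looks hard: the only mild subtlety is checking that $k\leq n$ in the intended parameter regime (which is why step 2 of $G'_{n,w,\varepsilon}$ cuts off small $n$), and verifying the $\sqrt{w}$ spectral-norm bound on stochastic matrices; both are routine.
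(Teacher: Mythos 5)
Your proof is correct and follows essentially the same route as the paper's: bound the conditional error by $L_2(B|_t)\mu$ via the Fourier expansion and the small bias of $X$, invoke Lemma \ref{LemmaMainGeneral} for the good event, and use the crude $2\sqrt{w}$ spectral bound on the bad event. The only wrinkle you flag, needing $k \leq n$ (so that $2kwn^3 2^{-k} \leq \varepsilon$), is also implicit in the paper, which simply absorbs the factor $k$ into an extra factor of $n$ when applying the main lemma, so there is no substantive difference.
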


\begin{proof}
For $t \in \{0,1\}^n$, we have
$$\norm{\ex{X,U}{B[\text{Select}(t,X,U)]} - \ex{U}{B[U]}}_2 = {\norm{\ex{X}{B|_t[X]} - \ex{U}{B[U]}}_2} \leq L_2(B|_t) \mu.$$
We apply Lemma \ref{LemmaMainGeneral} and with probability at least $1- \sqrt{w} n^4/2^{k}$ over $T$, we have $L_2(B|_T) \mu \leq \varepsilon$. Thus
\begin{align*}
\norm{\ex{T,X,U}{B[\text{Select}(T,X,U)]} - \ex{U}{B[U]}}_2 \leq& \pr{T}{L_2(B|_T) \mu > \varepsilon} 2 \sqrt{w}+ \pr{T}{L_2(B|_T) \mu \leq \varepsilon} \varepsilon\\
\leq& \sqrt{w} n^4 / 2^{k} \cdot 2\sqrt{w} + \varepsilon\\
\leq& 2 \varepsilon.\\
\end{align*}
\end{proof}

\begin{lemma} \label{LemmaPRGerrorGeneral}
Let $B$ be a length-$n$, width-$w$, read-once, oblivious branching program. Then $$\norm{\ex{U_{s'_{n,w,\varepsilon}}}{B[G'_{n,w,\varepsilon}(U_{s'_{n,w,\varepsilon}})]}-\ex{U}{B[U]}}_2 \leq 4\sqrt{w} r \varepsilon,$$ where $r=O(\log(n)/p)$ is the recursion depth of $G'_{n,w,\varepsilon}$.
\end{lemma}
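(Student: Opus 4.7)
The plan is to follow the recursive error analysis of Lemma \ref{LemmaPRGerror} almost verbatim, using the simplifying observation that the class of read-once, oblivious branching programs is trivially closed under restriction, so the permutation hypothesis is no longer needed for the recursion to close. I would define $\Delta_i$ to be the supremum, over all length-$n_i$, width-$w$, read-once, oblivious branching programs $B'$, of $\norm{\ex{T_i, X_i, \tilde{U}_i}{B'[\mathrm{Select}(T_i, X_i, \tilde{U}_i)]} - \ex{U}{B'[U]}}_2$, where $n_i$, $T_i$, $X_i$, $\tilde{U}_i$ are the values of $n$, $T$, $X$, $\tilde{U}$ at recursion level $i$. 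The base case $\Delta_r = 0$ follows from step 2 (the leaf outputs $n_r$ truly uniform bits). The goal is the telescoping recurrence $\Delta_i \leq \Delta_{i+1} + 4\sqrt{w}\,\varepsilon$, which iterates to $\Delta_0 \leq 4\sqrt{w}\,r\,\varepsilon$.

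For the inductive step, I fix a length-$n_i$ program $B'$ and insert the hybrid $\ex{T_i, X_i, U}{B'[\mathrm{Select}(T_i, X_i, U)]}$ between the pseudorandom output and $\ex{U}{B'[U]}$, then bound each piece via the triangle inequality. The first discrepancy, comparing pseudorandom $\tilde{U}_i$ with uniform for each fixed $(t, x) = (T_i, X_i)$, is handled by the induction hypothesis applied to $\overline{B}_{x,t}[y] := B'[\mathrm{Select}(t, x, y)]$. The key (and only material) difference from Lemma \ref{LemmaPRGerror} is that $\overline{B}_{x,t}$ is \emph{automatically} a length-$(n_i - |t|)$, width-$w$, read-once, oblivious branching program, obtained by collapsing the $|t|$ fixed layers; no regularity or permutation assumption is needed. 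On the good event $\{|T_i| \geq p n_i/2\}$ we have $n_i - |t| \leq n_i(1-p/2) \leq n_{i+1}$ (using that $n_i-|t|$ is an integer), so the induction bounds this contribution pointwise by $\Delta_{i+1}$. On the complementary event, which has probability at most $\varepsilon$ by Lemma \ref{LemmaPRGfailGeneral}, the generator outputs $0^n$, and I bound the resulting deterministic error by $\norm{B'[0^{n_i}]}_2 + \norm{\ex{U}{B'[U]}}_2 \leq 2\sqrt{w}$, since each of these matrices is entrywise in $[0,1]$ with row sums $1$, hence has Frobenius norm at most $\sqrt{w}$.

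The second discrepancy, comparing the $\mu$-biased $X_i$ with uniform on the $T_i$-coordinates, is bounded directly by Lemma \ref{LemmaOneStepGeneral} by $2\varepsilon$, since the parameters $p$, $k$, and $\mu$ chosen in step 1 of $G'_{n, w, \varepsilon}$ satisfy that lemma's hypotheses. Combining the two pieces yields $\Delta_i \leq \Delta_{i+1} + 2\varepsilon + \varepsilon \cdot 2\sqrt{w} \leq \Delta_{i+1} + 4\sqrt{w}\,\varepsilon$, completing the induction. There is no serious obstacle; the only slightly delicate points are carrying out the hybrid argument pointwise in $(t,x)$ (so the inductive bound $\Delta_{i+1}$ applies to $\overline{B}_{x,t}$ program-by-program, not merely in expectation) and verifying on the good event that $n_i-|t| \leq n_{i+1}$ so that the recursive call produces enough bits for $\mathrm{Select}$ to use.
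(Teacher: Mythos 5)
Your proposal is correct and follows essentially the same route as the paper, which simply carries over the hybrid/induction argument of Lemma \ref{LemmaPRGerror} verbatim, noting (as you do) that the restricted program $\overline{B}_{x,t}$ is automatically a read-once, oblivious branching program so the permutation hypothesis is not needed, and using Lemma \ref{LemmaOneStepGeneral} and Lemma \ref{LemmaPRGfailGeneral} to get $\Delta_i \leq \Delta_{i+1} + 2\varepsilon + 2\sqrt{w}\,\varepsilon \leq \Delta_{i+1} + 4\sqrt{w}\,\varepsilon$. Your extra remarks (the $2\sqrt{w}$ bound on the failure event and the check that $n_i - |t| \leq n_{i+1}$ on the good event) are fine and only make explicit details the paper leaves implicit.
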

The proof of this lemma is exactly as before (Lemma \ref{LemmaPRGerror}), except we no longer need the assumption that $B$ is a permutation branching program.

Setting $\varepsilon'=O(\varepsilon/\sqrt{nw}\log(n))$ we can ensure that that $G'_{n,w, \varepsilon'}$ has error at most $\varepsilon$. The overall seed length is $O(\sqrt{n} \log^3(n) \log(nw/\varepsilon))$: Each of the $r=O(\sqrt{n}\log(n))$ levels of recursion requires $O(\log(w/\varepsilon)+k\log(n))$ random bits to sample $X$ and $O(k \log^2 n)$ bits to sample $T$. Finally we need $O(\log(1/\varepsilon)/p)$ random bits at the last level. Since $k=O(\log(nw/\varepsilon))$, this gives the required seed length.

\end{document}